\newtheorem{assumption}{Assumption}
\newtheorem{theorem}{Theorem}
\newtheorem{lemma}{Lemma}
 \newtheorem{corollary}{Corollary}
 \newtheorem{Definition}{Definition}
 \newtheorem{proposition}{Proposition}
\def\BibTeX{{\rm B\kern-.05em{\sc i\kern-.025em b}\kern-.08em
    T\kern-.1667em\lower.7ex\hbox{E}\kern-.125emX}}
\begin{document}
\title{Learn and Control while Switching: Guaranteed Stability and Sublinear Regret}
\author{Jafar Abbaszadeh Chekan and C\'edric Langbort
\thanks{This paragraph of the first footnote will contain the date on 
which you submitted your paper for review. It will also contain support 
information, including sponsor and financial support acknowledgment. For 
example, ``This work was supported in part by the U.S. Department of 
Commerce under Grant BS123456.'' }
\thanks{J.~A.~Chekan and C.~Langbort (emails: jafar2 \& langbort@illinois.edu) are with the Coordinated
  Science Laboratory and the Department of Aerospace
  Engineering at the University of Illinois at Urbana-Champaign (UIUC).}
}

\maketitle

\begin{abstract}
Over-actuated systems often make it possible to achieve specific performances by switching between different subsets of actuators. However, when the system parameters are unknown, transferring authority to different subsets of actuators is challenging due to stability and performance efficiency concerns. This paper presents an efficient algorithm to tackle the so-called ``learn and control while switching between different actuating modes" problem in the Linear Quadratic (LQ) setting. Our proposed strategy is constructed upon Optimism in the Face of Uncertainty (OFU) based algorithm equipped with a projection toolbox to keep the algorithm efficient, regret-wise. Along the way, we derive an optimum duration for the warm-up phase,  thanks to the existence of a stabilizing neighborhood. The stability of the switched system is also guaranteed by designing a minimum average dwell time. The proposed strategy is proved to have a regret bound of $\mathcal{\bar{O}}\big(\sqrt{T}\big)+\mathcal{O}\big(ns\sqrt{T}\big)$ in horizon $T$ with $(ns)$ number of switches, provably outperforming naively applying the basic OFU algorithm.
\end{abstract}

\begin{IEEEkeywords}
Over-actuated System, Reinforcement Learning, Regret, Switched System
\end{IEEEkeywords}

\section{Introduction}
\label{sec:introduction}
\IEEEPARstart{C}{ontrol} allocation is a vast and richly studied field that addresses the problem of distributing control efforts over redundant actuators to achieve a specific performance. Along this line, supervisory switching control that enables operation with different subsets of actuators is a practical approach for control allocation \cite{ishii2001stabilizing, zhang2020supervisory,staroswiecki2010general, kanellopoulos2019moving}. {In this class of problems, the switching time, the switch mode (which represents a subset of actuators), or both of these parameters can be either specified as part of the general decision-making process or determined by an external unit. Three selected problem classes that illustrate the significance of switching in over-actuated systems are Fault Tolerant Control (FTC), Time-Triggered Protocol (TTP), and moving target defense (MTD). These examples differ in their approach to defining and designing these crucial variables.} {Designing an algorithm to efficiently switch between different actuating modes when system parameters are unknown presents a significant challenge in terms of performance. We refer to this problem as "learn and control while switching." In this context, it is inefficient to disregard previously acquired information about these parameters and start the learning and control process from scratch each time a mode switch occurs. Our goal is to create an algorithm that capitalizes on previously acquired knowledge when commencing a new actuation mode. More specifically, our focus lies on an online Optimism in the Face of Uncertainty (OFU) based algorithm within the framework of Linear Quadratic (LQ) settings. This class of algorithms builds a confidence ellipsoid containing the unknown system parameters, allowing us to design a feedback gain by adopting an optimistic perspective toward this defined set. A naive approach to address our proposed problem would involve the repetitive application of the OFU-based algorithm whenever a mode switch takes place. However, this method is likely to be inefficient, given that it disregards previously acquired knowledge. In this research endeavor, we introduce a novel strategy that leverages learned information. This strategy involves the projection of recently constructed ellipsoids into the mode space that the system has transitioned into. Following this projection, we establish a confidence ellipsoid for control design, incorporating both the knowledge stored within the projected ellipsoid and the online measurements gathered after initiating the actuation. The core technical challenge in constructing this ellipsoid involves confidence set construction while incorporating the already learned information stored within the projected confidence ellipsoid. In our setting, unlike \cite{cohen2019learning}, who use the trace norm of the system parameters for regularization in least square estimation and confidence set construction, we utilize a weighted Frobenius norm. This approach provides us with an opportunity to incorporate the projected ellipsoid into confidence ellipsoid construction. To fully incorporate this idea, we also forgo the confidence ellipsoid normalization applied in \cite{cohen2019learning}. As a result, our confidence ellipsoid representation differs slightly from that of \cite{cohen2019learning}. Regarding the stability guarantee in the switching setting, we need to determine a minimum average dwell time that ensures the system slows down the switching process sufficiently to guarantee boundedness of the state norm during the switching operation.}

Actuator faults can cause a decay in performance (e.g., impact on energy consumption of actuator) or even more devastating incidents \cite{tohidi2016fault}. This kind of safety concerns is often considered within the literature on so-called Fault Tolerant Control (FTC) \cite{blanke2006diagnosis, patton1997fault}, where the goal of switching is to bypass a failing actuator \cite{takahashi2012adaptive}, \cite{yang2015adaptive}. \cite{ouyang2017adaptive}. 

In TTP systems \cite{shaheen2007gateway, blind2013time}, where control must occur through communication channels whose availability varies according to a predetermined/ known a priori schedule, the goal of switching is to conform to the specifications of that protocol. 

As another specific application of switching systems, we can refer to the MTD strategy in cyber-physical security of control systems. MTD, which first appeared in the computer science literature \cite{cho2020toward} is a proactive strategy using which a system can hide its dynamics from a potential attacker by randomly switching between a different subset of actuators. This strategy seems practical considering that adversarial agents have limited resources and cannot compromise all the actuators simultaneously. 
In  \cite{weerakkody2015detecting, weerakkody2016moving}, the authors apply MTD for identifying malicious attacks on sensors and actuators. Furthermore, in \cite{kanellopoulos2019moving} and \cite{griffioen2020moving} the authors show an effort to address this challenge for a system with known dynamics.

One significant difference between these applications is that some are instances of "direct switching" whereas others belong to the "indirect switching" category \cite{narendra1997adaptive}, which impacts the determination of "when and what mode to switch to". For example, in TTP, both the switch time and actuation mode are pre-specified and given. On the other hand, for the FTC, the switch time is simply when an anomaly is detected. For this class of problems, we propose a mechanism that picks the best actuating mode to switch to by examining the richness of learned information up to that point. As for the MTD application, given that high-frequency switches are desirable, we let the system switch as often as desired unless stability is violated. For this aim, we constrain the system to stay in a mode for some minimum average dwell time (MADT). The next actuating method is specified with a similar strategy as in the FTC case. In all cases, the algorithm guarantees closed-loop stability and that parameters are learned well-enough to ensure low regret (defined as the average difference between the closed-loop quadratic cost and the best achievable one, with the benefit of knowing the plant's parameters). The former is addressed by designing a MADT while latter requires detailed insight into available RL algorithms in LQ control literature.

Several studies have recently attempted to address switched systems control under different assumptions when the system model is unknown. Authors in \cite{rotulo2021online} designed data-driven control laws for switched systems with discrete-time linear systems and unknown switch signals. The stability of an unknown switched linear system is investigated probabilistically by observing only finite trajectories in \cite{kenanian2019data}. Furthermore, in \cite{dai2018moments,dai2022convex} design of stabilizing controllers for a switched system is addressed by solely relying on system structure and experimental data and without an explicit plant identification.

Learning-based methods in LQ framework are divided into model-free (\cite{bradtke1994adaptive} \cite {tu2017least, fazel2018global, abbasi2018model, arora2018towards}) and model-based RL approaches. The former ignores parameter estimates and outputs sub-optimal policy by solely relying on the history of data and directly minimizing the cost function  \cite{abbasi2018model}. The latter category usually comes with guaranteed regret, thanks to system identification as a core step in control design. We use similar properties to obtain a guaranteed regret.

A naive approach to model-based adaptive control is a simple philosophy that estimates a system's unknown parameters and treats them as actual parameters in optimal control design. This algorithm is known as certainty equivalence in the literature, and since it decouples the estimation problem from the control one, it may lead to strictly sub-optimal performance \cite{campi1998adaptive}. Model-Based algorithms, in general, can be divided into three main categories, namely "Offline estimation and control synthesis," "Thompson Sampling-based," and "Optimism in the Face of Uncertainty (OFU)" algorithms. OFU that couples estimation and control design procedures have shown efficiency in outperforming the naive certainty equivalence algorithm. Campi and Kumar in \cite{campi1998adaptive} proposed a cost-based parameter estimator, which is an OFU based approach to address the optimal control problem for linear quadratic Gaussian systems with guaranteed asymptotic optimality. However, this algorithm only guarantees the convergence of average cost to that of the optimal control in limit and does not provide any bound on the measure of performance loss for the finite horizon. Abbasi-Yadkori and Szepesvari \cite{abbasi2011online} proposed a learning-based algorithm to address
the adaptive control design of the LQ control system in the finite horizon, with worst case regret bound of $O(\sqrt{T})$. Using $l_2$-regularized least square estimation, they construct high-probability confidence set around unknown parameters of the system and design an algorithm that optimistically plays concerning this set \cite{abbasi2011regret}. Along this line, Ibrahimi and et. al. \cite{ibrahimi2012efficient} later on proposed an algorithm that achieves $O(p\sqrt{T})$ regret bound with state-space dimension of $p$. Authors in \cite{faradonbeh2017optimism} proposed an OFU-based learning algorithm with mild assumptions and $O(\sqrt{T})$ regret. Furthermore, in \cite{chekan2021joint} propose an OFU-based algorithm which for joint stabilization and regret minimization that leverages actuator redundancy to alleviate state explosion during initial time steps when there is low number of data.

The objective function to be minimized in the algorithm \cite{abbasi2011regret} is non-convex which brings about a computational hurdle. However, recently Cohen et al., in \cite{cohen2019learning},  through formulating the LQ control problem in a relaxed semi-definite program (SDP) that accounts for the uncertainty in estimates, proposed a computationally efficient OFU-based algorithm. Moreover, unlike the state-norm bound given by \cite{abbasi2011regret} which is loose, \cite{cohen2019learning} guarantees strongly sequentially stabilizing policies enabling us to come ups with tight upper-bound, which ensures small MADT (appropriate for MTD application). These advantages motivate us to use the relaxed SDP framework in our analysis rather than \cite{abbasi2011regret}.

The remainder of the paper is organized as follows: Section \ref{sec:ProbFor} provides the problem statement and then is followed by Section \ref{sec:prelim} which gives a brief background review, overview of projection technique, and augmentation technique. Moving on to Section \ref{sec:perf}, we first provide an overview of the projection-based strategy. Then, we detail the main algorithmic steps and summarize the stability analysis, which includes MADT design and algorithm performance, particularly in terms of the regret bound. Section \ref{Sec:Extension} briefly discusses the extension of the proposed algorithm to the FTC and MTD type of applications. Afterwards, numerical experiment is given in Section \ref{sec:numEx}. Finally, we conclude the key achievements in Section \ref{sec:Conclusion}. The most rigorous analysis of the algorithms (e.g., stability and regret bounds) and technical proofs are provided in Appendix \ref{partD} while leaving complimentary proofs to Appendix \ref{partF}.
. 
\section{Problem statement}
\label{sec:ProbFor}
Throughout this paper, we use the following notations: $\|A\|$ denotes the operator norm i.e., $\|A\| = \max_{x, \|x\|\leq 1} \|Ax\|$. We denote trace norm  by {$\|\bullet\|_*(=\sqrt{\operatorname{Tr}(\bullet^\top \bullet)})$}. The entry-wise dot product between matrices is denoted by $A \bullet B=\operatorname{Tr}(A^\top B)$. The weighted norm of a matrix $A$ with respect to $V$ is interchangeably denoted by $\operatorname{Tr}(A^\top V A)$ or $\|A\|_{V}^2$. We denote a $m$ dimensional identity matrix by $I_m$. {We denote the ceiling function by $\lceil x \rceil $ that maps $x$ to the least integer greater than or equal to $x$.}

Consider the following time-invariant discrete time LQR system 
\begin{align}
x_{t+1} &=A_{*}x_{t} + B_{*}u_{t}+\omega_{t+1}\label{eq:dyn_atttt} \\
c_{t} &=x_{t}^\top Qx_{t} + u_{t}^\top Ru_{t}\label{eq:obs}
\end{align}
where $A_*\in\mathbb{R}^{n\times n}$, $B_*\in\mathbb{R}^{n\times m}$ are unknown and $Q\in  \mathbb{R}^{n \times n} $ and $R\in  \mathbb{R}^{m \times m} $ are known and respectively positive definite matrices. 

Let $\mathbb{B}$ be the set of all columns, $b^i_*$ ($i\in \{1,...,m\}$) of $B_*$. An element of its power set $2^{\mathbb{B}}$ is a subset $\mathcal{B}_{j}$, $j\in \{1,..., 2^m\}$ of columns corresponding to a submatrix $B_*^j$ of $B_*$ and mode $j$. For simplicity, we assume that $B^1_*=B_*$ i.e., that the first mode contains all actuators. {With this definition in place we can express the system dynamics as follows:}

\begin{align}
	x _{t+1} =A_*x_t+B_*^{\sigma(t)}u_t^{\sigma(t)}+\omega_{t+1} \label{eq:dynam_by_theta1} 
\end{align}
or equivalently
\begin{align}
	x _{t+1} ={\Theta^{\sigma(t)}_{*}}^\top z_{t}+\omega_{t+1}, \quad z_t=\begin{pmatrix} x_t \\ u^{\sigma(t)}_t \end{pmatrix}. \label{eq:dynam_by_theta} 
\end{align}

where $\Theta^{\sigma(t)}_*=(A_*,B^{\sigma(t)}_*)^\top$ is controllable and {$\sigma:{0}\;\cup \mathbb{N}\rightarrow \mathcal{M}$ is a given right-continuous piecewise constant switching signal where $\mathcal{M}$ denotes the set of all controllable modes $(A_,B^{i}_*)$}.

The associated cost with this mode is
\begin{align}
		c^{\sigma(t)}_{t} &=x_{t}^\top Q x_{t} + {u^{\sigma(t)}_t}^\top R^{\sigma(t)} u^{\sigma(t)}_t \label{eq:obsswitch}
\end{align}
where $R^{\sigma(t)}\in \mathbb{R}^{d_{\sigma(t)}\times d_{\sigma(t)}}$ is a block of $R$ which penalizes the control inputs of the actuators of mode $\sigma(t) \in\mathcal{M}.$

{Since we are considering arbitrary, externally-determined, switching patterns (which could include switching to a particular mode
at some time instant, then never switching out of it), controllability of the system in each mode is necessary to ensure controllability of the switched system. This mode-by-mode controllability assumption is also useful to enable OFU-style learning at each epoch.}

The noise process satisfies the following assumption which is a standard assumption in controls community (e.g., \cite{abbasi2011regret}, \cite{cohen2019learning} and \cite{lale2020explore}).  

\begin{assumption}

\label{Assumption 1}
There exists a filtration $\mathcal{F}_{t}$ such that

$(1.2)$ $\omega_{t+1}$ is a martingale difference, i.e., $\mathbb{E}[\omega_{t+1}|\;\mathcal{F}_{t}]=0$

$(1.2)$ $\mathbb{E}[\omega_{t+1}\omega_{t+1}^\top|\;\mathcal{F}_{t}]=\bar{\sigma}_{\omega}^2I_{n}$ for some $\bar{\sigma}_{\omega}^2>0$;

$(1.3)$ $\omega_{t}$ are component-wise sub-Gaussian, i.e., there exists $\sigma_{\omega}>0$ such that for any $\gamma \in \mathbb{R}$ and $j=1,2,...,n$
\begin{align*}
\mathbb{E}[e^{\gamma\omega_{j}(t+1)}|\;\mathcal{F}_{t}]\leq e^{\gamma^2\sigma_{\omega}^2/2}.
\end{align*}
\end{assumption}

Our goal is to design a strategy which can efficiently transfer the control authority between different subsets of actuators when the system parameters are unknown. 

Relying on the past measurements, the problem is designing a sequence $\{u^{\sigma(t)}_t\}$ of control input such that the average expected cost
\begin{align}
	J(u_1, u_2,...) =\lim_{T\to\infty} \frac{1}{T}  \sum _{t=1}^{T} \mathbb {E} [c^{\sigma(t)}_{t}]. \label{eq:Ave_cost} 
\end{align}

{is minimized where $\sigma(t)\in \mathcal{I}$, where $\mathcal{I}=[i_{\tau_1}\; i_{\tau_2}\;...i_{\tau_{ns}}]$, represents the switch sequence. Each element in the set $\mathcal{I}$ specifies both the mode index $i_{\tau_j}\in \mathcal{M}$ and the time $\tau_j$ at which the switch to that mode occurs.} {In the context of TTP applications, this sequence is typically predetermined, and thus the system has access to both the switch times and the corresponding modes. In contrast, for FTC, while the switch times $\tau_j$ are determined by an anomaly detector, the decision regarding which mode to switch to is made by the algorithm itself. Therefore, the algorithm plays a pivotal role in selecting the next mode, denoted as $i_{\tau_j}\in\mathcal{M}$. When the system parameters are known, it may be relatively straightforward to specify the next mode that results in the lowest average expected cost. However, this task becomes significantly more challenging in the absence of such information. The MTD application introduces an even greater level of complexity, as the algorithm must autonomously decide both when and to which mode to switch. In this work, our primary focus is on TTP setting. In Section \ref{Sec:Extension} we will provide a brief overview of the general solutions for the other two settings.}

As a measure of optimality loss having to do with the lack of insight into the model, the regret of a policy is defined as follows: 
\begin{align}
	R_T =\sum _{t=1}^{T} \big(\mathbb {E} [c_{t}^{\sigma(t)}]-J_*(\Theta_*^{\sigma(t)}, Q, R^{\sigma(t)})\big)\label{eq:Reg} 
\end{align}
where $J_*(\Theta_*^{\sigma(t)}, Q, R)$ is the optimal average expected cost of actuating in mode $\sigma(t)\in \{1,..., 2^m\}$.

{\begin{proposition}\label{prop:linRegr}
The regret incurred by employing a fixed feedback controller $K_0$ on a LQ system over $T_0$ rounds follows an order of $\mathcal{O}(T_0)$.
\end{proposition}}

\begin{assumption}
$ $
\begin{enumerate}
\item There are known constants $\alpha_0, \alpha_1, \sigma, \vartheta, \nu>0$ such that,
\begin{align*}
& \alpha_0I\leq Q\leq \alpha_1 I,\; \; \; \alpha_0I\leq R^i\leq \alpha^i_1 I
\\
& W=\mathbb{E}[\omega_{t}\omega_{t}^\top]=\sigma^2I,\; \; \;\|\Theta _{*}^i\|\leq \vartheta_i,\; \; \; J_*(\Theta_*^{i}, Q, R)\leq \nu_i 
\end{align*}
for all $i \in \mathcal{M}$.
\item There is an initial stabilizing policy $K_0$ to start algorithm with.
\end{enumerate}
\end{assumption}

Note that having access to $\vartheta_i$ and $\nu_i$ for all $i$ separately is quite demanding and, hence, in practice, we only require knowledge of $\bar{\vartheta}:=\max_{i}\vartheta_i$ and $\bar{\nu}=\max_{i}\nu_i$. {The assumption of having access to an initial stabilizing policy like $K_0$ is a common practice in the literature (as seen in \cite{cohen2019learning} and \cite{dean2018regret}). Nevertheless, it is possible to develop a stabilizing policy in real-time by relying on some a priori bounds. For further details on this approach, we encourage readers to explore \cite{lale2020explore} and \cite{chekan2021joint}.}

Developing a "learn and control while switching" strategy requires overcoming two core challenges: performance and stability. We propose an efficient OFU-based algorithm equipped with a projection tool that can efficiently transfer the learned information from an actuating mode to others to achieve the former goal. This strategy outperforms the naive approach of repeatedly applying the standard OFU algorithm. {Fast recurrent switching between the actuating modes can cause explosion of state norm over finite horizon.} {Hence, in TTP setting, we operate under the assumption that the switches occur at a sufficiently slow rate on average, with the time interval between two consecutive switches $\tau_{j+1}-\tau_j$ being longer than some minimum average dwell time $\tau_{MAD}$.} 

Before presenting the strategy in more detail (and, in a later section, the specifics of the algorithms), we start by reviewing the main ingredients of our approach to build the foundation for our analysis. 

\section{Preliminaries}
\label{sec:prelim}


\subsection {Primal SDP formulation of LQR problem}

Consider a linear system
\begin{align}
x_{t+1} &=Ax_{t} + Bu_{t}+\omega_{t+1}\label{eq:dyn_atttt2} \\
c_{t} &=x_{t}^\top Qx_{t} + u_{t}^\top Ru_{t}.\label{eq:obs2}
\end{align}

Observing that a steady-state joint state and input distribution $(x_{\infty},u_{\infty})$ exists for any stabilizing policy $\pi$, we denote the covariance matrix of this joint distribution as $\mathcal{E}(\pi)$, which is defined as follows:
\begin{align*}
\mathcal{E}(\pi)= \mathbb{E}\begin{pmatrix}
x_{\infty}x_{\infty}^\top & x_{\infty}u_{\infty}^\top \\
u_{\infty}x_{\infty}^\top & u_{\infty}u_{\infty}^\top
\end{pmatrix}
\end{align*}

where $u_{\infty}=\pi(x_{\infty})$ and the average expected cost of the policy is given as follows

\begin{align*}
J(\pi)= \begin{pmatrix}
Q & 0 \\
0 & R
\end{pmatrix}\bullet \mathcal{E}(\pi).
\end{align*}

Now, for a linear stabilizing policy $K$ that maps $u_{\infty}=Kx_{\infty}$, the covariance matrix of the joint distribution takes the following form:

\begin{align}
&\mathcal{E}(K)=\begin{pmatrix}
X & XK^\top \\
KX & KXK^\top
\end{pmatrix}. \label{eq:good123er}
\end{align}

where $X=\mathbb{E}[x_{\infty}x_{\infty}^\top]$. Then the average expected cost is computed as follows 

\begin{align*}
J(K)= \begin{pmatrix}
Q & 0 \\
0 & R
\end{pmatrix}\bullet \mathcal{E}(K)=(Q+K^\top R K)\bullet X.
\end{align*}

Now given $\Theta=(A\; B)^\top$, the LQR control problem can be formulated in a SDP form, as follows:

\begin{align}
\nonumber\textrm{minimize}_{\Sigma}\; \; \; \begin{pmatrix}
Q & 0 \\
0 & R
\end{pmatrix}\bullet \Sigma\\
\nonumber\textrm{S.t.}\; \; \; \Sigma_{xx}={\Theta}^\top\Sigma\Theta+W\\
\Sigma>0\label{eq:SDPKhali}
\end{align}

where any feasible solution $\Sigma$ can be represented as follows:

\begin{align}
  \Sigma=\begin{pmatrix}
\Sigma_{xx} & \Sigma_{xu} \\
\Sigma_{ux} & \Sigma_{uu}
\end{pmatrix}  
\end{align}

in which $\Sigma_{xx} \in  \mathbb{R}^{n \times n}$, $\Sigma_{uu} \in  \mathbb{R}^{m \times m}$, and $\Sigma_{ux}=\Sigma_{xu} \in  \mathbb{R}^{m \times n}$. The optimal value of (\ref{eq:SDPKhali}) coincides with the average expected cost $J_*(\Theta, Q,R)$, and for $W>0$ (ensuring $\Sigma_{xx}>0$), the optimal policy of the system, denoted as $K_*(\Theta, Q,R)$ and associated with the optimum solution $\Sigma^*(\Theta, Q,R)$, can be derived from $K_*(\Theta, Q,R)=\Sigma^*_{{ux}}(\Theta, Q,R)\Sigma^{*^{-1}}_{{xx}}(\Theta, Q,R)$ considering (\ref{eq:good123er}).

In \cite{cohen2019learning}, the relaxed form of the SDP is used to design the stabilizing control policy when the matrix $\Theta$ is replaced by its estimates (i.e., confidence set of $\Theta$). We will slightly modify the relaxed SDP formulation for our purpose. 

\subsection {System Identification}
\label{sysIdentnai}

The proposed algorithm includes a system identification step that applies self-normalized process to obtain least square estimates and uses the martingale difference property of process noise to construct high probability confidence set around the system's parameters. To sketch a brief outline, assume the system actuates in the first mode and evolves in an arbitrary time interval $[0, \;t]$. One can then simply rewrite (\ref{eq:dyn_atttt}) as follows:  
\begin{align}
X_t &=Z_t \Theta_{*}+W_t \label{compactdyn}
\end{align}
where $X_t$ and $Z_t$ are matrices whose rows are $x^\top_{1}, ...,x^\top_t$ and ${z}^\top_{0}, ...,{z}^\top_{t-1}$ respectively. And in a similar way $W_t$ is constructed by concatenating  $\omega_{1}^\top,...,\omega_t^\top$ vertically. 

Using the self-normalized process, the least square estimation error, $e(\Theta)$ can be defined as
 \begin{align}
 \nonumber e(\Theta)&=\lambda \operatorname{Tr} \big((\Theta-\Theta_0)^\top(\Theta-\Theta_0)\big)+\\
 &\quad\sum _{s=0}^{t-1} \operatorname{Tr} \big((x_{s+1}-\Theta^\top z_{s})(x_{s+1}-\Theta^\top z_{s})^\top)\big) \label{eq:LSE_pr}
 \end{align}
 where $\Theta_0$ is some given initial estimate and {$\lambda>0$} is a regularization parameter.
 This yields the $l^{2}$-regularized least square estimate:
 \begin{align}
 \hat{\Theta}_{t} &=\operatorname*{argmin}_\Theta e(\Theta)=V_{[0,\;t]}^{-1}\big(Z_t^\top X_t+ \lambda \Theta_0\big). \label{eq:LSE_Sol12} 
 \end{align}
where 
 \begin{align*}
V_{[0,\;t]}=\lambda I + \sum_{s=0}^{t-1} z_{s}z_{s}^T=\lambda I +Z_t^\top Z_t,
 \end{align*}
is covariance matrix. Then assuming $\|\Theta_*-\Theta_0\|_*\leq \epsilon$ for $\epsilon >0$, a high probability confidence {set containing unknown parameters of system $\Theta_*$ is constructed as}
 \begin{align}
\nonumber \mathcal{C}_t(\delta):=\big\{&\Theta \in \mathbb{R}^{(n+m)\times n}|\\
&\operatorname{Tr}\big((\Theta-\hat{\Theta}_t)^TV_{[0,\;t]}(\Theta-\hat{\Theta}_t)\big) &\leq r_t\big\} \label{eq:confSet1_tighterghfff}
\end{align}
where 
\begin{align}
    r_t=\bigg( \sigma_{\omega} \sqrt{2n \log\frac{n\det(V_{[0,\;t]}) }{\delta \det(\lambda I)}}+\sqrt{\lambda} \epsilon \bigg)^2. \label{radius_centralEl_realTime}
\end{align}

It can be shown that the true matrix $\Theta_*$ is guaranteed to belong to this set with probability at least $1-\delta$, for $0<\delta<1$. The right hand side of (\ref{eq:confSet1_tighterghfff}) is dependent on the user-defined parameters $\lambda$, and initial estimation error $\epsilon$. The proof follows a combination of steps of \cite{cohen2019learning} and \cite{abbasi2011regret} with the difference that we ignore the normalization of the confidence set. For the sake of brevity, we may sometimes use $\Delta_t=\Theta-\hat{\Theta}_t$ in our analysis.

\subsection{$(\kappa,\gamma)-$ strong sequential stability}
We review the notation of strong sequentially stabilizing introduced in \cite{cohen2019learning}, which will be used in the context of switched system as well. 

{\begin{Definition} \label{def:stability}
Consider a linear plant parameterized by $A$ and $B$. The closed-loop system matrix $A+BK_t$ is $(\kappa, \gamma)-$ strongly stable for $\kappa>0$ and $0<\gamma<1$ if there exists $H_t\succ 0$ and $L_t$ such that $A_*+B_*K_t=H_tL_t{H}^{-1}_t$ and
\begin{enumerate}
    \item $\|L_t\|\leq 1-\gamma$ and $\|K_t\|\leq \kappa$
     \item $\|H_t\|\leq B_0$ and $\|{H}_t^{-1}\|\leq 1/b_0$ with $\kappa=B_0/b_0$
\end{enumerate}
A sequence $A+BK_1, A+BK_2, ...$ of closed-loop system matrices is $(\kappa, \gamma)-$strongly sequentially stable if each $A+BK_t$ is $(\kappa, \gamma)-$ strongly stable and  $\|H^{-1}_{t+1}H_t\|\leq 1+\gamma/2$.
$     $
$     $
Furthermore, we say a sequence $K_1, K_2,...$ of control gains is $(\kappa, \gamma)-$strongly sequentially stabilizing for a plant $(A, B)$ if the sequence $A+BK_1, A+BK_2, ...$ is $(\kappa, \gamma)-$strongly sequentially stable.
\end{Definition}}

    
     


As shown in \cite{cohen2019learning} applying such a strongly stabilizing  sequence of controllers starting from state $x_{t_0}$ guarantees the boundedness of state trajectory as follows:
 \begin{align}
\|x_t\|\leq \kappa_i e^{-\gamma_i (t-1)/2}\|x_{t_0}\|+\frac{2\kappa_i}{\gamma_i} \max\limits_{t_0\leq s\leq t} \|w_s\|.\label{eq:rsp_seq_st} 
 \end{align}

\begin{Definition} \label{Dref:neigborhood}
The set $\mathcal{N}_{(\kappa_0, \gamma_0)}(\Theta_*)=\{\Theta\in \mathbb{R}^{(n+m)\times n}:\; \|\Theta-\Theta_*\|\leq \epsilon\}$, for some $\epsilon>0$, is $(\kappa_0,\gamma_0)$-stabilizing neighborhood of the system (\ref{eq:dyn_atttt}-\ref{eq:obs}). In this neighborhood, for any $\Theta^{\prime}\in \mathcal{N}_{(\kappa_0, \gamma_0)}(\Theta_*)$, the designed $(\kappa^{\prime},\gamma^{\prime})$-strongly stabilizing controller $K_*(\Theta^{\prime}, Q, R)$  applied to the system (\ref{eq:dyn_atttt}), achieves an average expected cost lower than that of $K_0$.
\end{Definition}

 \subsection {Projection of the Confidence Ellipsoid}\label{Sec_Proj}

Suppose at time $t$ the central confidence ellipsoid 

\begin{align}
\nonumber \mathcal{C}_t(\delta):=\{&\Theta\in \mathbb{R}^{(n+m)\times n}|\\ &\operatorname{Tr}\big((\Theta-\hat{\Theta}_{t})^\top V_{[0,\;t]} (\Theta-\hat{\Theta}_{t})\big) &\leq r_t\} \label{eq:Mode1Ellips}
\end{align}
is given. By projecting the central ellipsoid  (\ref{eq:Mode1Ellips}), rich with information until time $t$, into the space of mode $i$, the corresponding confidence set of parameters for mode $i$ is again an ellipsoid, denoted by $\mathcal{C}^{pi}_t(\delta)$ and represented as follows:
\begin{align}
\nonumber \mathcal{C}^{pi}_t(\delta):=\{&\Theta^i\in \mathbb{R}^{(n+m_i)\times n}|\\
&\operatorname{Tr}\big((\Theta^i-\hat{\Theta}_{t}^{pi})^\top V^{pi}_{[0\; t]} (\Theta^i-\hat{\Theta}_{t}^{pi})\big) &\leq r\}. \label{eq:projected}
\end{align}

The characteristics $\hat{\Theta}_{t}^{pi}$ and $V^{pi}_{[0\; t]}$ of this projected ellipsoid can be computed following the steps specified below.

Consider the  ellipsoid
\begin{align}
\operatorname{Tr}(\Delta^TV\Delta)\leq r \label{eq:ellipsToProj}
\end{align}
where $\Delta=\Theta-\hat{\Theta}$ and $\Theta$ has $n+m$ rows. We aim to project this set to a lower dimensional space of mode $i$ identified by $\Theta^i \in \mathbb{R}^{(n+m_i)\times n}$ for some $m_i<m$. To this end, we first change the order of rows in such a way that the rows corresponding to the actuators of mode $i$, $\tilde{\Delta}_1\in \mathbb{R}^{(n+m_i)\times n}$ come first whereas $\tilde{\Delta}_2\in \mathbb{R}^{(m-m_i)\times n}$. The reformulated ellipsoid then is written as follows:
\begin{align}
\operatorname{Tr}(\tilde{\Delta}^\top \tilde{V}\tilde{\Delta})\leq r,\quad \tilde{\Delta}=\left[
\begin{array}{c}
\tilde{\Delta}_1 \\
\hline
\tilde{\Delta}_2 
\end{array}
\right]. \label{eq:ellipsToProjre}
\end{align}

{Where $\tilde{V}$ is in $\mathbb{R}^{(n+m)\times (n+m)}$ similar to $V$.} Let us now rewrite the parametrization of ellipsoid (\ref{eq:ellipsToProjre}) as follows

\begin{align*}
\operatorname{vect}(\tilde{\Delta})^T\tilde{G} \operatorname{vect} (\tilde{\Delta})\leq r 
\end{align*} 
{where $\operatorname{vect}(\tilde{\Delta})$ in  $\mathbb{R}^{(n+m) n}$ is a vectorization of $\tilde{\Delta}$ and $\tilde{G}=I_{(n\times n)}\otimes \tilde{V}$.}

Suppose $\partial E_{pi}$ is the boundary of the projection of the ellipsoid  $E:=\operatorname{vect}(\tilde{\Delta})^T\tilde{G} \operatorname{vect}(\tilde{\Delta})\leq r$ on our target lower dimensional space. It is obvious that any point $Y\in \partial E_{pi}$ is the projection of a point $X\in \partial E$ and both of the points are on the tangent space of $\partial E$ parallel to the direction of ${\tilde{\Delta}_2}$. Hence, we can deduce that on these points, $\nabla_{\tilde{\Delta}_2}E=0$ which gives $m-m_i$ algebric equations. By solving these equation for the components of ${\tilde{\Delta}_2}$s in terms of the components of ${\tilde{\Delta}_1}$, and subsequently substituting them into $E$ we obtain the projected ellipsoid $E_{pi}$ which can be written in following compact form :
\begin{align}
\operatorname{vect}(\tilde{\Delta}_1)^T(U-M^TT^{-1}M) \operatorname{vect}(\tilde{\Delta}_1)\leq r. \label{eq:projected2}
\end{align}
where $U$, $M$, and $T$ are the block matrices of the partitioned $\tilde{G}$, e.g.,
\begin{align}
\tilde{G}=
\left[
\begin{array}{c|c}
U & M^T \\
\hline
M & T
\end{array}
\right]. \label{eq:partitioned}
\end{align} 
We can easily construct this partition noting that {$U\in \mathbb{R}^{n(n+m_i)\times n(n+m_i)}$}. By sorting the arrays of vector $\operatorname{vect}(\tilde{\Delta}_1)$ by order into a matrix $\Delta^{pi}$ of dimension $\mathbb{R}^{(n+m_i)\times n}$ the projected confidence ellipsoid (\ref{eq:projected2}) can be rewritten in the standard format of
\begin{align*}
\operatorname{Tr}({\Delta^{pi}}^TV^{pi}\Delta^{pi})\leq r.
\end{align*}
where covariance matrix {$I_{(n\times n)}\otimes V^{pi}=U-M^\top T^{-1}M$}.

\subsection{Augmentation}
\label{augument}
Let us call the mode consisting of all actuators (i.e., mode $i=1$) the "central mode". Its associated confidence ellipsoid $\mathcal{C}_t(\delta)$ contains all information up to time $t$ and can be computed at all times even when the system operates in a different mode. Indeed when the system is in the mode $i$ with the following dynamics:
\begin{align}
	x _{t+1} ={\Theta _{*}^i}^\top z^i_t+\omega_{t+1}, \quad z^i_t=\begin{pmatrix} x_t \\ u^i_t \end{pmatrix} \label{eq:dynam_by_theta2} 
\end{align}
we can  equivalently rewrite it as follows:

\begin{align}
	x _{t+1} ={\Theta _{*}}^\top z_t+\omega_{t+1}, \quad z_t=\begin{pmatrix} x_t \\ u_t \end{pmatrix} \label{eq:dynam_by_theta3} 
\end{align}
where $u_t \in \mathbb{R}^{m}$ and 
\begin{align}
  u_t(\mathcal{B}_i)= u_t^i,\;
  u_t(\mathcal{B}_i^c)=0
\end{align}
with $\mathcal{B}_i^c$ being the complement of $\mathcal{B}_i$.

In turn the estimation process of \ref{sysIdentnai} can be performed on (\ref{eq:dynam_by_theta3}) to obtain updated estimates of $\Theta_*$ computed by the central ellipsoid $\mathcal{C}_t(\delta)$. 

We refer to this simple mathematical operation as "augmentation technique" using which we can update central ellipsoid regardless of the actuation mode.

\section{Projection-based strategy}
\subsection{Overview}\label{CC}
A possible way to tackle the "learn and control while switching problem:" is to apply the OSLO algorithm described in \cite{cohen2019learning} repeatedly, each time a switch occurs. However, such a naive approach is likely inefficient because (1) it ignores previously learned information and, (2) it runs a warm-up phase after each switch. As a (provably) better alternative, we propose a projection-based strategy which relies on the central ellipsoid to incorporate previously learned information in control design procedure. A controller that plays optimistically with respect to this set is then computed by solving a relaxed SDP. 

To clarify the main idea, consider a simple example of control system with three actuators which offers seven actuating modes assuming that they all satisfy the controllability condition. Let $M_1$ denote the mode 1 which takes advantage of all the  actuators. The modes $\{M_2, M_3, M_4\}$ and {$\{M_5, M_6, M_7\}$} are the ones with two actuators and single actuator respectively. We can consider $M_1$ as the central mode which is at the same time the parent of all other actuating modes. {Figure 1 illustrates the relationships between these actuating modes.} 

\begin{figure}[thpb]
	\centering
	\vspace{5pt}
	\includegraphics[scale=.4]{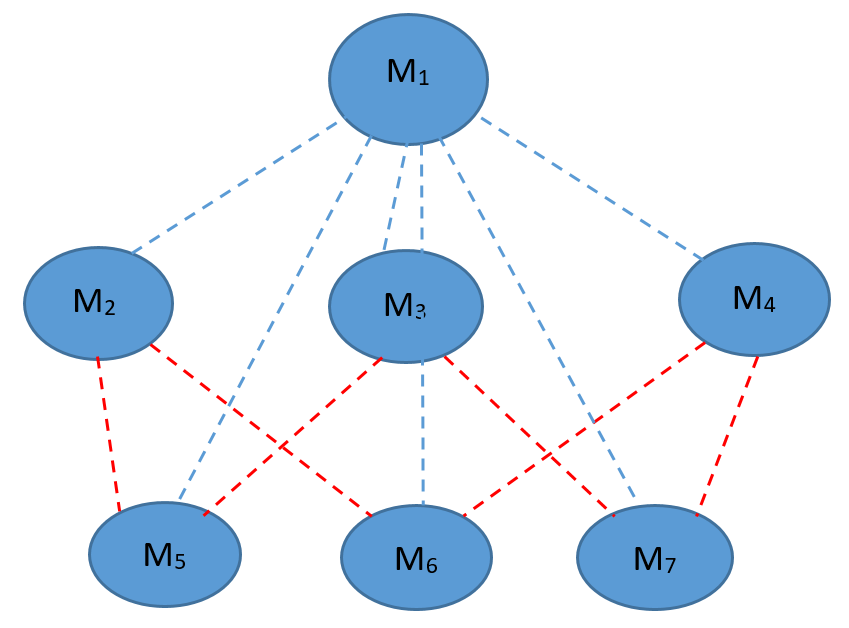}
	\caption{Parent-Child Dependency}
	\label{shem_fig}
\end{figure}
{The dashed lines indicate how the modes are interconnected. For instance, the column set of modes $M_5$ and $M_6$ are the subsets of the column set of mode $M_2$.} We break the strategy into two phases, the warm-up and Learn and Control while Switching (LCS) phases, whose pseudo codes have been provided in Algorithms 1 and 2. The strategy starts with Algorithm 1, which applies a given $(\kappa_0,\gamma_0)-$strong stabilizing policy feedback $K_0$ for $T_0$ time steps to provide Algorithm 2 with close enough parameter estimates $\Theta_0$. For warm-up duration $T_0$ computation, we use the recent results of \cite{mania2019certainty} on the existence of a stabilizing neighborhood around unknown parameters of the system. We compute $T_0$ such that the controller designed by Algorithm 2 at time $T_0$, which relies on the estimate $\Theta_0$, performs better than the initial stabilizing controller $K_0$ and ensures the achievement of a favorable overall regret. The computation of this duration is detailed in Theorem \ref{thm:WarmUp_Duration}. 


After the warm-up phase, Algorithm 2 is applied whenever a switch occurs. This algorithm possesses the capability to access the updated central ellipsoid, which it maintains through the utilization of the augmentation technique. LCSA projects the central ellipsoid to the space of the actuating mode $M_i$; the mode system operates in for the epoch (time interval between two subsequent switches of actuating mode). During operation in an epoch, the algorithm applies OFU for the actuating mode $M_i$ to obtain the control input. In parallel, it updates (i.e., enriches) the central ellipsoid by using the augmentation technique. 

In the next section, we propose LCSA algorithm for the TTP application where the switch sequence (both switch time and modes) are given upfront. We define this sequence by $\mathcal{I}=[i_{\tau_1}\; i_{\tau_2}\;...i_{\tau_{ns}}]$ where $i_{\tau_i} \in \mathcal{M}$ and $\tau_i$ are switch times and $ns$ is total number of switches. Having $\Theta_0$ provided by Algorithm 1, we reset the time and let switch time $\tau_i\in \{\tau_1, \tau_2,..., \tau_{ns}\}$ where $\tau_1=0$. We denote the time interval between two subsequent switches, i.e., $[\tau_k,\tau_{k+1}]$ as epoch $k$. This algorithm, however, can be slightly modified to address FTC and MTD classes of applications. We will discuss these modifications in Section \ref{Sec:Extension}.

\begin{algorithm} 
	\caption{\small Warm-Up Phase \normalsize} \label{alg:IExp}
		\begin{algorithmic}[1]
		\STATE \textbf{Inputs:}$T_{0}$$\,\vartheta>0,$$\,\delta>0,$$\, \sigma_{\omega},\, \eta_t \sim \mathcal{N}(0, 2\sigma_{\omega}^2\kappa_0^2 I)$
		
		\STATE set $\lambda=\sigma_{\omega}^2\vartheta ^{-2}$, $V_0=\lambda I$
		
		\FOR {$t = 0: T_0$}
		\STATE Observe $x_t$
		\STATE Play $u_t=K_0 x_t+\eta_t$
		\STATE observe $x_{t+1}$ 
		\STATE using $u_t$ and $x_t$ form $z_t$ and save $(z_t, x_{t+1})$ and update the ellipsoid according to the procedure given in \ref{skjs}, i.e., (\ref{eq:thetaHatWarmup}-\ref{radius_centralEl_warmUp})
		\ENDFOR
		\STATE \textbf{Output:} $\Theta_0$  the center of constructed ellipsoid
		\end{algorithmic}
	\end{algorithm}
	
	\begin{algorithm} 
	\caption{Learn and Control while Switching Algorithm(LCSA)} \label{alg:OSL311}
	\begin{algorithmic}[1]
        
		\STATE \textbf{Inputs:} $\mathcal{I}=[i_{\tau_1}\; i_{\tau_2}\;... i_{\tau_{ns}}],$ $\Theta_0,$ $\lambda=8\kappa_*^8 \sqrt{n+m} \frac{4\bar{\mu}\bar{\nu}}{\alpha_0\sigma^2}$
		\STATE {Initialize central ellipsoid by (\ref{eq:initialCentaEl})} 
        \STATE Set $\tau_1=0$
		
		\FOR {$k=1: ns$}

          \STATE Project central ellipsoid $\mathcal{C}_{\tau_k}(\delta)$ into space of mode $i_{\tau_k}$ and obtain

          \begin{align*}
		\operatorname{Tr}((\Theta^{i_{\tau_k}}-\hat{\Theta}_{\tau_k}^{i_{\tau_k}})^\top V^{pi_{\tau_k}}_{[0\;\tau_k]}(\Theta^{i_{\tau_k}}-\hat{\Theta}_{\tau_k}^{i_{\tau_k}}))\leq r_{\tau_k}
		\end{align*}

         \STATE Set $V^{i_{\tau_k}}_{[0,\tau_k]}=V^{pi_{\tau_k}}_{[0\;\tau_k]}$,  and $r^i=r_{\tau_k}$
         
        \FOR {$\tau_k\leq t < \tau_{k+1}$}
         \STATE receive state $x_t$
         \STATE Set $\mu_{i_{\tau_k}}=r^{i_{\tau_k}}+(1+r^{i_{\tau_k}})\vartheta \|V^{i_{\tau_k}}_{[0\;t]}\|^{0.5}$
         \STATE Set $\tau=\tau_k$

		\IF {$\det(V^i_{[0,\;t]})> 2 \det(V^i_{[0,\;\tau]})$ or $t=\tau_k$} 
		
		\STATE start new episode and set $\tau=t$
		
		\STATE \textbf{ estimate system parameters:} Compute the least square estimate $\hat{\Theta }^{i_{\tau_k}}_t$ by solving  (\ref{eq:obswpe})
		\STATE \textbf{Compute policy:} Solve the dual SDP program (\ref{eq:RelaxedSDP}) for $\Sigma(\Theta^{i_{\tau_k}}_t,Q,R^{i_{\tau_k}})$ 
		\STATE Set 
  
  $\bar{K}_{i_{\tau_k}}=\Sigma_{ux}(\Theta^{i_{\tau_k}}_t,Q,R^{i_{\tau_k}})\Sigma^{-1}_{xx}(\Theta^{i_{\tau_k}}_t,Q,R^{i_{\tau_k}})$
		\ELSE
		\STATE Set
  
       $\bar{K}_{i_{\tau_k}}=K(\Theta^{i_{\tau_k}}_{\tau},Q,R^{i_{\tau_k}}),$
		\ENDIF 
		
		\STATE Apply $u^{i_{\tau_k}}_t=\bar{K}_{i_{\tau_k}}x_t$ and observe new state $x_{t+1}$.
		
		\STATE Save $(z^{i_{\tau_k}}_t,x_{t+1}) $ into database

		\STATE Update $V^{i_{\tau_k}}_{[0,\; t+1]}=V^{i_{\tau_k}}_{[0,\; t]}+z^{i_{\tau_k}}_t{z^{i_{\tau_k}}_t}^\top$
		
		\STATE \textbf{ compute}\\
		$r^{i_{\tau_k}}=\bigg( \sigma_{\omega} \sqrt{2n \log \frac{n\det(V^{i_{\tau_k}}_{[0\;t+1]}) }{\delta \det(V^{i_{\tau_k}}_{[0,\tau_k]})}}+\sqrt{r_{\tau_k}}\bigg)^2, $ 
		
		\STATE Apply augmentation technique (Section \ref{augument}) to update central ellipsoid.
		\ENDFOR
        \ENDFOR
	\end{algorithmic}
\end{algorithm}
\subsection{Main steps of Algorithms}\label{AA} 

For the sake of the reader more interested in implementation than in the algorithms' theoretical underpinnings, we start by surveying the main steps of Algorithm 1 and 2. We are mostly concerned with explaining the computations and algebraic manipulations performed in each step, and provide details about the rationale for and properties of these steps under the umbrella of performance analysis in the next section. Proofs of most of these results are relegated to the appendix.

\subsubsection{System Identification}
The system identification step aims at building confidence set around the system's true but unknown parameters. This step plays a major role in both Algorithm 1 and Algorithm 2.

\paragraph{Confidence Set construction of warm-up phase} \label{skjs}
In Algorithm 1, using data obtained by executing control input $u_t=K_0 x_t+\eta_t$ (which consists of linear strongly stabilizing term and extra exploratory noise $\eta_t\sim \mathcal{N}(0, 2\sigma_{\omega}^2\kappa_0^2 I)$), a confidence set is constructed around the true parameters of the system by following the steps of Section \ref{sysIdentnai}. For warm-up phase we let $\Theta_0=0$ in (\ref{eq:LSE_pr}) which results in
\begin{align}
 \hat{\Theta}_{t} &=\operatorname*{argmin}_\Theta e(\Theta)=V_{[0,\;t]}^{-1}Z_t^\top X_t. \label{eq:thetaHatWarmup}
 \end{align}
where the covariance matrix $V_{[0,\;t]}$ is constructed similar to Section \ref{sysIdentnai}. Recalling $\|\Theta_*\|_*\leq \vartheta$ by assumption, a high probability confidence set around true but unknown parameters of system is constructed as
 \begin{align}
\nonumber \mathcal{C}_t(\delta):=\big\{&\Theta \in \mathbb{R}^{(n+m)\times n}|\\
&\operatorname{Tr}\big((\Theta-\hat{\Theta}_t)^\top V_{[0,\;t]}(\Theta-\hat{\Theta}_t)\big) &\leq r_t\big\} \label{eq:confSet1_warmup}
\end{align}
where 
\begin{align}
    r_t=\bigg( \sigma_{\omega} \sqrt{2n \log\frac{n\det(V_{[0,\;t]}) }{\delta \det(\lambda I)}}+\sqrt{\lambda} \vartheta \bigg)^2. \label{radius_centralEl_warmUp}
\end{align}
For warm-up phase we set $\lambda=\sigma_{\omega}^2\vartheta^{-1}$.

{Deployment of Algorithm 1 endures until the parameter estimates reside in $(\kappa_0,\gamma_0)$-stabilizing neighborhood  $\mathcal{N}{(\kappa_0, \gamma_0)}(\Theta*)$. As per Definition \ref{Dref:neigborhood}, the first stabilizing controller $K(\Theta_0, Q, R)$ designed by Algorithm 2, where $\Theta_0=\hat{\Theta}_{T_0}$ has been substantiated to yield an average expected cost that is lower than that of $K_0$ when applied in mode 1. This, in essence, ensures that Algorithm 2 commences with a feedback gain as effective as the initial stabilizing feedback gain $K_0$. Additionally, to secure joint stabilization and effectively manage regret, it is imperative to ensure the proximity of the estimate $\Theta_0$. This duration is determined by the theorem presented below.} 


{
\begin{theorem}\label{thm:WarmUp_Duration}
For a given $(\kappa_0, \gamma_0)$-strongly stabilizing input $K_0$, there exist explicit constants $C_0$ and $\epsilon_0=poly(\alpha_0^{-1},\alpha_1,\vartheta, \nu, n, m)$ such that the following holds:
Let $T_0$ be the smallest time that satisfies:
\begin{align}
\nonumber
\bigg( &\sigma_{\omega} \sqrt{2n \left(\log \frac{n}{\delta} + \log \left(1+\frac{300\sigma_{\omega}^2\kappa_0^4}{\gamma_0^2}(n+\vartheta^2\kappa^2_0)\log \frac{T_0}{\delta}\right)\right)} \\
&+\sqrt{\lambda} \vartheta \bigg)^2 \frac{80}{T_0 \sigma^2_{\omega}} \leq \min \bigg\{\kappa^2_0\frac{\alpha_0 \sigma^2_{\omega}}{C_0}, \epsilon^2_0, \frac{1}{\lambda}\bigg\}:=\epsilon^2. \label{eq:Durarion}
\end{align}
then that $K(\hat{\Theta}_{T_0}, Q, R)$, obtained by Algorithm 2 applied to the system $\Theta_*$, has an average expected cost $J(\hat{\Theta}_{T_0}, Q, R)$ that is lower than that of the policy $K_0$. Furthermore, requiring $\epsilon^2 \leq \frac{1}{{\lambda}}$ is a sufficient condition for $(\kappa, \gamma)-$strong sequential stability of the closed-loop system under the implementation of the policy produced by Algorithm 2.
\end{theorem}}
\vspace{0.5 cm}

{The regret associated with the implementation of the warm-up algorithm is of the order $\mathcal{O}(\sqrt{T})$. To substantiate this assertion,  let $T$ denote the specified horizon for implementing Algorithm \ref{alg:OSL311}. By definition, we have $\epsilon^2\leq 1/\lambda$, where $\lambda= 8\kappa_*^8 \sqrt{n+m}\frac{4\bar{\mu}\bar{\nu}}{\alpha_0\sigma^2}$, as the condition for strong sequential stability (see proof of Lemmas \ref{Stability_lemma18} and \ref{upper_ and lower bound of P}). This, along with the definition of $\bar{\mu}$, indeed indicates that $\lambda=\mathcal{O}(\sqrt{T})$. Now, based on equation (\ref{eq:Durarion}), we deduce that $T_0=\mathcal{O}(\sqrt{T})$. Given Proposition \ref{prop:linRegr}, which establishes that the regret associated with a fixed strongly stabilizing policy $K_0$ over $T_0$ rounds scales as $\mathcal{O}(T_0)$, we can conclude that within $T$ rounds, the warm-up phase incurs a regret of $\mathcal{O}(\sqrt{T}).$}

\paragraph{Confidence Set construction of LCSA algorithm}

Having been run for the mode 1, the warm-up phase outputs $\Theta_0$ as an initial estimate using which the central ellipsoid as an input to Algorithm 2, is constructed as follows
\begin{align}
\nonumber\mathcal{C}_{\tau_1}(\delta)=\{&\Theta\in \mathbb{R}^{(n+m)\times n}|\operatorname{Tr}((\Theta-{\Theta}_{0})^\top V_{0}(\Theta-{\Theta}_{0}))\leq r_{0}\}\\
\label{eq:initialCentaEl}
\end{align}

where $r_{0}=\lambda \epsilon$, $ V_{0}=\lambda I$ where $\lambda=8\kappa_*^8 \sqrt{n+m} \frac{4\bar{\mu}\bar{\nu}}{\alpha_0\sigma^2}$ and $r_0=\lambda \epsilon^2$. {The initial confidence ellipsoid (\ref{eq:initialCentaEl}) is, in fact, equivalent to $\|\Theta-\Theta_{0}\|^2\leq r_0$.} We apply the procedure of Section \ref{sysIdentnai} together with augmentation technique to update the central ellipsoid. 

Suppose that at time $\tau_q$, the system switches to mode $i$. The associated inherited confidence ellipsoid for mode $i$ can be derived through the projection of the central ellipsoid, as discussed in Section \ref{Sec_Proj}, and can be expressed as follows:
\begin{align}
\nonumber\mathcal{C}^{pi}_{\tau_q}(\delta):=\{&\Theta^j\in \mathbb{R}^{(n+m_i)\times n}|\\
&\operatorname{Tr} \big((\Theta^i-\hat{\Theta}^{pi}_{\tau_q})^\top V^{pi}_{[0\; \tau_q]} (\Theta^i-\hat{\Theta}^{pi}_{\tau_q})\big) &\leq r_{\tau_q}\}. \label{eq:InheritedInfo}
\end{align}
The following theorem provides a confidence bound on the error of parameter estimation when there is an inherited confidence set.

\begin{theorem}\label{thm:Conficence_Set_Attacked}
Let the linear model (\ref{eq:dynam_by_theta}) satisfy Assumption \ref{Assumption 1} with a known $\sigma_{\omega}$. Assume the system at an arbitrary time $\tau_{q}$ switches to the mode $i$ and runs for the time period $t\in[\tau_q,\;\; \tau_{q+1}]$ (i.e. $q+1$'s epoch). Let the inherited information, which is a projection of central ellipsoid, be given by (\ref{eq:InheritedInfo}). 
 Then with probability at least $1-\delta$ we have
\begin{align}
\operatorname{Tr}((\Theta^{i}-\hat{\Theta}^i_t)^\top V^i_{[0\;\; t]}(\Theta^{i}-\hat{\Theta}^i_t))\leq r^{i}\label{eq:confidence_set_noInfot}
\end{align}
where
 \begin{align}
r^i &=\bigg( \sigma_{\omega} \sqrt{2n \log \frac{n\det(V^i_{[0\;\; t]}) }{\delta \det(V_{[0\; \tau_{q}]}^{pi})}}+\sqrt{r_{\tau_q}}\bigg)^2, \label{eq:not_normalized_radius} 
\end{align}
 \begin{align}
  \hat{\Theta}^{i}_t &={V^i}^{-1}_{[0,\;t]}\big({Z}^TX+V^{pi}_{[0,\;\; \tau_q]} \hat{\Theta}^{pi}_{\tau_q} \big), \label{eq:obswpe}
  \end{align}
and
  \begin{align}
\nonumber&V_{[0 \; t]}^{i}=V_{[0\; \tau_{q}]}^{pi}+V^i_{[\tau_q\; t]},\quad
V^i_{[\tau_q\; t]}=\sum_{t=\tau_{q}}^{t} z^i_t{z^i_t}^\top,\\
&Z^\top X=\sum _{s=\tau_q}^{t}z^i_s{x^i_s}^\top \label{computation}
\end{align} 
\end{theorem}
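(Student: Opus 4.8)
The plan is to treat this as a regularized self-normalized least-squares problem in which the inherited ellipsoid plays exactly the role that the ridge term $\lambda I$ plays in the standard construction of Section~\ref{sysIdentnai}. First I would verify that the estimator (\ref{eq:obswpe}) is the minimizer of the penalized loss $\operatorname{Tr}\big((\Theta-\hat{\Theta}^{pi}_{\tau_q})^\top V^{pi}_{[0\;\tau_q]}(\Theta-\hat{\Theta}^{pi}_{\tau_q})\big)+\sum_{s=\tau_q}^{t}\operatorname{Tr}\big((x^i_{s+1}-\Theta^\top z^i_s)(x^i_{s+1}-\Theta^\top z^i_s)^\top\big)$, whose first-order condition gives precisely $\hat{\Theta}^i_t={V^i_{[0,t]}}^{-1}\big(Z^\top X+V^{pi}_{[0\;\tau_q]}\hat{\Theta}^{pi}_{\tau_q}\big)$ with $V^i_{[0,t]}=V^{pi}_{[0\;\tau_q]}+V^i_{[\tau_q,t]}$, matching (\ref{computation}).

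Next I would substitute the true dynamics $x^i_{s+1}={\Theta^i_*}^\top z^i_s+\omega_{s+1}$ into $Z^\top X$ to obtain $Z^\top X=V^i_{[\tau_q,t]}\Theta^i_*+S_t$, where $S_t:=\sum_{s=\tau_q}^{t}z^i_s\omega_{s+1}^\top$ is the self-normalized noise sum. Using $V^i_{[0,t]}\Theta^i_*=V^{pi}_{[0\;\tau_q]}\Theta^i_*+V^i_{[\tau_q,t]}\Theta^i_*$, the estimation error telescopes to
\begin{align*}
\Theta^i_*-\hat{\Theta}^i_t={V^i_{[0,t]}}^{-1}\Big(V^{pi}_{[0\;\tau_q]}\big(\Theta^i_*-\hat{\Theta}^{pi}_{\tau_q}\big)-S_t\Big).
\end{align*}
I would then pass to the weighted norm $\|\bullet\|_{V^i_{[0,t]}}$ and, using the identity $\|{V^i_{[0,t]}}^{-1}y\|_{V^i_{[0,t]}}=\|y\|_{{V^i_{[0,t]}}^{-1}}$ together with the triangle inequality, split the target into an inherited-prior term and a noise term.

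For the prior term, monotonicity is the workhorse: since $V^i_{[0,t]}=V^{pi}_{[0\;\tau_q]}+V^i_{[\tau_q,t]}\succeq V^{pi}_{[0\;\tau_q]}$, we have ${V^i_{[0,t]}}^{-1}\preceq {V^{pi}_{[0\;\tau_q]}}^{-1}$, hence $V^{pi}_{[0\;\tau_q]}{V^i_{[0,t]}}^{-1}V^{pi}_{[0\;\tau_q]}\preceq V^{pi}_{[0\;\tau_q]}$, which bounds the prior term by $\|\Theta^i_*-\hat{\Theta}^{pi}_{\tau_q}\|_{V^{pi}_{[0\;\tau_q]}}\le\sqrt{r_{\tau_q}}$ directly from the inherited set (\ref{eq:InheritedInfo}). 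For the noise term $\|S_t\|_{{V^i_{[0,t]}}^{-1}}$ I would invoke the self-normalized martingale concentration inequality (the matrix version from \cite{abbasi2011regret}, \cite{cohen2019learning}), now with the data-dependent but $\mathcal{F}_{\tau_q}$-measurable matrix $V^{pi}_{[0\;\tau_q]}$ serving as the initial ``prior'' covariance in place of $\lambda I$; this yields, with probability at least $1-\delta$, the bound $\|S_t\|_{{V^i_{[0,t]}}^{-1}}\le\sigma_{\omega}\sqrt{2n\log\big(n\det(V^i_{[0,t]})/(\delta\det(V^{pi}_{[0\;\tau_q]}))\big)}$. Adding the two contributions and squaring produces exactly $r^i$ in (\ref{eq:not_normalized_radius}).

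The main obstacle I anticipate is justifying the noise bound with $V^{pi}_{[0\;\tau_q]}$ in the role of the deterministic regularizer: unlike $\lambda I$, this matrix is assembled from pre-switch data, so I must argue it is $\mathcal{F}_{\tau_q}$-measurable and that, conditioned on $\mathcal{F}_{\tau_q}$, the epoch noise $\{\omega_{s+1}\}_{s\ge\tau_q}$ remains a martingale-difference, component-wise sub-Gaussian sequence by Assumption~\ref{Assumption 1}, so that the stopping-time/method-of-mixtures argument underlying the self-normalized bound goes through with the determinant ratio measured relative to $V^{pi}_{[0\;\tau_q]}$ rather than $\lambda I$. A secondary bookkeeping point is the probability budget: strictly, the event $\Theta^i_*\in\mathcal{C}^{pi}_{\tau_q}$ itself only holds with high probability, so to keep the overall statement at $1-\delta$ one either conditions on the inherited set (treating $r_{\tau_q}$ as a given input, as the theorem statement implicitly does) or allocates $\delta$ across epochs via a union bound.
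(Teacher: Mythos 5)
Your proposal is correct and follows essentially the same route as the paper's proof: derive (\ref{eq:obswpe}) as the minimizer of the inherited-ellipsoid-regularized least-squares loss, substitute the true dynamics to decompose the error into a prior term and a self-normalized noise term, bound the former via $V^i_{[0,t]}\succeq V^{pi}_{[0\;\tau_q]}$ and the inherited set, and bound the latter via the self-normalized martingale inequality (Corollary~\ref{Self_normalized_Bound}) with $V^{pi}_{[0\;\tau_q]}$ in place of $\lambda I$. The measurability and probability-budget caveats you raise are real (the paper combines two probability-$1-\delta$ events without an explicit union bound), but they do not change the substance of the argument.
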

The second term in the right hand side of (\ref{eq:not_normalized_radius}) is related to the inherited information from the central ellipsoid while the first term is the contribution of learning at the running epoch. {Similarly, in the definition of $V^i_{[0,; t]}$ provided by (\ref{computation}), the term $V_{[0; \tau_{q}]}^{pi}$ is derived from inherited information, while $V^i_{[\tau_q; t]}$ represents the contribution of learning during the current running epoch.}

\subsection{Control design} The relaxed SDP problem which accounts for the uncertainity in parameter estimates is used for control design. Given the parameter estimate $\hat{\Theta}^i_t$ and the covariance matrix $V^i_{[0,\;t]}$ obtained by (\ref{eq:obswpe}) and (\ref{computation}), the relaxed SDP is formulated as follows:
\begin{align}
\begin{array}{rrclcl}
\displaystyle  \min & \multicolumn{1}{l}{\begin{pmatrix}
	Q & 0 \\
	0 & R^i
	\end{pmatrix}\bullet \Sigma}\\
\textrm{s.t.} & \Sigma_{xx}\succeq {\hat{\Theta^i}_t}^\top \Sigma\hat{\Theta}_t^i+W-\mu_i\Sigma\bullet {{V^i}^{-1}_{[0,\;t]}}I,\\
&\Sigma\succ 0. 
\end{array}\label{eq:RelaxedSDP}
\end{align}

where $\mu_i$ is any scalar satisfying  $\mu_i\geq r^{i}+(1+r^{i})\vartheta \|V_{[0,\;t]}^i\|^{1/2}$. And we define $\bar{\mu}\geq r^{i}+(1+r^{i})\vartheta \|V_{[0,\;T]}^i\|^{1/2}$.
Denoting $\Sigma (\hat{\Theta}_t, Q, R^i)$ as the optimal solution of (\ref{eq:RelaxedSDP}), the controller extracted from solving the relaxed SDP is deterministic and linear in state ($u^i=K(\hat{\Theta}^i_t, Q, R^i)x$) where 
\begin{align*}
K(\hat{\Theta}^i_t, Q, R^i)=\Sigma_{ux}(\hat{\Theta}^i_t, Q, R^i){\Sigma^{-1}_{xx}(\hat{\Theta}^i_t, Q, R^i)}.
\end{align*} 
The designed controller is strongly stabilizing in the sense of Definition 1 and the sequence of feedback gains is strongly sequentially stabilizing, as affirmed by Theorem \ref{Stability_thm17}. The detailed analysis of the relaxation of the primal SDP formulation, incorporating the constructed confidence ellipsoid to account for parameter uncertainty, is provided in Appendix \ref{sec:relaxing SDP} which is adapted from \cite{cohen2019learning} with modifications as the result of not normalizing the confidence ellipsoid.

Recurrent update of confidence sets may worsen the performance, so a criterion is needed to prevent frequent policy switches in an actuation epoch. As such, Algorithm \ref{alg:OSL311} updates the estimate and policy when the condition $\det(V^j_t) > 2\det(V^j_{\tau})$ is fulfilled where $\tau$ is the last time the policy got updated ($\tau$ is solely used in Algorithm 2 and should not be mixed with switch times in actuating modes; $\tau_i'$s). This is a standard step in OFU-based algorithms (\cite{abbasi2011regret}, \cite{chekan2021joint}, and \cite{cohen2019learning}).

\subsection{Projection Step}
In Algorithm 2 we apply the projection technique described in Section \ref{Sec_Proj} to transfer the information contained in the central ellipsoid to the mode that the system just switched to. The central ellipsoid is consistently updated in Algorithm 2 using the augmentation technique regardless of the active actuating mode. 

\section{Performance Analysis}
\label{sec:perf}
\subsection{Stability Analysis}

In this section, we first proceed to restate the results of \cite{cohen2019learning} showing that the sequence of feedback gains obtained by solving primal relaxed-SDP (\ref{eq:RelaxedSDP}) is strongly sequentially stabilizing. We then define a Minimum Average Dwell Time (MADT), denoted as $\tau_{MAD}$, which ensures the boundedness of the state norm in a switching scenario by mandating that the system remains in each mode for at least this duration. In the TTP case, this property would need to hold for the given sequence, i.e., $\tau_{j+1}-\tau_j\geq \tau_{MAD}$, for $j\in\{1,...,ns-1\}$. For MTD and FTC cases, on the other hand, the MADT constraints the times of switch.

Solving relaxed SDP (\ref{eq:RelaxedSDP}) is the crucial step of the algorithm, however when it comes to regret bound and stability analysis, the dual program 
\begin{align}
\begin{array}{rrclcl}
\displaystyle \max & \multicolumn{1}{l}{P\bullet W}\\
\textrm{s.t.} & \begin{pmatrix}
Q-P & 0 \\
0 & R^i
\end{pmatrix}+ \hat{\Theta}_t^i P {\hat{\Theta^i}_t}^\top \succeq \mu_i \|P\|_*{{V^i}^{-1}_{[0,\;t]}}\\
&P\succeq 0 
\end{array}\label{eq:RedSDP_DUAL}
\end{align}
and its optimal solution $P(\hat{\Theta}^i_t, Q, R^i)$ play a pivotal role. 

Recalling the sequential strong stability definition, the following theorem adapted from \cite{cohen2019learning} shows that the sequence of policies obtained by solving relaxed SDP (\ref{eq:RelaxedSDP}) for system actuating in mode $i$ are strongly sequentially stabilizing for the parameters $\kappa_i$ and $\gamma_i$ to be defined.

\begin{theorem} \label{Stability_thm17}
Let $P(\hat{\Theta}^i_t, Q, R^i)$'s for $t\in\{\tau_q, \tau_q+1,...\}$ be solutions of the dual program (\ref{eq:RedSDP_DUAL}) associated with an epoch occurring in time interval $[\tau_q\; \tau_{q+1}]$  and the mode $i$ with corresponding $\Theta^i_t$ and $V^i_t$. Let $\kappa_i=\sqrt{2\nu_i /\alpha_0\sigma^2}$ and $\gamma_i=1/2\kappa_i^2$ with $\mu_i\geq r^{i}+(1+r^{i})\vartheta ||V^i||^{1/2}$, then the sequence of policies $K(\hat{\Theta}^i_t, Q, R^i)$'s (associated with $P(\hat{\Theta}^i_t, Q, R^i)$'s) are $(\kappa_i,\gamma_i)-$ sequentially strongly stabilizing {with probability at least $1-\delta$.}
\end{theorem}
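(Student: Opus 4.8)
The plan is to follow the strong-stability argument of \cite{cohen2019learning}, using the dual optimal matrix $P_t := P(\hat{\Theta}^i_t, Q, R^i)$ as a Lyapunov certificate and exhibiting the factorization required by Definition 1 through $H_t := P_t^{-1/2}$. The first step is to control the spectrum of $P_t$. Since $W=\sigma^2 I$, the dual objective evaluates to $P_t\bullet W=\sigma^2\operatorname{Tr}(P_t)$, and by strong duality it equals the optimistic optimal cost; on the high-probability event of Theorem \ref{thm:Conficence_Set_Attacked} the true (augmented) parameter lies in the confidence set, so this cost is no larger than $J^i_*\leq\nu_i$, whence $\|P_t\|\leq\operatorname{Tr}(P_t)\leq\nu_i/\sigma^2$. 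For the matching lower bound I would read off the Riccati structure underlying the dual optimum to get $\lambda_{\min}(P_t)\gtrsim\alpha_0$ (the relaxation slack reducing it by at most a factor $2$, which is the source of the $\sqrt{2}$). Together these give $\|H_t\|=\lambda_{\min}(P_t)^{-1/2}\leq\sqrt{2/\alpha_0}=:B_0^i$, $\|H_t^{-1}\|=\lambda_{\max}(P_t)^{1/2}\leq\sqrt{\nu_i/\sigma^2}=:1/b_0^i$, and hence $\kappa=B_0^i/b_0^i\leq\sqrt{2\nu_i/(\alpha_0\sigma^2)}=\kappa_i$.

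Next I would establish the contraction. Rewriting the dual constraint (\ref{eq:RedSDP_DUAL}) as a robustified Lyapunov inequality for the closed loop $A_*+B^i_*K_t$ yields
\begin{align*}
(A_*+B^i_*K_t)^\top P_t (A_*+B^i_*K_t)\preceq P_t-\alpha_0 I .
\end{align*}
The key point, and the reason the relaxation term $\mu_i\|P_t\|_*{V^i}^{-1}$ appears, is that this inequality must hold for the \emph{true} closed-loop matrix rather than the optimistic one: that term is precisely what dominates the model mismatch $\Theta_*-\hat{\Theta}^i_t$ over $\mathcal{C}^{pi}_{\tau_q}(\delta)$ (this is where the non-normalized radius $r^i$ and the choice $\mu_i\geq r^i+(1+r^i)\vartheta\|V^i\|^{1/2}$ enter, replacing the normalized bound of \cite{cohen2019learning}). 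Conjugating by $H_t=P_t^{-1/2}$ gives $L_t^\top L_t\preceq(1-\alpha_0/\|P_t\|)I\preceq(1-\alpha_0\sigma^2/\nu_i)I$, so that after the estimate $\sqrt{1-x}\leq 1-x/2$ and the halving of the nominal margin by the relaxation one obtains $\|L_t\|\leq 1-\gamma_i$ with $\gamma_i=1/(2\kappa_i^2)=\alpha_0\sigma^2/(4\nu_i)$. The bound $\|K_t\|\leq\kappa_i$ I would extract from the primal covariance $\Sigma_t$: the cost lower bound $\alpha_0\operatorname{Tr}((\Sigma_t)_{uu})\leq\nu_i$, the Schur relation $(\Sigma_t)_{ux}(\Sigma_t)_{xx}^{-1}(\Sigma_t)_{xu}\preceq(\Sigma_t)_{uu}$, and the lower bound $(\Sigma_t)_{xx}\succeq(\sigma^2/2)I$ supplied by the relaxed primal constraint together control $K_t=(\Sigma_t)_{ux}(\Sigma_t)_{xx}^{-1}$.

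The remaining condition $\|H_{t+1}^{-1}H_t\|\leq 1+\gamma_i/2$ I would handle through the update schedule. Between two consecutive policy updates the controller and hence $P_t$ are held fixed, so $H_{t+1}=H_t$ and the inequality is trivial; the only nontrivial instants are the updates, which fire when $\det V^i$ doubles. There I would invoke a perturbation estimate for the relaxed SDP solution: consecutive confidence ellipsoids are nested and the determinant grows by at most a factor of two, which bounds the change in the problem data and, through continuity of the strictly feasible SDP, the change $\|P_{t+1}-P_t\|$; translating this into a bound on $\|P_{t+1}^{-1/2}P_t^{1/2}\|$ and choosing the doubling threshold appropriately delivers the factor $1+\gamma_i/2$.

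The main obstacle is twofold and concentrated in the last two steps. First, propagating the robustness margin correctly: one must verify that with the \emph{non-normalized} confidence set the term $\mu_i\|P_t\|_*{V^i}^{-1}$ still dominates the worst-case model error over $\mathcal{C}^{pi}_{\tau_q}(\delta)$, and that the slack it leaves is exactly what converts the nominal Riccati margin $2\gamma_i$ into the advertised $\gamma_i=1/(2\kappa_i^2)$ — this is the estimate most sensitive to the departure from \cite{cohen2019learning}. Second, the slowly-varying condition demands a quantitative continuity bound for the SDP solution across an update, which is the most technical part and the place where the doubling threshold and the constants $\kappa_i,\gamma_i$ must be reconciled.
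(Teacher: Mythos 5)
Your treatment of the first two strong-stability conditions follows the paper's route in Lemma \ref{Stability_lemma18} essentially step for step: the bound $\|P_t\|_*\leq \nu_i/\sigma^2$ from feasibility of the true-parameter solution, the transfer of the dual constraint from $\hat{\Theta}^i_t$ to $\Theta_*$ at the cost of a second $\mu_i\|P_t\|_*{V^i}^{-1}$ term (Lemma \ref{matix_perturbation_Bound1}), the absorption of that term into $Q\geq\alpha_0 I$ via $V^i\geq\lambda I$ with $\lambda=4\bar{\nu}\bar{\mu}/\alpha_0\sigma_{\omega}^2$, and the resulting contraction $\|L_t\|\leq\sqrt{1-\kappa_i^{-2}}\leq 1-\gamma_i$ with the factor-of-two bookkeeping you describe. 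One small difference: the paper reads $\|K_t\|\leq\kappa_i$ directly off the dual inequality $P_t\geq(\alpha_0/2)K_t^\top K_t$ together with $\|P_t\|\leq\nu_i/\sigma_{\omega}^2$, rather than going through the primal covariance and a Schur complement; both routes work.

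The genuine gap is in your handling of the third condition, $\|H_{t+1}^{-1}H_t\|\leq 1+\gamma_i/2$. You propose to treat update instants by a generic perturbation/continuity estimate for the relaxed SDP, keyed to the determinant-doubling rule. That is not what the paper does, and it is doubtful it can be made to work: the doubling condition controls $\det V^i$ but not the operator-norm change of ${V^i}^{-1}$, of the center $\hat{\Theta}^i_t$, or of the radius $r^i$ across an update, and optimal solutions of an SDP are not in general Lipschitz in the problem data with constants one could tie to $\gamma_i$. The paper's argument (Lemma \ref{upper_ and lower bound of P}, following Lemma 19 of \cite{cohen2019learning}) is structural rather than perturbative: by Lemma \ref{matix_perturbation_Bound1}, every feasible point of the relaxed dual at any time is a subsolution of the \emph{true-parameter} Riccati inequality, hence $P(\hat{\Theta}^i_t, Q, R^i)\leq P(\Theta^i_*, Q, R^i)$ for all $t$; conversely a slightly deflated $P(\Theta^i_*, Q, R^i)$ remains feasible at time $t+1$, giving $P(\Theta^i_*, Q, R^i)\leq P(\hat{\Theta}^i_{t+1}, Q, R^i)+\frac{\alpha_0\gamma_i}{2}I$. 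Sandwiching $P_t$ and $P_{t+1}$ around the same fixed matrix, and using $P_{t+1}\geq(\alpha_0/2)I$, yields $\|P_{t+1}^{-1/2}P_tP_{t+1}^{-1/2}\|\leq 1+\gamma_i$ and hence the required $1+\gamma_i/2$ after a square root, with no continuity assumption and no role for the doubling threshold. You should replace your perturbation step with this monotonicity argument.
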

\begin{proof} 
This theorem is a direct results of Lemmas \ref{Stability_lemma18} and \ref{upper_ and lower bound of P} provided and proved in Appendix.
\end{proof}
Recalling Definition 1, a sequentially stabilizing policy keeps the states of system bounded in an epoch. The following lemma gives this upper bound in terms of the stabilizing policy parameters $\kappa_i$ and $\gamma_i$.

\begin{lemma}\label{thm:res_sequential_stability_resp}
For the system evolving in an arbitrary actuating mode $i$ in time interval of $[\tau_q,\tau_{q+1}]$ by applying sequentially stabilizing controllers for any $t\in [\tau_q,\tau_{q+1}]$ the state is upper-bounded by
	\begin{align}
\|x_t\|\leq \kappa_i e^{-\gamma_i (t-(\tau_p+1))/2}||x_{\tau_q}||+\frac{20\kappa_i}{\gamma_i} \sigma_{\omega}\sqrt{n\log\frac{t-\tau_q}{\delta}} 
\end{align}
\end{lemma}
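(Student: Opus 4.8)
The plan is to reduce this claim to the deterministic trajectory bound (\ref{eq:rsp_seq_st}) established for any strongly sequentially stabilizing policy sequence, and then to replace the worst-case disturbance term appearing there by a high-probability bound derived from the sub-Gaussian assumption. First I would invoke Theorem \ref{Stability_thm17}, which guarantees that the sequence of controllers $K(\hat{\Theta}^i_t, Q, R^i)$ produced during the epoch $[\tau_q,\tau_{q+1}]$ is $(\kappa_i,\gamma_i)$-sequentially strongly stabilizing. This is exactly the hypothesis under which (\ref{eq:rsp_seq_st}) applies, so instantiating that inequality with initial time $t_0=\tau_q$ (and re-indexing the clock so the epoch begins at $\tau_q$, which converts the decay exponent $(t-1)/2$ into $(t-(\tau_q+1))/2$) yields
\[
\|x_t\|\leq \kappa_i e^{-\gamma_i (t-(\tau_q+1))/2}\|x_{\tau_q}\|+\frac{2\kappa_i}{\gamma_i}\max_{\tau_q\leq s\leq t}\|\omega_s\|.
\]
The first term is already in the asserted form, so the whole argument reduces to controlling the second term, namely the maximal disturbance norm over the epoch. (I note that the $\tau_p$ in the statement appears to be a typo for $\tau_q$.)

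To bound $\max_{\tau_q\leq s\leq t}\|\omega_s\|$ I would rely on the component-wise sub-Gaussianity of Assumption \ref{Assumption 1}(1.3). For a fixed $s$, a standard sub-Gaussian concentration argument applied to $\|\omega_s\|^2=\sum_{j=1}^{n}\omega_j(s)^2$ gives a tail of the form $\mathbb{P}\big(\|\omega_s\|> c\,\sigma_\omega\sqrt{n\log(1/\delta')}\big)\leq \delta'$ for an absolute constant $c$. Taking a union bound over the $t-\tau_q$ time indices in the epoch, i.e. setting $\delta'=\delta/(t-\tau_q)$, then produces, with probability at least $1-\delta$,
\[
\max_{\tau_q\leq s\leq t}\|\omega_s\|\leq 10\,\sigma_\omega\sqrt{n\log\tfrac{t-\tau_q}{\delta}},
\]
where the constant $10$ is chosen generously so as to absorb the sub-Gaussian constant $c$ together with the slack incurred in the union bound. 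Substituting this into the displayed trajectory bound and collecting the prefactor $\tfrac{2\kappa_i}{\gamma_i}\cdot 10=\tfrac{20\kappa_i}{\gamma_i}$ yields precisely the claimed inequality, on the same event of probability at least $1-\delta$.

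The step I expect to be the main obstacle is the concentration estimate: passing carefully from per-component sub-Gaussianity to a tail bound on the Euclidean norm of the full noise vector, and then tracking the constants through the union bound so that the resulting coefficient is guaranteed to be no larger than the stated $20$. Everything else is bookkeeping, provided Theorem \ref{Stability_thm17} and the re-indexing of (\ref{eq:rsp_seq_st}) are applied as above.
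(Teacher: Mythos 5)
Your proposal matches the paper's proof essentially step for step: the paper likewise instantiates the strong-stability trajectory bound (\ref{eq:rsp_seq_st}) over the epoch and then bounds $\max_{\tau_q\leq s\leq t}\|w_s\|$ by $10\sigma_{\omega}\sqrt{n\log\frac{t-\tau_q}{\delta}}$ via a concentration-plus-union-bound argument (the paper names Hanson--Wright, which is exactly the quadratic-form concentration you describe for $\|\omega_s\|^2$). Your reading of $\tau_p$ as a typo for $\tau_q$ is also consistent with the paper.
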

\begin{proof}
The proof is straight forward but for the sake of completeness we provide it here. Applying the sequentially stabilizing policy implies (\ref{eq:rsp_seq_st}), then using Hanson-Wright concentration inequality with probability at least $1-\delta/2(t-\tau_q)$ yields
\begin{align}
\max\limits_{\tau_q\leq s\leq t }||w_s||\leq 10\sigma_{\omega}\sqrt{n\log\frac{t-\tau_q}{\delta}}, \label{eq:upp_bound_nois}
\end{align}
which completes the proof.
\end{proof}

{Having stable subsystems is a necessary but not a sufficient condition to ensure the stability of a switched system. The switches must occur at a sufficiently slow rate on average to guarantee stability. Therefore, we need to impose a requirement that the switches occur at a sufficiently slow rate. The following theorem establishes a lower bound on the time elapsed between two consecutive switches, referred to as the "dwell time," which serves as a sufficient condition for ensuring boundedness.}

The following theorem establishes a lower bound on the average dwell time.



 {\begin{theorem}\label{thm:minimum_average_dwell_corrected} 
Consider the switch sequence $\mathcal{I} = [i_{\tau_1}, i_{\tau_2}, \ldots, i_{\tau_{ns}}]$, and let the system (\ref{eq:dynam_by_theta1}) be controlled by $(\kappa_{i_{\tau_j}}, \gamma_{i_{\tau_j}})$-strongly sequentially stabilizing controllers for any mode $i_{\tau_j} \in \mathcal{I}$. Then, for a user-defined $0 < \mathcal{X} < 1$ by requiring $\tau_{j+1} - \tau_j \geq \tau_{\text{MAD}}$, where
\begin{align}
    \tau_{\text{MAD}} := \lceil \frac{\ln \kappa^* - \ln(1-\chi)}{-\ln(1-\gamma^*)} \rceil
    \label{eq:tau_a_slow_cor} 
\end{align}
in which
\begin{align}
    \kappa^* = \max_{i \in \{1, \ldots, |\mathcal{B}^*|\}} \kappa_i, \quad \text{and} \quad \gamma^* = \frac{1}{2\kappa^{*}},
\end{align}
the state norm will be bounded as follows:
\begin{align}
    \|x_t\| \leq \kappa^* (1 - \mathcal{X})^t\|x_0\| + \frac{\bar{L}}{\mathcal{X}} \label{eq:upstate}
\end{align}
where
\begin{align*}
    \bar{L} := \frac{20\kappa^*}{\gamma^*}\sigma_{\omega}\sqrt{n\log\frac{t}{\delta}}. 
\end{align*}  
\end{theorem}}

 \subsection {Regret Bound Analysis}

Taking into account the fact that any confidence ellipsoid $\mathcal{C}_t^i(\delta)$ can be obtained through projecting the central ellipsoid $\mathcal{C}_t(\delta)$, we can define the following unified “good event”, using the denomination of \cite{abbasi2011regret}

\begin{align}
\mathcal{E}_t=\{&\operatorname{Tr}\big(\Delta^T_sV_{[0,\; s]}\Delta_s\big)\leq r_s,\\
\nonumber &\|z_s\|^2\leq 4 
	{\kappa^*}^2(1-\chi)^{2s}\|x_0\|^2+\frac{\bar{\Upsilon}}{2{\gamma^*}^2},\;\forall s=1,...,t \}\label{eq:GoodEven_state_unat} 
\end{align}
where
	\begin{align*}
	\bar{\Upsilon}=\frac{3200{\kappa^*}^6\sigma^2n}{\chi^2}\log \frac{T}{\delta}.
	\end{align*}
 
The upper bound for $\|z_t\|^2$ is obtained from (\label{eq:upstate}), as derived in the proof of Lemma \ref{define_parBeta} in Appendix \ref{BB}. The event $\mathcal{E}_t$ holds for any sequence of switches and is used for carrying out regret bound analysis. 

By an appropriate decomposition of regret on the event $\mathcal{E}_t$, for an arbitrary switch sequence $\mathcal{I}=[i_{\tau_1}\; i_{\tau_2}\;...i_{\tau_{ns}}]$, the following upper-bound for regret is obtained (adapted from \cite{cohen2019learning}):
\begin{align}
R_T \leq R_1+R_2+R_3+R_4 
\end{align}
where:
	\begin{align}
	\nonumber R_1=\sum _{t=0}^{T-1} \big(&x_{t}^T P(\hat{\Theta}^{\sigma(t)}, Q, R^{\sigma(t)})x_{t}-\\
	&x_{t+1}^T P(\hat{\Theta}^{\sigma(t)}, Q, R^{\sigma(t)})x_{t+1}\big)1_{\mathcal{E}_t},\label{eq:R1} 
	\end{align}
	\vspace{-0.5 cm}
	\begin{align}
   \nonumber  R_2=\sum _{t=0}^{T} &w^T_tP(\hat{\Theta}^{\sigma(t)}, Q, R^{\sigma(t)})\big(A_*+\\
    &B^{\sigma(t)}_*K(\hat{\Theta}^{\sigma(t)}, Q, R^{i(t)})\big)x_t 1_{\mathcal{E}_t},\label{eq:R2}
    \end{align}
	\vspace{-0.5 cm}
	\begin{align}
     \nonumber R_3 =\sum _{t=0}^{T} \big(& w^T_tP(\hat{\Theta}^{\sigma(t)}, Q, R^{\sigma(t)})w_t-\\
     &\sigma^2\|P(\hat{\Theta}^{\sigma(t)}, Q, R^{i(t)})\|_*\big)1_{\mathcal{E}_t},\label{eq:R3}
     \end{align}
	\vspace{-0.5 cm}
	\begin{align}
     R_4 =\frac{4\nu}{\sigma_{\omega}^2}\sum _{t=0}^{T} \mu_{\sigma(t)}\big(z^T_t{v^{\sigma(t)}_{[0,\;t]}}^{-1}_tz_t\big)1_{\mathcal{E}_t},\label{eq:R4}  
	\end{align}
{in which $\sigma(t) = i_{\tau_j}$ for $\tau_j \leq t \leq \tau_{j+1}$ and $1_{\mathcal{E}_t}$ is the indicator function.} By bounding the above terms separately, the overall imposed regret, not-counting the warm-up regret, can be obtained. The following theorem summarizes the result.

\begin{theorem}\label{thm:RegretBound}
	The regret of projection-equipped algorithm in horizon $T$ and with $ns$ number of switches is  $\mathcal{\bar{O}}\big(\sqrt{T}\big)+\mathcal{O}\big((ns)\sqrt{T}\big)$ {with probability at least $1-\delta$.} Moreover, this strategy is more efficient than repeatedly and naively applying the basic OFU algorithm.
\end{theorem}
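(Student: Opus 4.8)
The plan is to bound each of the four terms in the decomposition $R_T \leq R_1+R_2+R_3+R_4$ separately on the good event $\mathcal{E}_t$, and then to dispose of the low-probability complement $\mathcal{E}_t^c$. Since $\mathcal{E}_t$ holds with probability at least $1-\delta$ and, by Theorem \ref{thm:minimum_average_dwell}, the state norm is uniformly bounded by $e^{-\chi t}\|x_0\|+U_{\Omega}$ on this event, the instantaneous cost $c_t^{i(t)}$ stays controlled for every mode and every switch pattern. I would therefore choose $\delta \sim 1/T$ so that the contribution of $\mathcal{E}_t^c$ is absorbed into a $\mathcal{\bar{O}}(1)$ term, and carry out the rest of the analysis exclusively on $\mathcal{E}_t$.

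For $R_1$ in (\ref{eq:R1}) I would exploit the telescoping structure: within a single episode the matrix $P(\hat{\Theta}^{i(t)},Q,R^{i(t)})$ is held fixed, so consecutive summands cancel and only the boundary terms $x_\tau^T P x_\tau$ at episode transitions survive. Each boundary term is bounded by $\|P\|\,\|x_\tau\|^2$, which is $\mathcal{\bar{O}}(1)$ by the uniform state bound of Lemma \ref{thm:res_sequential_stability_resp} and the upper bound on $\|P\|$ underlying Theorem \ref{Stability_thm17}. The number of episodes is the number of determinant-doubling updates, logarithmic in $T$, plus the $(d-1)$ forced restarts at the mode switches, giving $R_1=\mathcal{\bar{O}}(1)+\mathcal{O}(d-1)$. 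The terms $R_2$ and $R_3$ are martingale-difference sums: in (\ref{eq:R2}) the factor $w_t$ has zero conditional mean given $\mathcal{F}_t$ while the remaining factor is $\mathcal{F}_t$-measurable, and in (\ref{eq:R3}) one has $\mathbb{E}[w_t^T P w_t \mid \mathcal{F}_t]=\sigma^2\operatorname{Tr}(P)=\sigma^2\|P\|_*$ because $P\geq 0$. I would bound both by a self-normalized, Azuma-type concentration inequality; using the uniform bounds on $\|x_t\|$ and $\|P\|$ together with the sub-Gaussian tail in Assumption \ref{Assumption 1}, each is $\mathcal{\bar{O}}(\sqrt{T})$, feeding the mode-independent part of the claim.

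The decisive term is $R_4$ in (\ref{eq:R4}). Here I would apply the elliptical-potential (log-determinant) lemma to $\sum_t z_t^T (V^{i(t)}_{[0,t]})^{-1} z_t$ \emph{per epoch}, since by Theorem \ref{thm:Conficence_Set_Attacked} the covariance $V^i_{[0,t]}$ is reinitialized at each switch to the projected matrix $V^{pi}_{[0,\tau_q]}$. Within epoch $q$ the sum telescopes to $\log\big(\det V^i_{[0,\tau_{q+1}]}/\det V^{pi}_{[0,\tau_q]}\big)=\mathcal{O}(\log T)$, and since $\mu_{i(t)}$ scales like $\mathcal{\bar{O}}(\sqrt{T})$ (its $\|V^i\|^{1/2}$ factor grows like $\sqrt{T}$ while $r^i$ grows only logarithmically), each epoch contributes $\mathcal{\bar{O}}(\sqrt{T})$; summing over the $d$ epochs gives the $\mathcal{O}((d-1)\sqrt{T})$ contribution, the first epoch being folded into $\mathcal{\bar{O}}(\sqrt{T})$. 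Combining all four terms yields $R_T=\mathcal{\bar{O}}(\sqrt{T})+\mathcal{O}((d-1)\sqrt{T})$.

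The main obstacle I anticipate is precisely this per-epoch handling of $R_4$: the elliptical-potential argument presupposes a monotonically accumulating covariance, which the projection step breaks by resetting $V^i$ at each switch. The delicate point is to show that the \emph{inherited} information, carried by $V^{pi}_{[0,\tau_q]}$ and by the term $\sqrt{r_{\tau_q}}$ in (\ref{eq:not_normalized_radius}), keeps both the per-epoch log-determinant ratio and the radius $r^i$ from accumulating across many switches, so that every epoch behaves like a warm-started run rather than a cold restart. This same feature drives the final comparison with naive OFU: repeatedly invoking OSLO pays, at every switch, both a strictly positive warm-up regret of order $T_0$ and the regret of a strictly wider from-scratch confidence set, whereas the projection strategy pays neither, so a term-by-term comparison makes the projection regret provably smaller.
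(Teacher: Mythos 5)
Your proposal is correct and follows essentially the same route as the paper: the same decomposition $R_T\leq R_1+R_2+R_3+R_4$ on the good event $\mathcal{E}_t$, with $R_1$ bounded by counting policy updates plus the $(d-1)$ forced restarts, $R_2$ and $R_3$ by martingale concentration, and $R_4$ by a per-epoch log-determinant telescoping whose cross-switch accumulation is controlled exactly by the inherited projected covariance (the paper's Lemma \ref{frac_of_determinant_log} bounding $\log\big(\det V_{[0,\tau_j]}/\det V^{pi}_{[0,\tau_j]}\big)$), with the $\sqrt{T}$ factor coming from $\bar{\mu}$. Your comparison with naive OFU via the strictly tighter confidence set (since $V^i_{[0,t]}\succeq \bar{V}_t$) is also the paper's argument for the second statement.
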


\section {Modifications to Address the MTD and FTC scenarios} \label{Sec:Extension}

Algorithm 2, with minor adjustments, can also address the FTC and MTD problems where switch sequence information (actuating modes and switch times) are unavailable. This section proposes possible directions to adjust the strategy to the mentioned applications. 

For the FTC application, leaving aside switch time, which is supposedly determined by an external unit (e.g., a detector), the remaining challenge is determining the next actuating mode to switch to. From the set of available modes $\mathcal{M}$, the mode that guarantees to achieve lower regret seems a rational choice. Proofs of Lemmas \ref{Thm.BoundR1} and \ref{thm:boundR4}, in the regret bound analysis section, justify that switching to a mode whose covariance matrix has the highest determinant value guarantees achieving the lower regret bound. In other words, at the time $t$ the best mode which guarantees better regret is the mode $j^*:=\arg_{i\in \mathcal{M}}\max_{i}\det(V^{pi}_{[0,\;t]})$ where $V^{pi}_{[0,\;t]}$ is the projection of central ellipsoid on the space of mode $i \in \mathcal{M}$. 


In the MTD application, where the frequent switch is the central part of the strategy, in addition to determining the next actuating mode, we have to specify the switch instances so that the stability is not violated. To address the latter, we design a MADT (see Theorem \ref{thm:minimum_average_dwell_corrected}) and require the algorithm to make the switches "slow enough" by waiting on a mode for at least the MADT unit of time. Regarding the choice of the next mode, picking merely the well-learned one (by evaluating the covariance matrices) does not seem practical. The reason is that this strategy pushes the algorithm to choose the modes with more actuators (which have greater values of covariance matrix determinant) and technically ignore most of the modes, so the set of candidate modes will be smaller. In turn, this means less unpredictability and helps the attacker perform effective intrusion. To resolve this issue, one may decide to switch by optimizing the weighted sum of determinant values of covariance matrices (of the modes) and the value of some unpredictability measures. This idea basically gives a trade-off between lower regret and unpredictability. The authors in \cite{kanellopoulos2019moving} use an information entropy measure produced by a probability simplex when the system is known.

\section {Numerical Experiment}
\label{sec:numEx}

In this section, to demonstrate the effectiveness of the proposed ``learn and control while switching" algorithm compared to repeatedly applying the basic OFU algorithm, we use a third-order linear system with three actuators as of the following plant and control matrices:
\begin{align*}
A_*=\begin{pmatrix}
	10.4 & 0 & -2.7\\
	5.2 & -8.1 & 8.3\\
	0 & 0.4 & -9.0
	\end{pmatrix},
\;	B_*=\begin{pmatrix}
	-4.7 & 6.1 & -2.9\\
	-5.0 & 5.8 & 2.5\\
	2.9 & 0 & -7.2
	\end{pmatrix}
\end{align*}
which are unknown to the learner. Weighting cost matrices are chosen as follows:
 \begin{align*}
 Q=\begin{pmatrix}
 6.5 & -0.8 & -1.4\\
 -0.8 & 5.7 & 2.6\\
-1.4 & 2.6 & 25
 \end{pmatrix},\; R=\begin{pmatrix}
 40 & 10 & 16\\
 10 & 28 & 8\\
 16 & 8 & 48
 \end{pmatrix}.
 \end{align*}

Figure 2 shows the regret bound of the algorithm for the switch sequence of $\{1,2,1\}$ where the mode 2 is specified by
\begin{align*}
B_*^2=\begin{pmatrix}
	-4.7 & 6.1\\
	-5.0 & 5.8\\
	2.9 & 0
	\end{pmatrix},\;\; R^2=\begin{pmatrix}
 40 & 10\\
 10 & 28\\
 \end{pmatrix}.
\end{align*}

We let the switch happens at times $\tau_1=0$, $\tau_2=5000$, and $\tau_3=10000$ seconds. We intentionally picked this switch sequence to show the regret when the confidence set is projected to a lower-dimensional space and when the system switches back to a previously explored mode. We further assume the initial estimate $\Theta_0$ where $\|\Theta_0 -\Theta_*\|\leq \epsilon$ is given and let the naive algorithm use this estimate at time $\tau_2$ when the system switches to the second mode. We set $T=15000$, $\delta =0.5/T$, $\sigma_{\omega}=0.003$, $\nu=0.8$ and $\epsilon=0.0001$. 

At the time $\tau_2=5000$ seconds, when the switch happens, the projection-equipped algorithm outperforms the naive one thanks to the learned information during the first epoch, which is restored through the projection tool. At the time $\tau_3=10000$ seconds, the system switches back to mode 1. During this last epoch, we only see a slight improvement which is because the switch in policy, triggered by fulfillment of the condition $\det (V_{[0,t]})>\det (V_{[0,\tau]})$ (see the line 8 of algorithm 2) rarely happens as the value of $\det (V_{[0,t]})$ increases. In our case, we only have one time of policy improvement during the epoch $[\tau_3,\; T]$.

\begin{figure}[thpb]
	\centering
	\vspace{5pt}
	\includegraphics[scale=.4]{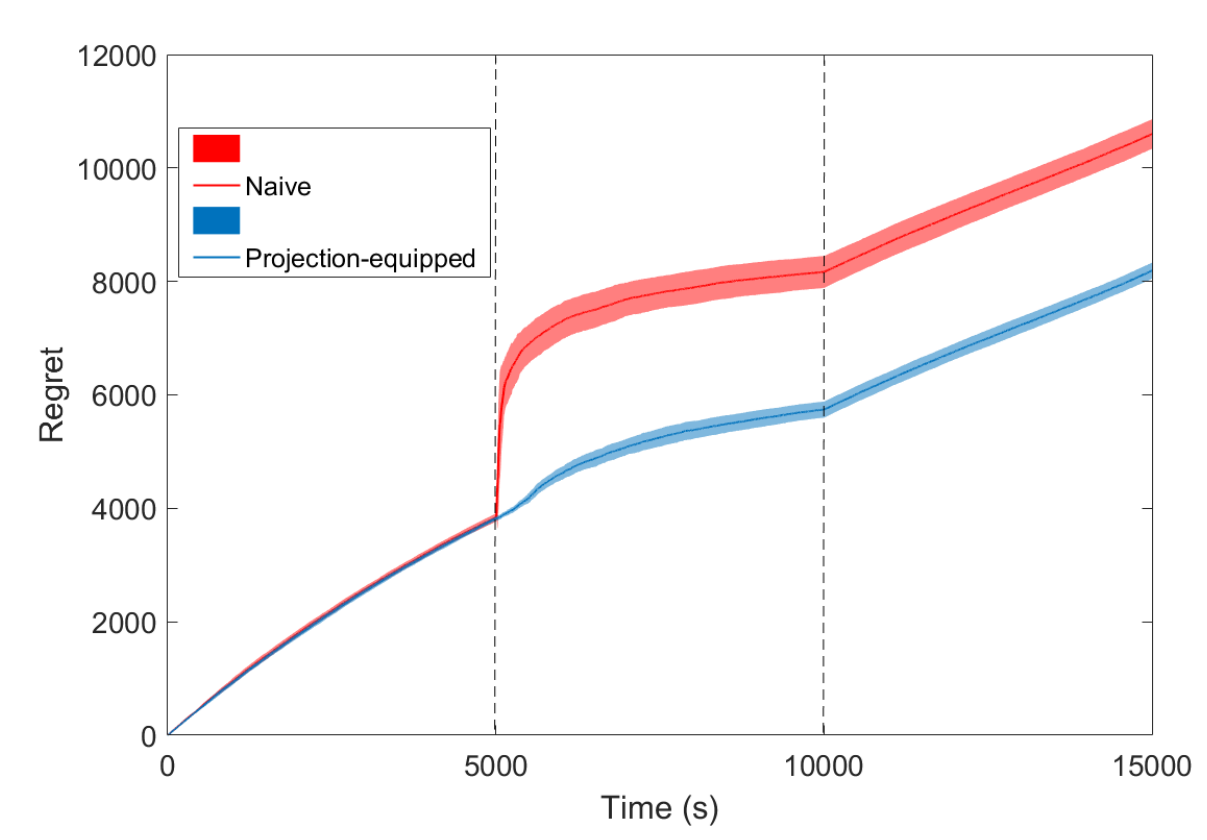}
	\vspace{-0.3 cm}
	\caption{Regret bound of the projection-based algorithm vs naive algorithm }
	\label{Regret_Bound_switched}
\end{figure}

\section{Summary and Conclusion} \label{sec:Conclusion}
In this work, we equipped the OFU algorithm with a projection tool to improve the "learn and control" task when switching between the different subsets of actuators of a system is mandated. This improvement is proved mathematically in a regret-bound sense and shown throughout the simulation compared to a case when the basic OFU algorithm is repeatedly applied. We presented this idea for a time-triggered network control system problem when the sequences of switch times and modes are upfront. We also discussed the possibilities of extending the proposed algorithm to applications such as fault-tolerant control and moving target defense problems for which there is a freedom to pick the best actuating mode when it is a time to switch. To address this, we proposed a mechanism that offers the best-actuating mode, considering how rich the learned data is. A minimum average dwell-time is also designed to address stability concerns in applications such as moving target defense that frequent switches might cause instability. Moreover, by applying recent results on the existence of a stabilizing neighborhood, we have determined an initial exploration duration such that the efficiency for the warm-up phase is guaranteed.

\bibliographystyle{IEEEtran}
\bibliography{ref}

\appendices




\section{Further Analysis}
\label{partD}
In this section, we dig further and provide proofs, rigorous analysis of the algorithms, properties of the closed-loop system, and regret bounds. 

\subsection{Technical Theorems and Lemmas}
The following lemma, adapted from \cite{abbasi2011online} gives a self-normalized bound for scalar-valued martingales. 
 
\begin{lemma} \label{Self_normalized_Bound_1}
	 Let $F_k$ be a filtration, $z_k$ be a stochastic process adapted to $F_k$ and $\omega^i_k$ (where $\omega^i_k$ is the $i-$th element of noise vector $\omega_k$) be a real-valued martingale difference, again adapted to filtration $F_k$ which satisfies the conditionally sub-Gaussianity assumption (Assumption \ref{Assumption 1}) with known constant $\sigma_{\omega}$. 
	 Consider the martingale and co-variance matrices:
	\begin{align*}
S^i_t:=\sum _{k=1}^{t} z_{k-1}\omega^i_k, \quad V_{t}=\lambda I+\frac{1}{\beta}\sum _{s=1}^{t-1}z_{t} z_{t}^T
	\end{align*}
	
then with probability of at least $1-\delta$, $0<\delta<1$ we have,
\begin{align}
\left\lVert S^i_t\right\rVert^2_{V_{t}^{-1}} \leq 2 \sigma_{\omega}^2\log \bigg(\frac{\det(V_t) }{\delta \det(\lambda I)}\bigg)
\end{align}	
\end{lemma}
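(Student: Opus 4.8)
The plan is to follow the method of mixtures (pseudo-maximization) of \cite{abbasi2011online}: first build a one-parameter family of exponential supermartingales indexed by a vector $\eta$, then average over $\eta$ against a Gaussian prior to produce a single nonnegative supermartingale whose value encodes the self-normalized quantity $\|S^i_t\|^2_{V_t^{-1}}$, and finally convert a maximal inequality for that supermartingale into the stated high-probability bound.

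First I would fix $\eta\in\mathbb{R}^{d}$ (with $d$ the dimension of $z_k$) and define
\[
M_t^\eta=\exp\left(\frac{1}{\sigma_\omega}\,\eta^\top S_t^i-\frac{1}{2}\,\eta^\top\Big(\textstyle\sum_{k=1}^{t}z_{k-1}z_{k-1}^\top\Big)\eta\right).
\]
Applying the conditional sub-Gaussianity of $\omega^i_k$ from Assumption \ref{Assumption 1} at each step with the scalar $\gamma=\eta^\top z_{k-1}/\sigma_\omega$, one obtains $\mathbb{E}[M_t^\eta\mid F_{t-1}]\le M_{t-1}^\eta$, so $(M_t^\eta)_t$ is a nonnegative supermartingale with $\mathbb{E}[M_t^\eta]\le 1$ for every $t$.

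Next I would mix over $\eta$: let $\eta\sim\mathcal{N}(0,\Sigma)$ with $\Sigma^{-1}$ chosen proportional to $\lambda I$ (and absorbing the $\beta$-scaling present in $V_t$), and set $M_t=\int M_t^\eta\,dh(\eta)$. By Tonelli's theorem together with the per-$\eta$ supermartingale property, $M_t$ is again a nonnegative supermartingale with $\mathbb{E}[M_t]\le 1$. Completing the square in the Gaussian integral evaluates $M_t$ in closed form, and with the prior covariance tuned so that the matrix $V_t=\lambda I+\frac{1}{\beta}\sum_s z_s z_s^\top$ emerges, this yields
\[
M_t=\sqrt{\frac{\det(\lambda I)}{\det(V_t)}}\;\exp\left(\frac{1}{2\sigma_\omega^2}\,\|S^i_t\|^2_{V_t^{-1}}\right).
\]
I would then invoke Ville's maximal inequality for nonnegative supermartingales, $\mathbb{P}\big(\sup_t M_t\ge 1/\delta\big)\le \delta\,\mathbb{E}[M_0]\le\delta$; on the complementary event, taking logarithms and rearranging gives $\|S^i_t\|^2_{V_t^{-1}}\le 2\sigma_\omega^2\log\!\big(\det(V_t)/(\delta\det(\lambda I))\big)$ uniformly in $t$ with probability at least $1-\delta$.

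The main obstacle is twofold. The delicate point is justifying that the averaged process $M_t$ is a genuine supermartingale (not merely a process with expectation bounded by one), since it is precisely this property that licenses the anytime, uniform-in-$t$ maximal inequality; this requires the interchange of conditional expectation and integration over $\eta$ via Tonelli. The second source of care is bookkeeping of the constants $\lambda$ and the $1/\beta$ scaling so that the Gaussian integral reproduces exactly the weighting matrix $V_t$ appearing in the statement; I would choose the prior covariance expressly to match $V_t$ rather than verify the arithmetic after the fact.
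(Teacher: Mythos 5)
Your proposal is correct in substance, but it takes a different (and much longer) route than the paper: the paper's proof of this lemma is a one-line citation of the self-normalized martingale bound of Abbasi-Yadkori et al.\ (Theorem~3/Corollary~1 of the cited reference), which directly yields
\[
\left\lVert S^i_t\right\rVert^2_{V_t^{-1}} \le 2\sigma_\omega^2 \log\Big(\tfrac{\det(V_t)^{1/2}\det(\lambda I)^{-1/2}}{\delta}\Big),
\]
followed by the observation that $\log\big(\det(V_t)/\det(\lambda I)\big)\ge \tfrac12\log\big(\det(V_t)/\det(\lambda I)\big)$ because $V_t\succeq\lambda I$ forces the ratio to be at least one. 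You instead re-derive that cited theorem from scratch via the method of mixtures (per-$\eta$ exponential supermartingale, Gaussian mixing, Ville's inequality), which is exactly the machinery behind the reference; what your version buys is a self-contained and anytime-valid proof, at the cost of reproving a standard result.

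Two small points deserve attention. First, your Ville step actually delivers the bound with $\det(V_t)^{1/2}$ and $\det(\lambda I)^{1/2}$ inside the logarithm; passing to the stated bound with the unhalved determinants is not mere ``rearranging'' but requires the extra monotonicity observation $\det(V_t)\ge\det(\lambda I)$ --- precisely the step the paper makes explicit. Second, the $1/\beta$ factor in $V_t$ cannot be absorbed into the prior covariance as you suggest: the Gram-matrix term $\tfrac12\eta^\top\big(\sum_k z_{k-1}z_{k-1}^\top\big)\eta$ in the exponent of $M_t^\eta$ is pinned down by the conditional sub-Gaussian bound and is not rescalable, while the prior only controls the $\lambda I$ block. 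For $\beta>1$ one has $V_t\preceq \lambda I+\sum_k z_{k-1}z_{k-1}^\top$, so the weighted norm with $V_t^{-1}$ is \emph{larger} than the one your supermartingale controls, and the claimed inequality does not follow without further argument. The paper is equally loose on this point (it cites a $\beta=1$ result and ultimately sets $\beta=1$ in Lemma~\ref{define_parBeta}), but since you chose to prove the statement from first principles, you inherit the obligation to either restrict to $\beta=1$ or handle the scaling explicitly.
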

\begin{proof}
The proof can be found in Appendix \ref{partF}.
\end{proof}

The following corollary generalizes the Lemma \ref{Self_normalized_Bound} for a vector-valued martingale which will be later used in Theorem \ref{thm:Conficence_Set_Attacked} to derive the confidence ellipsoid of estimates.

\begin{corollary} \label{Self_normalized_Bound}
	Under the same assumptions as in the Lemma \ref{Self_normalized_Bound}, $\beta>1$ and defining
\begin{align*}
S_t=Z^T_tW_t=\sum _{k=1}^{t} z_{k-1}\omega^T_k
\end{align*}
with probability at least $1-\delta$,
\begin{align}
\operatorname{Tr} (S^T_tV^{-1}_tS_t)&\leq 2\sigma_{\omega}^2 n\log \bigg(\frac{n \det(V_t) }{\delta\det(\lambda I)}\bigg)\label{eq:mart_vec}.
\end{align}
\end{corollary}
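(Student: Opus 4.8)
The plan is to reduce this matrix-valued statement to $n$ copies of the scalar bound already established in Lemma \ref{Self_normalized_Bound_1}, and then to pay for the $n$ separate high-probability events with a single union bound. The starting observation is that $S_t=\sum_{k=1}^{t}z_{k-1}\omega_k^\top$ is a matrix whose $i$-th column is exactly the scalar-valued martingale $S^i_t=\sum_{k=1}^{t}z_{k-1}\omega^i_k$ treated in the lemma, since $\omega_k^\top$ contributes its $i$-th component $\omega^i_k$ to column $i$. Writing $S_t=[\,S^1_t\;\cdots\;S^n_t\,]$ and noting that the same $V_t$ (which does not depend on $i$) weights every column, the weighted Gram quantity factors column-by-column:
\begin{align*}
\operatorname{Tr}(S_t^\top V_t^{-1}S_t)=\sum_{i=1}^{n}(S^i_t)^\top V_t^{-1}S^i_t=\sum_{i=1}^{n}\|S^i_t\|^2_{V_t^{-1}},
\end{align*}
because the $i$-th diagonal entry of $S_t^\top V_t^{-1}S_t$ is precisely $(S^i_t)^\top V_t^{-1}S^i_t$.

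Next I would apply Lemma \ref{Self_normalized_Bound_1} to each of the $n$ components separately, but with the failure probability rescaled from $\delta$ to $\delta/n$. This is legitimate because each component $\omega^i_k$ individually satisfies the conditional sub-Gaussianity hypothesis of Assumption \ref{Assumption 1} with the same constant $\sigma_\omega$, and the matrix $V_t$ is common to all components. For each fixed $i$ this yields, on an event $\mathcal{G}_i$ of probability at least $1-\delta/n$,
\begin{align*}
\|S^i_t\|^2_{V_t^{-1}}\leq 2\sigma_\omega^2\log\!\left(\frac{n\det(V_t)}{\delta\det(\lambda I)}\right),
\end{align*}
where the factor $n$ inside the logarithm is produced by substituting $\delta/n$ into the lemma's bound.

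Finally I would take a union bound over $i=1,\ldots,n$. On the intersection $\bigcap_{i=1}^{n}\mathcal{G}_i$, which has probability at least $1-n\cdot(\delta/n)=1-\delta$, all $n$ scalar inequalities hold simultaneously; summing them and invoking the column decomposition gives
\begin{align*}
\operatorname{Tr}(S_t^\top V_t^{-1}S_t)\leq n\cdot 2\sigma_\omega^2\log\!\left(\frac{n\det(V_t)}{\delta\det(\lambda I)}\right),
\end{align*}
which is exactly (\ref{eq:mart_vec}). The role of the hypothesis $\beta>1$ is only to ensure that $V_t=\lambda I+\tfrac{1}{\beta}\sum_s z_sz_s^\top$ is the same positive-definite weighting matrix used in the scalar lemma, so no additional work is needed there.

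The argument is essentially routine once the column decomposition is noticed; it is the standard scalar-to-vector lifting of a self-normalized bound. The only step requiring genuine care is the bookkeeping of the confidence level: one must insert the rescaling $\delta\mapsto\delta/n$ into the logarithm so that the aggregate constant correctly becomes $n\det(V_t)/\delta$ rather than $\det(V_t)/\delta$, and one must apply the union bound to the $n$ per-component events rather than attempting to bound the trace directly. Mishandling either would weaken the stated probability or misstate the logarithmic factor, so I expect this to be the main (if mild) obstacle.
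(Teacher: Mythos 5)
Your proof is correct and follows essentially the same route as the paper's: decompose the trace into the $n$ column-wise quadratic forms $\|S^i_t\|^2_{V_t^{-1}}$, apply the scalar self-normalized bound of Lemma~\ref{Self_normalized_Bound_1} to each component at level $\delta/n$ (which produces the factor $n$ inside the logarithm), and combine via a union bound. No substantive differences to note.
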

\begin{proof}
For proof see Appendix \ref{partF}.
\end{proof}

\begin{lemma}(Mania et al \cite{mania2019certainty}) \label{lem:mania}
    There exists explicit constants $C_0$ and $\epsilon_0=poly(\alpha_0^{-1},\alpha_1,\vartheta, \nu, n, m)$ such that $\|\Theta-\Theta_*\|\leq \epsilon$ for any $0\leq \epsilon \leq \epsilon_0$ guarantees 
    \begin{align}
        J(K(\Theta, Q, R), \Theta_*, Q,R)-J_*\leq C_0 \epsilon ^2
    \end{align}
   where  $J(K(\Theta, Q, R), \Theta_*, Q,R)$ is average expected cost of executing the controller $J(K(\Theta, Q, R)$ on the system $\Theta_*$ and $J_*$ is optimal average expected cost.
\end{lemma}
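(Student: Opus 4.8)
The plan is to follow the certainty-equivalence perturbation argument of Mania et al.: show that the infinite-horizon LQ cost, viewed as a function of the feedback gain, is stationary at the optimum, so that any gain close to optimal incurs only second-order suboptimality, and then show that the certainty-equivalent gain $K(\Theta,Q,R)$ depends Lipschitz-continuously on $\Theta$. Composing the two estimates converts a first-order parameter error $\|\Theta-\Theta_*\|\le\epsilon$ into a second-order cost error of order $\epsilon^2$.

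First I would set up the cost as a function of the gain. For any stabilizing gain $K$, the average cost of running $K$ on the true system $\Theta_*$ is $J(K,\Theta_*,Q,R)=\operatorname{Tr}(P_K W)$, where $P_K\ge 0$ solves the Lyapunov equation $(A_*+B_*K)^\top P_K (A_*+B_*K)-P_K+Q+K^\top R K=0$. On the open set of stabilizing gains this map $K\mapsto J(K,\Theta_*,Q,R)$ is twice continuously differentiable, and since $K_*:=K(\Theta_*,Q,R)$ is its unconstrained minimizer its gradient vanishes at $K_*$. A second-order Taylor expansion then yields $J(K,\Theta_*,Q,R)-J_*\le \tfrac{L}{2}\|K-K_*\|^2$ for all $K$ in a neighborhood of $K_*$, where $L$ bounds the Hessian of $J(\cdot,\Theta_*,Q,R)$ on that neighborhood; $L$ can be bounded explicitly through the Lyapunov and Riccati estimates already implied by the standing boundedness assumptions on $Q$, $R^i$, $\Theta_*^i$, and $\nu_i$.

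Next I would control $\|K(\Theta,Q,R)-K_*\|$ in terms of $\|\Theta-\Theta_*\|$. Writing $K(\Theta,Q,R)=-(R+B^\top P_\Theta B)^{-1}B^\top P_\Theta A$ with $(A,B)=\Theta^\top$ and $P_\Theta$ the stabilizing solution of the discrete algebraic Riccati equation for $(A,B)$, I would invoke the implicit function theorem applied to the Riccati operator. Under the controllability standing assumption the Fr\'echet derivative of the Riccati map is a boundedly invertible Lyapunov-type operator in a neighborhood of $\Theta_*$, so $\Theta\mapsto P_\Theta$, and hence $\Theta\mapsto K(\Theta,Q,R)$, is Lipschitz there: $\|K(\Theta,Q,R)-K_*\|\le L'\|\Theta-\Theta_*\|\le L'\epsilon$. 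Choosing $\epsilon_0$ small enough that the ball $\{\Theta:\|\Theta-\Theta_*\|\le\epsilon_0\}$ lies inside this neighborhood---so $K(\Theta,Q,R)$ remains stabilizing and stays where the Hessian bound is valid---makes both estimates hold simultaneously. Combining them gives $J(K(\Theta,Q,R),\Theta_*,Q,R)-J_*\le \tfrac{L}{2}(L')^2\epsilon^2=:C_0\epsilon^2$ for every $\epsilon\le\epsilon_0$.

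The hard part will be the perturbation analysis of the Riccati solution: establishing that $\Theta\mapsto P_\Theta$ is Lipschitz with a constant $L'$ that is genuinely polynomial in $(\alpha_0^{-1},\alpha_1,\vartheta,\nu,n,m)$, and that $\epsilon_0$ admits the same polynomial form. This requires showing the linearized Riccati operator is invertible with a uniformly bounded inverse over the neighborhood and then propagating the spectral and closed-loop bounds so that the constants do not blow up. Since the statement is quoted verbatim from Mania et al., I would ultimately defer the bookkeeping of these explicit constants to that reference, using the two-step structure above only to make the quadratic scaling in $\epsilon$ transparent.
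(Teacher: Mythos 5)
Your two-step decomposition (cost is stationary and hence quadratic in the gain perturbation; the certainty-equivalent gain is Lipschitz in $\Theta$ via Riccati perturbation) is exactly the structure of the argument in Mania et al., and the paper itself offers no proof of this lemma --- it is imported verbatim from that reference, so there is nothing in the paper to diverge from. The only caveat is the one you already flag: a soft implicit-function-theorem argument gives Lipschitzness but not the claimed $poly(\alpha_0^{-1},\alpha_1,\vartheta,\nu,n,m)$ form of $C_0$ and $\epsilon_0$; for that one genuinely needs the explicit operator-norm Riccati perturbation bound of the cited work (which also replaces your Taylor/Hessian step with the exact identity $J(K)-J_*=\operatorname{Tr}\big(\Sigma_K(K-K_*)^\top(R+B_*^\top P_* B_*)(K-K_*)\big)$), and deferring that bookkeeping to the reference is appropriate here.
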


\begin{lemma}
 (Cassel et al \cite{cassel2020logarithmic}) \label{lem:asef} Suppose $J(K(\Theta, Q, R), \Theta_*, Q,R)\leq \mathcal{J}$ then $K(\Theta, Q, R)$ is $(\kappa, \gamma)$-strongly stabilizing with $\kappa=\sqrt{\frac{\mathcal{J}}{\alpha_0 \sigma^2_{\omega}}}$ and $\gamma=\frac{1}{2 \kappa^2}$.
\end{lemma}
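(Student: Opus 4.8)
The plan is to prove Lemma \ref{lem:asef}, which asserts that a bound $J(K(\Theta, Q, R), \Theta_*, Q,R)\leq \mathcal{J}$ on the average cost forces the closed-loop gain $K=K(\Theta, Q, R)$ to be $(\kappa,\gamma)$-strongly stabilizing with $\kappa=\sqrt{\mathcal{J}/\alpha_0\sigma^2_{\omega}}$ and $\gamma=1/2\kappa^2$. First I would introduce the closed-loop matrix $A_{cl}=A_*+B_*K$ and the associated steady-state cost. Since $J\leq\mathcal{J}$ is finite, $A_{cl}$ must be Schur-stable, so the discrete Lyapunov equation $A_{cl}^\top P A_{cl} - P + (Q+K^\top R K)=0$ admits a unique positive-definite solution $P\succ 0$. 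The key identity to extract is that the average cost equals $J=\sigma^2_{\omega}\operatorname{Tr}(P)$ (using $W=\sigma^2_{\omega}I$), which immediately yields the spectral bound $\|P\|\leq\operatorname{Tr}(P)=J/\sigma^2_{\omega}\leq\mathcal{J}/\sigma^2_{\omega}$; this is precisely where the numerator of $\kappa$ comes from.

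The heart of the argument is to manufacture the strong-stability certificate matrices $H$ and $L$ with $A_{cl}=HLH^{-1}$ from the Lyapunov solution $P$. The natural choice is to set $H=P^{1/2}$, so that $L=P^{1/2}A_{cl}P^{-1/2}$ is a similarity transform of $A_{cl}$. I would then verify the three conditions of Definition 1 in turn. For the contraction $\|L\|\leq 1-\gamma$, I would use the Lyapunov equation rewritten as $L^\top L = I - P^{-1/2}(Q+K^\top R K)P^{-1/2}$; since $Q\succeq\alpha_0 I$ and $R^i\succeq 0$ we get $L^\top L \preceq I - \alpha_0 P^{-1}$, and combining $\|P\|\leq\mathcal{J}/\sigma^2_{\omega}$ with the definition of $\kappa$ gives $\|L\|^2\leq 1-\alpha_0/\|P\|\leq 1-\alpha_0\sigma^2_{\omega}/\mathcal{J}=1-1/\kappa^2$. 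Taking square roots and using $\sqrt{1-1/\kappa^2}\leq 1-1/2\kappa^2=1-\gamma$ delivers the required contraction factor and simultaneously pins down $\gamma=1/2\kappa^2$.

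For the bound $\|K\|\leq\kappa$, I would note that the cost already contains the term $\operatorname{Tr}(R\,\Sigma_{uu})$ and that $\Sigma_{xx}\succeq W=\sigma^2_{\omega}I$, so from $J\geq\alpha_0\sigma^2_{\omega}\|K\|^2$ (bounding the control-penalty contribution from below using $R\succeq\alpha_0 I$ and $\Sigma_{xx}\succeq\sigma^2_{\omega}I$) one recovers $\|K\|\leq\sqrt{J/\alpha_0\sigma^2_{\omega}}\leq\kappa$. For the conditioning bounds $\|H\|\leq B_0$, $\|H^{-1}\|\leq 1/b_0$ with $\kappa=B_0/b_0$, I would exhibit $B_0=\|P^{1/2}\|=\|P\|^{1/2}$ and $b_0=\|P^{-1/2}\|^{-1}=\lambda_{\min}(P)^{1/2}$, and argue that $\lambda_{\min}(P)\geq\alpha_0$ (since $P\succeq Q\succeq\alpha_0 I$ from the Lyapunov equation), so $B_0/b_0=\sqrt{\|P\|/\lambda_{\min}(P)}\leq\sqrt{(\mathcal{J}/\sigma^2_{\omega})/\alpha_0}=\kappa$, matching the claimed constant. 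Finally, since the gain $K$ is fixed (not time-varying) within a single invocation, the third condition $\|H_{t+1}^{-1}H_t\|\leq 1+\gamma/2$ holds trivially with equality to $1$ because $H_{t+1}=H_t=P^{1/2}$.

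I expect the main obstacle to be the careful bookkeeping needed to reconcile the various constants so that exactly $\kappa=\sqrt{\mathcal{J}/\alpha_0\sigma^2_{\omega}}$ and $\gamma=1/2\kappa^2$ emerge, rather than merely some $(\kappa',\gamma')$ of the same order; in particular the passage from $\|L\|^2\leq 1-1/\kappa^2$ to $\|L\|\leq 1-\gamma$ relies on the elementary inequality $\sqrt{1-x}\leq 1-x/2$, and I would need to confirm that the same $P$ serves simultaneously as the Lyapunov certificate (giving the contraction) and as the source of the transformation $H=P^{1/2}$ (giving the conditioning), so that all three conditions are met by one consistent choice. The secondary subtlety is justifying the cost identity $J=\sigma^2_{\omega}\operatorname{Tr}(P)$ and the lower bound $J\geq\alpha_0\sigma^2_{\omega}\|K\|^2$ under the steady-state covariance interpretation established in the SDP formulation of Section \ref{sec:prelim}, which I would cite rather than rederive.
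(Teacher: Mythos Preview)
Your proposal is correct and follows the standard Lyapunov-based route (take $P$ solving $A_{cl}^\top P A_{cl}+Q+K^\top RK=P$, set $H=P^{1/2}$, bound $\|L\|$, $\|K\|$, and the conditioning $B_0/b_0$ from $\alpha_0 I\preceq P\preceq(\mathcal{J}/\sigma_\omega^2)I$), which is exactly the argument in Cassel et al.\ that the paper cites; the paper itself gives no proof of this lemma, so there is nothing further to compare. One small slip: with $H=P^{1/2}$ and $A_{cl}=HLH^{-1}$ you should have $L=P^{-1/2}A_{cl}P^{1/2}$, not $P^{1/2}A_{cl}P^{-1/2}$ as written; the norm bound you derive is actually $\|L L^\top\|\leq 1-1/\kappa^2$ (or equivalently the bound on the transpose), but since $\|L\|^2=\|L^\top L\|=\|LL^\top\|$ this does not affect the conclusion.
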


\subsection{Confidence Construction (System Identification)}\label{sec: Confidence Set of Switched}

\textbf{Proof of Theorem \ref{thm:WarmUp_Duration}} 
 
 \begin{proof}
 Note that during the warm-up phase we execute the control input $u_t=K_0 x_t+\eta_t$ which includes a more-exploration term of$\eta_t \sim \mathcal{N}(0, 2\sigma_{\omega}^2\kappa_0^2 I)$. The state norm of system is upperbounded by strong stabilizing property of $K_0$ during warm-up phase (See \cite{cohen2019learning}). Now, let $T_0$ denote warm-up duration. The confidence ellipsoid is constructed during this phase by applying the procedure illustrated in Section \ref{sysIdentnai}. 
  \begin{align}
\nonumber \mathcal{C}_t(\delta):=\{&\Theta\in \mathbb{R}^{(n+m)\times n}|\\
\nonumber&\;Tr \big((\Theta-\hat{\Theta}_{T_0})^\top V_{[0,\;T_0]} (\Theta-\hat{\Theta}_{T_0})\big) &\leq r_{T_0}\} 
\end{align}
  where
 \begin{align}
\nonumber r_{T_0}=\bigg(&\sigma_{\omega} \sqrt{2n (\log \frac{n}{\delta} + \log (1+\frac{300\sigma_{\omega}^2\kappa_0^4}{\gamma_0^2}(n+\vartheta^2\kappa^2_0)\log \frac{T_0}{\delta}}\\
\nonumber&+\sqrt{\lambda} \vartheta \bigg)^2.
\end{align}
 Furthermore, from Theorem 20 of \cite{cohen2019learning}, we have the following lower bound for $V_{[0,\;T_0]}$
\begin{align*}
  V_{[0,\;T_0]} \succeq \frac{T_0\sigma_{\omega}^2}{80}I,  
\end{align*}
which results in 
\begin{align}
&\nonumber \big\|\hat{\Theta}-\Theta_*\big\|^2\leq\\
\nonumber &\frac{80}{T_0 \sigma^2_{\omega}}\bigg( \sigma_{\omega} \sqrt{2n (\log \frac{n}{\delta} + \log (1+\frac{300\sigma_{\omega}^2\kappa_0^4}{\gamma_0^2}(n+\vartheta^2\kappa^2_0)\log \frac{T_0}{\delta}} \\
&+\sqrt{\lambda} \vartheta \bigg)^2.
\end{align}
  Now applying the results of Lemmas \ref{lem:mania} and \ref{lem:asef} and setting $\kappa=\kappa_0$ yield
  \begin{align}
\nonumber &\frac{80}{T_0 \sigma^2_{\omega}}\bigg( \sigma_{\omega} \sqrt{2n (\log \frac{n}{\delta} + \log (1+\frac{300\sigma_{\omega}^2\kappa_0^4}{\gamma_0^2}(n+\vartheta^2\kappa^2_0)\log \frac{T_0}{\delta}}\\
&+\sqrt{\lambda} \vartheta \bigg)^2\leq \min \bigg\{\kappa^2_0\frac{ \alpha_0 \sigma^2_{\omega}}{C_0}, \epsilon^2_0 \bigg\}.
\end{align}
 When Algorithm 2 begins actuating in mode 1, given that $\Theta_0$ resides within a $(\kappa_0, \gamma_0)$-stabilizing neighborhood, solving the primal problem with the confidence bound constructed by $\Theta_0$ results in an average expected cost improvement over that of $K_0$, or at least achieves a similar performance.
 \end{proof} 

\textbf{Proof of Theorem \ref{thm:Conficence_Set_Attacked}}

\begin{proof}
The $l_2-$ least square error regularized by (\ref{eq:InheritedInfo}) is written as follows
   \begin{align}
  \nonumber e(\Theta^i)&=\operatorname{Tr} \big((\hat{\Theta}^{pi}_{\tau_q}-\Theta^i)^\top V_{[0\; \tau_{q}]}^{pi}(\hat{\Theta}^{pi}_{\tau_q}-\Theta^i)\big)+\\
  &\quad\sum _{s=\tau_q+1}^{t-1} \operatorname{Tr} \big((x^i_{s+1}-{\Theta^i}^\top z^i_{s})(x^i_{s+1}-{\Theta^i}^\top z^i_{s})^T\big). \label{eq:LSE_prii}
  \end{align}
Solving 
\begin{align}
\begin{array}{rrclcl}
\displaystyle  \min_{\Theta^i} & \multicolumn{1}{l}{e(\Theta^i)}\\
\end{array}\label{eq:minimize_findTheta}
\end{align}
  yields
 \begin{align}
  \hat{\Theta}^{i}_t &={V^i}^{-1}_{[0,\;t]}\big({Z}^TX+V^{pi}_{[0,\;\; \tau_q]} \hat{\Theta}^{pi}_{\tau_q} \big), \label{joonijoon}
  \end{align}
 where
  \begin{align}
&\nonumber V_{[0 \; t]}^{i}=V_{[0\; \tau_{q}]}^{pi}+V^i_{[\tau_q\; t]},\; 
V^i_{[\tau_q\; t]}=\sum_{t=\tau_{q}}^{t} z^i_t{z^i_t}^\top,\\
&Z^\top X=\sum _{s=\tau_q}^{t}z^i_s{x^i_s}^\top 
\end{align}

Substitute $X=Z^i \Theta^i_{*}+W$ (counterpart of (\ref{compactdyn}) for mode $i$), where $X$, $Z$, and $W$ whose rows are $x^\top_{\tau_q+1}, ...,x^\top_t$,  ${z}^\top_{\tau_q}, ...,{z}_{t-1}$, and $\omega_{\tau_q+1}^\top,...,\omega_t^\top$ into (\ref{joonijoon}),  yields, 

  \begin{align*}
    \hat{\Theta}^{i}_t &={V^i}^{-1}_{[0,\;t]}\big(Z^T(Z\Theta^i _{*}+W)+V_{[0\; \tau_{q}]}^{pi} \hat{\Theta}^{pi}_{\tau_q} \big)\\
  &={V^i}^{-1}_{[0,\;t]}\big(V_{[0 \; t]}^{i}\Theta^i_{*}+V^{pi}_{[0,\;\tau_q]} (\hat{\Theta}^{pi}_{\tau_q}-\Theta^i_{*})+
   Z^\top W \big)\\
   &=\Theta^i_{*}+{V^i}^{-1}_{[0,\;t]}\big(V^{pi}_{[0,\;\tau_q]} (\hat{\Theta}^{pi}_{\tau_q}-\Theta^i_{*})+ Z^\top W\big)
  \end{align*}
  where $Z^\top W=\sum _{s=\tau_q}^{t}z^i_s\omega^\top_s$. For an arbitrary random extended state $z^i$, one can write
  
  \begin{align*}
  {z^i}^\top\hat{\Theta}^i_i-{z^i}^\top\Theta^i_{*} =&\langle\,z^i,Z^\top W\rangle_{{V^i}^{-1}_{[0,\;t]}}+\\
  &\langle\,z^i,V^{pi}_{[0,\; \tau_q]} (\hat{\Theta}^{pi}_{\tau_q}-\Theta^i_{*})\rangle_{{V^i}^{-1}_{[0,\;t]}}
  \end{align*}
   which leads to
  \begin{align}
  \nonumber\| {z^i}^\top\hat{\Theta}^i_i-{z^i}^\top\Theta^i_{*}\| \leq \| z^i\|_{{V^i}^{-1}_{[0,\;t]}}&\Bigg (\| Z^\top W \|_{{V^i}^{-1}_{[0,\;t]}}+\\
 \nonumber &\| V^{pi}_{[0,\; \tau_q]} (\hat{\Theta}^{pi}_{\tau_q}-\Theta^i_{*}) \|_{{V^i}^{-1}_{[0,\;t]}}\Bigg)
  \end{align}
  which can  equivalently be written as:
  
    \begin{align}
  &\nonumber\|\hat{\Theta}^{i}-\Theta^i_*\|^2_{{V^i}_{[0,\;t]}} \leq\\
  &\Bigg (\|  Z^\top W \|_{{V^i}^{-1}_{[0,\;t]}} +
  \| V^{pi}_{[0,\; \tau_q]} (\hat{\Theta}^{pi}_{\tau_q}-\Theta^i_{*}) \|_{{V^i}^{-1}_{[0,\;t]}}\Bigg)^2
   \end{align}
   
  Given the fact that $V^i_{[0,\;t]}\geq V^{pi}_{[0,\; \tau_q]}$, 
    \begin{align*}
&\|\hat{\Theta}_{i}-\Theta_{i*}\|^2_{V^i_{[0,\;t]}}\leq\\
& \Bigg (\|  Z^\top W \|_{{V^i}^{-1}_{[0,\;t]}} + \sqrt{\| \hat{\Theta}^{pi}_{\tau_q}-\Theta^i_{*} \|^2_{{V^{pi}}^{-1}_{[0,\;\tau_q]}}}
 \Bigg)^2
  \end{align*}
    
holds true where the first term, $\|  Z^\top W \|_{{V^i}^{-1}_{[0,\;t]}}$, is bounded by Corollary \ref{Self_normalized_Bound} with probability at least $1-\delta$ as follows:
   \begin{align}
   \|  Z^\top W \|_{{V^i}^{-1}_{[0,\;t]}}^2 \leq 2 \sigma_{\omega}^2 \log \bigg(\frac{n\det(V^i_{[0,\; t]}) }{\delta \det(V_{[0\; \tau_{q}]}^{pi})}\bigg) \label{eq:mart_bound}
   \end{align}
and the second term is bounded by  (\ref{eq:InheritedInfo}) as follows
  \begin{align}
\| \hat{\Theta}^{pi}_{\tau_q}-\Theta^i_{*} \|_{{V^{pi}}^{-1}_{[0,\;\tau_q]}}^2\leq r_{\tau_q} \label{eq:second_bound}
  \end{align}
with probability at least $1-\delta $. 
Combining bounds (\ref{eq:mart_bound}) and (\ref{eq:second_bound}) completes the proof.
\end{proof}

The following lemma aims at obtaining an upper-bound for the right hand side of (\ref{eq:not_normalized_radius}) which will be later used in regret bound analysis.
%
 \begin{lemma}\label{thm:compact_ellips}
The radius $r^j$ of ellipsoid (\ref{eq:confidence_set_noInfot}) can be upper-bounded as
 \begin{align}
\nonumber r^{j}\leq 8\sigma_{\omega}^2n \bigg(&2\log\frac{n}{\delta}+(1+2\bar{\Upsilon})\log t+\\
&(m-1)\log(1+2\bar{\Upsilon})t\bigg)+2\epsilon \lambda=:\bar{r}. \label{eq:Upperbound_forr^m}
 	\end{align}	
where
	\begin{align*}
	\bar{\Upsilon}=\frac{3200{\kappa^*}^6\sigma^2n}{\chi^2}\log \frac{T}{\delta}.
	\end{align*}
 \end{lemma}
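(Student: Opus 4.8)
The plan is to begin from the closed form (\ref{eq:not_normalized_radius}) for the radius, which has the shape $(a+b)^2$ with $a=\sigma_{\omega}\sqrt{2n\log\frac{n\det(V^i_{[0\;t]})}{\delta\det(V^{pi}_{[0\;\tau_q]})}}$ encoding the learning accrued during the current epoch and $b=\sqrt{r_{\tau_q}}$ the inherited radius carried over from the switch time. First I would apply the elementary inequality $(a+b)^2\le 2a^2+2b^2$, so that $r^j\le 4n\sigma_{\omega}^2\log\frac{n\det(V^i_{[0\;t]})}{\delta\det(V^{pi}_{[0\;\tau_q]})}+2r_{\tau_q}$, thereby decoupling the problem into (i) controlling the log-determinant ratio and (ii) controlling the inherited radius $r_{\tau_q}$.

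For (i), I would split the logarithm as $\log\frac{n\det(V^i_{[0\;t]})}{\delta\det(V^{pi}_{[0\;\tau_q]})}=\log\frac{n}{\delta}+\log\det(V^i_{[0\;t]})-\log\det(V^{pi}_{[0\;\tau_q]})$ and treat the two determinants separately. The denominator is bounded below by the regularization $V^{pi}_{[0\;\tau_q]}\succeq\lambda I$ inherited from the initialization $V_0=\lambda I$ under projection, giving $\det(V^{pi}_{[0\;\tau_q]})\ge\det(\lambda I)$. For the numerator I would invoke the determinant--trace (AM--GM) inequality $\det(V^i_{[0\;t]})\le\big(\operatorname{Tr}(V^i_{[0\;t]})/(n+m_i)\big)^{n+m_i}$, which reduces the estimate to a trace bound. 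The crucial step is this trace bound, where the good event $\mathcal{E}_t$ enters: writing $V^i_{[0\;t]}=V^{pi}_{[0\;\tau_q]}+\sum_{s=\tau_q}^{t}z^i_s{z^i_s}^\top$ and using $\|z_s\|^2\le 4{\kappa^*}^2(1-\chi)^{2s}\|x_0\|^2+\bar{\Upsilon}/2{\gamma^*}^2$ valid on $\mathcal{E}_t$, the geometric transient sums to an $O(1)$ constant while the second part contributes $\sim t\bar{\Upsilon}$. Feeding $\operatorname{Tr}(V^i_{[0\;t]})\lesssim (1+\bar{\Upsilon})t$ into the AM--GM bound and collecting the dimensional factors $n$, $(n+m_i)$, and $(m-1)$ produces exactly the $(1+2\bar{\Upsilon})\log t$ and $(m-1)\log(1+2\bar{\Upsilon})t$ terms of $\bar{r}$.

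For (ii), the inherited $r_{\tau_q}$ is itself the output of the same recursion applied at the previous switch, so I would argue by induction over epochs. The base case is the central ellipsoid at $\tau_1=0$, whose radius is $r_0=\lambda\epsilon$; in the inductive step the log-determinant contributions of all earlier epochs are dominated by $\det(V^i_{[0\;t]})$ thanks to monotonicity of the determinant along $V^{pi}_{[0\;\tau_q]}\preceq V^i_{[0\;t]}$. The recursion then collapses: the accumulated $\log\frac{n}{\delta}$ contributions combine into the factor $2\log\frac{n}{\delta}$, the only residual additive constant is $2r_0=2\lambda\epsilon$ (matching the $2\epsilon\lambda$ in $\bar{r}$), and a final bookkeeping of the constants yields the stated coefficient $8\sigma_{\omega}^2 n$.

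The main obstacle I anticipate is precisely this trace-to-determinant step performed on the good event: getting the dimension counts $(n+m_i)$ and $(m-1)$ correct, verifying that the geometric transient in $\|z_s\|^2$ is genuinely absorbed into a constant rather than inflating the $\log t$ factor, and managing the recursive collapse of $r_{\tau_q}$ so that the inherited information contributes a clean $2\lambda\epsilon$ and a doubled $\log\frac{n}{\delta}$, rather than a term that accumulates with the number of epochs.
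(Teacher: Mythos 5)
Your opening move is fine and matches the paper: expand $(a+b)^2\le 2a^2+2b^2$ on (\ref{eq:not_normalized_radius}) and substitute the closed form of the inherited radius. But the inherited radius needs no induction over epochs: $r_{\tau_q}$ is simply the radius (\ref{radius_centralEl_realTime}) of the \emph{central} ellipsoid at time $\tau_q$, maintained in closed form by the augmentation technique, i.e.\ $\sqrt{r_{\tau_q}}=\sigma_{\omega}\sqrt{2n\log\frac{n\det(V_{[0,\tau_q]})}{\delta\det(\lambda I_{(n+m)})}}+\sqrt{\lambda}\vartheta$. Your premise that $r_{\tau_q}$ is "the output of the same recursion applied at the previous switch" is not how the algorithm works, and your claimed collapse of that recursion to a residual $2\lambda\epsilon$ plus a doubled $\log\frac{n}{\delta}$ silently discards the log-determinant $\log\frac{\det(V_{[0,\tau_q]})}{\det(\lambda I_{(n+m)})}$ carried inside $r_{\tau_q}$, which is a growing term of order $(1+2\bar{\Upsilon})\log\tau_q$ and cannot be absorbed into a constant.

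The second, more consequential gap is in how you treat the two determinant ratios. By lower-bounding $\det(V^{pi}_{[0,\tau_q]})\ge\det(\lambda I)$ you end up (once the inherited term is unrolled correctly) with \emph{two} ratios of the form $\det(V)/\det(\lambda I)$, each contributing $(1+2\bar{\Upsilon})\log t$; that is a valid upper bound but it is strictly larger than the claimed $\bar{r}$, whose third term is only $(m-1)\log\big((1+2\bar{\Upsilon})t\big)$. The paper's proof instead swaps the denominators between the two logarithms (using $\det(\lambda I_{(n+m_i)})\le\det(\lambda I_{(n+m)})$), so that $\det(V^i_{[0,t]})$ is paired with $\det(\lambda I_{(n+m_i)})$ in the same mode-$i$ space --- giving one copy of $(1+2\bar{\Upsilon})\log t$ via (\ref{eq:boundFullactRatio}) --- while $\det(V_{[0,\tau_q]})$ is paired with its own projection $\det(V^{pi}_{[0,\tau_q]})$, whose ratio is controlled by the geometric projection argument of Lemma \ref{frac_of_determinant_log} and yields the small $(m-1)\log\big((1+2\bar{\Upsilon})\tau_q\big)$ term. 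That pairing, and Lemma \ref{frac_of_determinant_log} specifically, is the key idea of the proof and is absent from your plan; your AM--GM/trace bound on $\det(V^i_{[0,t]})$ is only re-deriving (\ref{eq:boundFullactRatio}) and does not produce the $(m-1)$ factor.
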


 \subsection{Properties of Relaxed SDP} \label{sec:relaxing SDP}
This subsection attempts to go through some essential features of the primal and dual relaxed-SDP problems (\ref{eq:RelaxedSDP}) and (\ref{eq:RedSDP_DUAL}). While the former is solved by Algorithm 2 to compute a regularizing control in the face of parameter uncertainty, the latter plays a pivotal role in the stability and regret bound analysis.

Both primal and dual problems directly follow from applying next two matrix-perturbation lemmas \ref{matix_perturbation_Bound1} and \ref{matix_perturbation_Bound} (adapted from \cite{cohen2019learning}) on the exact primal SDP and its dual problems. As the following lemmas hold for any actuating mode $i$, we deliberately dropped the superscripts $i$. 
\begin{lemma} \label{matix_perturbation_Bound1}
	Let $X$ and $\Delta$ be matrices of matching sizes and let $\Delta\Delta^T\leq r V^{-1}$ for positive definite matrix $V$. Then for any $P\geq 0$ and $\mu\geq r+(1+r)\|X\|\|V\|^{1/2}$, we have
	\begin{align*}
-\mu \|P\|_*V^{-1}\leq (X+\Delta)^\top P(X+\Delta)\leq \mu \|P\|_*V^{-1}.
	\end{align*}

\end{lemma}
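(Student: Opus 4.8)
The plan is to prove Lemma \ref{matix_perturbation_Bound1} by a direct two-sided matrix inequality, expanding the perturbed product $(X+\Delta)^\top P(X+\Delta)$ and controlling each term using the hypothesis $\Delta\Delta^\top\leq rV^{-1}$ together with the trace-norm bound $\mu\geq r+(1+r)\|X\|\|V\|^{1/2}$. First I would write
\begin{align*}
(X+\Delta)^\top P(X+\Delta)=X^\top PX+X^\top P\Delta+\Delta^\top PX+\Delta^\top P\Delta,
\end{align*}
so that the task reduces to bounding the cross terms $X^\top P\Delta+\Delta^\top PX$ and the quadratic term $\Delta^\top P\Delta$ in the Loewner order by a multiple of $\|P\|_*V^{-1}$. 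The key scalar observation I would use repeatedly is that for any vector $y$ and any $P\geq 0$, one has $y^\top\Delta^\top P\Delta y\leq \|P\|\,\|\Delta y\|^2$, and more usefully in the weighted sense $\Delta^\top P\Delta\leq \|P\|_*\,\Delta^\top\Delta$-type estimates combined with the ellipsoid hypothesis.

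The central mechanism is to convert the hypothesis $\Delta\Delta^\top\leq rV^{-1}$ into a usable bound on $\Delta^\top M\Delta$ for a positive semidefinite $M$. The clean route is to note that for $P\geq 0$ we can bound $\Delta^\top P\Delta\leq \|P\|_*\,\Delta^\top\Delta$ is too crude; instead I would use the identity-style estimate that $\Delta^\top P\Delta\leq \operatorname{Tr}(P\,\Delta\Delta^\top)\,I$ is also loose, so the right tool is the inequality $\Delta^\top P\Delta\leq \|P\|_*\,(\Delta\Delta^\top\bullet \text{something})$. More carefully, since $P\geq 0$ admits $P=\sum_k \lambda_k p_kp_k^\top$ with $\sum_k\lambda_k=\|P\|_*$, I would bound $\Delta^\top P\Delta=\sum_k\lambda_k(\Delta^\top p_k)(\Delta^\top p_k)^\top\leq \sum_k\lambda_k\|p_k^\top\Delta\|^2$-type term, and each $p_k^\top\Delta\Delta^\top p_k\leq r\,p_k^\top V^{-1}p_k\leq r\|V^{-1}\|$. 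Threading this through yields $\Delta^\top P\Delta\leq r\|P\|_*V^{-1}$ in the Loewner sense after using $V^{-1}\leq \|V^{-1}\|I$ carefully; this handles the pure-perturbation term.

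For the cross terms I would apply a Young-type inequality, writing $X^\top P\Delta+\Delta^\top PX\leq \epsilon X^\top PX+\epsilon^{-1}\Delta^\top P\Delta$ for a free parameter, or more directly bounding $\|X^\top P\Delta\|\leq \|X\|\,\|P\|\,\|\Delta\|$ and converting the operator-norm factor $\|\Delta\|$ into $\sqrt{r}\|V^{-1}\|^{1/2}=\sqrt{r}\|V\|^{-1/2}$ via the hypothesis. The factor $\|V\|^{1/2}$ appearing in the definition of $\mu$ is the tell-tale sign that the cross term is where $\|X\|\|V\|^{1/2}$ enters: balancing the $\sqrt{r}$ from the cross term against the target $V^{-1}$ scaling is exactly what forces the coefficient $(1+r)\|X\|\|V\|^{1/2}$, while the $r$ summand absorbs the quadratic term. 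I expect the main obstacle to be the bookkeeping that converts scalar/operator-norm bounds into genuine Loewner (matrix) inequalities of the form $\cdot\leq \mu\|P\|_*V^{-1}$ rather than $\cdot\leq \mu\|P\|_*\|V^{-1}\|I$; the trace norm $\|P\|_*$ (as opposed to $\|P\|$) must be tracked so that the eigendecomposition argument closes, and getting the two-sided symmetric bound requires repeating the estimate for $-\Delta$ as well. Since the lemma is stated as adapted from \cite{cohen2019learning}, I would mirror their perturbation argument and simply verify that dropping the confidence-set normalization does not affect the algebraic inequalities, as the constants $r$ and $\mu$ here play the same structural role.
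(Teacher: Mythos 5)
The paper offers no proof of this lemma beyond a pointer to Lemma 24 of \cite{cohen2019learning}, so reconstructing the argument is the right exercise, and your expansion of $(X+\Delta)^\top P(X+\Delta)$ into a cross term plus a quadratic term is indeed the standard route. But there are two problems. First, the statement as printed cannot be what is actually proved: taking scalars $X=2$, $\Delta=0$, $r=0$, $V=1$, $P=1$ gives $\mu=\|X\|\|V\|^{1/2}=2$, yet $(X+\Delta)^\top P(X+\Delta)=4>2=\mu\|P\|_*V^{-1}$. The lemma in \cite{cohen2019learning}, and the only version consistent with how it is invoked to pass from (\ref{eq:Lemma8_res}) to (\ref{eq:Lemma8_res22}), bounds the perturbation \emph{difference} $(X+\Delta)^\top P(X+\Delta)-X^\top PX$. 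Your decomposition silently proves exactly that version --- you bound only the cross and quadratic terms and never account for the leftover $X^\top PX$ --- so you must either say so explicitly or the proof does not establish the displayed inequality (which, as stated, is false).

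Second, your mechanism for the quadratic term does not close as written. Routing the estimate through $p_k^\top V^{-1}p_k\leq\|V^{-1}\|$ and ``$V^{-1}\leq\|V^{-1}\|I$'' only yields $\Delta P\Delta^\top\leq r\|P\|_*\|V^{-1}\|I$, and since $\|V^{-1}\|I\geq V^{-1}$ this is \emph{weaker} than the target $r\|P\|_*V^{-1}$: the inequality points the wrong way and the matrix $V^{-1}$ cannot be recovered. (Here I write the quadratic term as $\Delta P\Delta^\top$, the orientation consistent with the hypothesis $\Delta\Delta^\top\leq rV^{-1}$ and with the dual constraint in (\ref{eq:RedSDP_DUAL}); the transposes in the printed statement are off.) The correct step never leaves the Loewner order: with $P=\sum_k\lambda_kp_kp_k^\top$, $\lambda_k\geq0$, $\sum_k\lambda_k=\|P\|_*$ and $\|p_k\|=1$, Cauchy--Schwarz gives $(\Delta p_k)(\Delta p_k)^\top\leq\Delta\Delta^\top\leq rV^{-1}$ for each $k$, hence $\Delta P\Delta^\top=\sum_k\lambda_k(\Delta p_k)(\Delta p_k)^\top\leq r\|P\|_*V^{-1}$, with no operator norm of $V^{-1}$ appearing. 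Your cross-term treatment is sound once this is in place: the matrix Young inequality $\pm(XP\Delta^\top+\Delta PX^\top)\leq\epsilon XPX^\top+\epsilon^{-1}\Delta P\Delta^\top$ with $\epsilon=\sqrt{r}/(\|X\|\|V\|^{1/2})$, combined with $XPX^\top\leq\|P\|_*\|X\|^2\|V\|V^{-1}$ (using $I\leq\|V\|V^{-1}$) and the quadratic bound, gives $\pm(XP\Delta^\top+\Delta PX^\top)\leq2\sqrt{r}\,\|X\|\|V\|^{1/2}\|P\|_*V^{-1}$, and $r+2\sqrt{r}\,\|X\|\|V\|^{1/2}\leq r+(1+r)\|X\|\|V\|^{1/2}\leq\mu$ since $2\sqrt{r}\leq1+r$. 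That closes the two-sided bound for the difference version, which is the statement you should be proving.
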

\begin {proof}
The proof follows the same steps of Lemma 24 in \cite{cohen2019learning}. Furthermore, the lower bound of $\mu$ is obtained by following the same steps of \cite{cohen2019learning} exactly. However, our bound is slightly different than that of \cite{cohen2019learning} because, in the confidence set construction, we did not apply normalization.
\end{proof}

To discuss the solution property of primal problem we need following lemma adapted from \cite{cohen2019learning}.

\begin{lemma} \label{matix_perturbation_Bound}
	Let $X$ and $\Delta$ be matrices of matching sizes and let $\Delta\Delta^\top \leq r V^{-1}$ for positive definite matrix $V$. Then for any $\Sigma\geq0$ and $\mu\geq r+(1+r)\|X\|\|V\|^{1/2}$,
	\begin{align*}
    \|(X+\Delta)^\top \Sigma(X+\Delta)-X^\top \Sigma X\|\leq\mu\Sigma\bullet V^{-1}.
	\end{align*}
	
\end{lemma}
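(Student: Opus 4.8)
The plan is to reduce the claimed operator-norm inequality to a scalar estimate over unit vectors. Expanding the difference produces the symmetric matrix
\[
D := (X+\Delta)^\top\Sigma(X+\Delta)-X^\top\Sigma X = X^\top\Sigma\Delta+\Delta^\top\Sigma X+\Delta^\top\Sigma\Delta,
\]
so that $\|D\|=\max_{\|v\|=1}|v^\top Dv|$. For a fixed unit vector $v$ I would set $p=\Sigma^{1/2}Xv$ and $q=\Sigma^{1/2}\Delta v$, using that $\Sigma\geq0$ admits a PSD square root; this turns the quadratic form into $v^\top Dv = 2\langle p,q\rangle+\|q\|^2$, and hence by Cauchy--Schwarz the target becomes $2\|p\|\,\|q\|+\|q\|^2\leq\mu\,(\Sigma\bullet V^{-1})$.

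The two norms are then controlled by rank-one congruence arguments. Since $\|v\|=1$ we have $vv^\top\preceq I$, whence $(\Delta v)(\Delta v)^\top=\Delta(vv^\top)\Delta^\top\preceq\Delta\Delta^\top\preceq rV^{-1}$ and likewise $(Xv)(Xv)^\top\preceq XX^\top\preceq\|X\|^2 I$. Writing each squared norm as an entrywise product against $\Sigma$ and invoking the monotonicity of $M\mapsto\Sigma\bullet M$ for $\Sigma\geq0$, I obtain $\|q\|^2=\Sigma\bullet(\Delta v)(\Delta v)^\top\leq r(\Sigma\bullet V^{-1})$ and $\|p\|^2=\Sigma\bullet(Xv)(Xv)^\top\leq\|X\|^2\operatorname{Tr}(\Sigma)$. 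The remaining link is the spectral bound $V^{-1}\succeq\|V\|^{-1}I$, which gives $\operatorname{Tr}(\Sigma)\leq\|V\|\,(\Sigma\bullet V^{-1})$ and therefore $\|p\|^2\leq\|X\|^2\|V\|\,(\Sigma\bullet V^{-1})$.

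Combining these, with $s:=\Sigma\bullet V^{-1}$, I have $\|q\|^2\leq rs$ and $2\|p\|\,\|q\|\leq 2\sqrt{r}\,\|X\|\|V\|^{1/2}s$; the final step is the elementary inequality $2\sqrt{r}\leq 1+r$, which upgrades the cross-term coefficient to $(1+r)\|X\|\|V\|^{1/2}$ and makes the sum exactly $\big(r+(1+r)\|X\|\|V\|^{1/2}\big)s\leq\mu s$. Maximizing over unit $v$ then closes the argument.

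I expect the only genuinely delicate point to be the passage from $\operatorname{Tr}(\Sigma)$ back to $\Sigma\bullet V^{-1}$ through $V^{-1}\succeq\|V\|^{-1}I$, since the naive estimate of $\|p\|$ leaves a bare $\operatorname{Tr}(\Sigma)$ factor that does not match the intended $\|V\|^{1/2}$ scaling in $\mu$; everything else is the same trace-monotonicity and Cauchy--Schwarz bookkeeping already used in Lemma \ref{matix_perturbation_Bound1} and in the corresponding lemma of \cite{cohen2019learning}, and the slight discrepancy from omitting the normalization of the confidence ellipsoid only affects the constant through the definition of $\mu$.
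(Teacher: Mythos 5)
Your proof is correct, and it is essentially the standard argument: the paper itself does not reproduce a proof of this lemma but simply defers to \cite{cohen2019learning}, where the bound is obtained by the same decomposition into the cross terms and the quadratic term, the same trace-monotonicity bounds $\operatorname{Tr}(\Delta^\top\Sigma\Delta)\leq r\,\Sigma\bullet V^{-1}$ and $\operatorname{Tr}(\Sigma)\leq\|V\|\,\Sigma\bullet V^{-1}$, and the same elementary inequality $2\sqrt{r}\leq 1+r$. Your reduction to unit vectors via symmetry of the difference matrix and the rank-one congruence bounds are all valid, so nothing further is needed.
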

\begin{proof}
	The proof can be found \cite{cohen2019learning}.
\end{proof}

By substituting $X=\hat{\Theta}^i_t$, $\Delta=\hat{\Theta}_t^i-\Theta_*^i$, $\Sigma=\Sigma(\hat{\Theta}^i_t, Q, R^i)$,  $P=P(\hat{\Theta}^i_t, Q, R^i)$ and $V=V^i_{[0,\;t]}$ in Lemmas \ref{matix_perturbation_Bound1} and \ref{matix_perturbation_Bound}, and applying them on exact primal ((\ref{eq:SDPKhali}) with $\Theta=\Theta^i_*$) and its dual SDPs, the primal and dual relaxed-SDP problems ((\ref{eq:RelaxedSDP}) and (\ref{eq:RedSDP_DUAL})) are obtained. Moreover, it is straightforward to show $\mu_i\geq r^i+(1+r^i)\vartheta \|V^i_{[0,\;t]}\|^{1/2}$.
 
 Furthermore, it is also shown that $\Sigma(\Theta_*^i, Q, R^i)$, the solution of SDP (\ref{eq:SDPKhali}) (with $\Theta=\Theta_*^i$ and $R=R^i$), is a feasible solution of the relaxed SDP (\ref{eq:RelaxedSDP}). 
 
We can show that the primal-dual solutions of (\ref{eq:RelaxedSDP}) and (\ref{eq:RedSDP_DUAL})) respectively satisfy $\|\Sigma(\hat{\Theta}^i_t, Q, R^i)\|_*\leq J^*_i/\alpha_0$ and $\|P(\hat{\Theta}^i_t, Q, R^i)\|_*\leq J_*^i/\sigma^2$ where $J_*^i$ is the corresponding optimal average expected cost of actuating in mode $i$ (See Lemma 15 in \cite{cohen2019learning} for more details). 

The following lemma shows how to extract the deterministic linear policy from the relaxed SDP. We will later apply this lemma in the Lemma \ref{Stability_lemma18} to show that for all actuating modes, the designed controller is strongly sequentially stabilizing in an epoch (a time interval between two subsequent switches).

\begin{lemma} \label{matix_policy}
	Assume $\mu_i\geq r^{i}+(1+r^{i})\vartheta \|V^i_{[0,\;t]}\|^{1/2}$ and $V^i_{[0,\;t]}\geq (\nu_i\mu_i/\alpha_0\sigma^2)I$ (This holds by having definition $\lambda\geq 4\bar{\nu} \bar{\mu}/\alpha_0\sigma^2$ given in Lemma \ref{Stability_lemma18} with $\bar{\nu}=\max_{i}\nu_i$ and $\bar{\mu}>\mu_i$'s). Let $(\hat{A}_t, \hat{B}_t^i)^\top=\hat{\Theta}_t^i$ and $\Sigma(\hat{\Theta}^i_t, Q, R^i)$ and $P(\hat{\Theta}^i_t, Q, R^i)$, denote primal-dual solutions of the relaxed SDPs (\ref{eq:RelaxedSDP}) and (\ref{eq:RedSDP_DUAL}). Then, $\Sigma_{xx}(\hat{\Theta}^i_t, Q, R^i)$ is invertible and 
	\begin{align}
\nonumber P(\hat{\Theta}^i_t, Q, R^i)&=Q+K^\top (\hat{\Theta}^i_t, Q, R^i)R^i K(\hat{\Theta}^i_t, Q, R^i)+\\
\nonumber &(\hat{A}_t+\hat{B}_t^iK(\hat{\Theta}^i_t, Q, R^i))^\top P(\hat{\Theta}^i_t, Q, R^i)\times \\
\nonumber &(\hat{A}_t+\hat{B}_t^iK(\hat{\Theta}^i_t, Q, R^i))-\mu_i\|P(\hat{\Theta}^i_t, Q, R^i)\|_*\times \\
&\begin{pmatrix} I \\ K(\hat{\Theta}^i_t, Q, R^i) \end{pmatrix}{V^i}^{-1}_{[0,\;t]}\begin{pmatrix} I \\ K(\hat{\Theta}^i_t, Q, R^i) \end{pmatrix}^\top
\label{eq:Lemma8_res}
	\end{align}
	where $ K(\hat{\Theta}^i_t, Q, R^i)=\Sigma_{ux}(\hat{\Theta}^i_t, Q, R^i){\Sigma}^{-1}_{xx}(\hat{\Theta}^i_t, Q, R^i)$.
\end{lemma}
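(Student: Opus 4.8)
The plan is to invoke strong duality between the primal relaxed SDP~(\ref{eq:RelaxedSDP}) and its dual~(\ref{eq:RedSDP_DUAL}) and then to read the identity~(\ref{eq:Lemma8_res}) directly off the complementary-slackness condition. Writing $P$ for $P(\hat{\Theta}^i_t, Q, R^i)$ and $\Sigma$ for $\Sigma(\hat{\Theta}^i_t, Q, R^i)$, I would first form the Lagrangian of~(\ref{eq:RelaxedSDP}) by attaching the multiplier $P\geq 0$ to the relaxed state-covariance constraint. Using $P\bullet\Sigma_{xx}=\operatorname{diag}(P,0)\bullet\Sigma$, $\;P\bullet({\hat{\Theta}^i_t}^{\top}\Sigma\hat{\Theta}^i_t)=(\hat{\Theta}^i_t P{\hat{\Theta}^i_t}^{\top})\bullet\Sigma$ and $\mu_i\|P\|_*(\Sigma\bullet {V^i}^{-1}_{[0,\;t]})=(\mu_i\|P\|_*{V^i}^{-1}_{[0,\;t]})\bullet\Sigma$ (with $\|P\|_*=\operatorname{Tr}(P)$ because $P\geq 0$), the Lagrangian reduces to $L = S\bullet\Sigma + P\bullet W$, where
\begin{align*}
S := \begin{pmatrix} Q - P & 0 \\ 0 & R^i \end{pmatrix} + \hat{\Theta}^i_t P {\hat{\Theta}^i_t}^{\top} - \mu_i \|P\|_* {V^i}^{-1}_{[0,\;t]}.
\end{align*}
Minimizing $L$ over $\Sigma\geq 0$ is bounded below only when $S\geq 0$, recovering the dual constraint of~(\ref{eq:RedSDP_DUAL}). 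Because $\Sigma(\Theta_*^i, Q, R^i)$ is feasible for~(\ref{eq:RelaxedSDP}) (as already noted), a small perturbation furnishes a strictly feasible point, Slater's condition holds, there is no duality gap, and a complementary optimal pair exists with $S\bullet\Sigma=0$. Since $S\geq0$ and $\Sigma\geq0$, the identity $\operatorname{Tr}(S^{1/2}\Sigma S^{1/2})=0$ forces $S\Sigma=0$.

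Next I would show $\Sigma_{xx}$ is invertible. The cost bound $\|\Sigma\|_*\leq J_*^i/\alpha_0\leq \nu_i/\alpha_0$ stated earlier, together with ${V^i}_{[0,\;t]}\geq (\nu_i\mu_i/\alpha_0\sigma^2)I$, gives $\mu_i(\Sigma\bullet{V^i}^{-1}_{[0,\;t]})\leq \mu_i\|\Sigma\|_*\|{V^i}^{-1}_{[0,\;t]}\|\leq\sigma^2$; using instead the sharper bound ${V^i}_{[0,\;t]}\geq\lambda I$ with $\lambda=4\bar\mu\bar\nu/\alpha_0\sigma^2$ makes the inequality strict, $\mu_i(\Sigma\bullet{V^i}^{-1}_{[0,\;t]})\leq\sigma^2/4$. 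The primal constraint then yields $\Sigma_{xx}\geq W-\mu_i(\Sigma\bullet{V^i}^{-1}_{[0,\;t]})I\geq \tfrac{3}{4}\sigma^2 I>0$, so $\Sigma_{xx}$ is invertible and $K:=\Sigma_{ux}\Sigma_{xx}^{-1}$ is well defined.

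The identity now drops out. Reading the first block-column of $S\Sigma=0$ gives $S\bigl(\begin{smallmatrix}\Sigma_{xx}\\\Sigma_{ux}\end{smallmatrix}\bigr)=0$; right-multiplying by $\Sigma_{xx}^{-1}$ produces $S\bigl(\begin{smallmatrix}I\\K\end{smallmatrix}\bigr)=0$. Pre-multiplying by $\bigl(\begin{smallmatrix}I\\K\end{smallmatrix}\bigr)^{\top}$ and expanding the blocks of $S$, with $(\hat A_t,\hat B_t^i)^{\top}=\hat\Theta^i_t$ so that ${\hat\Theta^i_t}^{\top}\bigl(\begin{smallmatrix}I\\K\end{smallmatrix}\bigr)=\hat A_t+\hat B_t^i K$, collapses the quadratic form to
\begin{align*}
&(Q-P)+K^{\top}R^i K+(\hat A_t+\hat B_t^i K)^{\top}P(\hat A_t+\hat B_t^i K)\\
&\quad-\mu_i\|P\|_*\begin{pmatrix} I \\ K \end{pmatrix}^{\top}{V^i}^{-1}_{[0,\;t]}\begin{pmatrix} I \\ K \end{pmatrix}=0,
\end{align*}
which is exactly~(\ref{eq:Lemma8_res}) after transposing $P$ to the left-hand side.

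I expect the main obstacle to be the clean justification of strong duality — verifying Slater's condition for the relaxed program and thereby guaranteeing a zero duality gap and the existence of a complementary pair — together with the passage $S\bullet\Sigma=0\Rightarrow S\Sigma=0$. Once $\Sigma_{xx}$ is known to be invertible, extracting $S\bigl(\begin{smallmatrix}I\\K\end{smallmatrix}\bigr)=0$ and expanding into the Riccati form is purely mechanical. A secondary point requiring care is bookkeeping the two available lower bounds on ${V^i}_{[0,\;t]}$, since only the factor-$4$ bound coming from $\lambda=4\bar\mu\bar\nu/\alpha_0\sigma^2$ makes the positivity of $\Sigma_{xx}$ strict rather than merely non-strict.
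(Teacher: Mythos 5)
Your proposal is correct and follows essentially the same route as the proof the paper relies on: the paper itself only cites Lemma~16 of \cite{cohen2019learning}, and that proof proceeds exactly as you do, via the Lagrangian/strong-duality argument, complementary slackness $S\Sigma=0$, invertibility of $\Sigma_{xx}$ from the relaxed covariance constraint, and extraction of the first block column. The only cosmetic discrepancy is that your derivation yields the dimensionally correct form $\bigl(\begin{smallmatrix} I \\ K \end{smallmatrix}\bigr)^{\top}{V^i}^{-1}_{[0,\;t]}\bigl(\begin{smallmatrix} I \\ K \end{smallmatrix}\bigr)$ for the last term, whereas the statement as printed has the transposes reversed (a typo in the paper).
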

\begin{proof} 
	proof is in Lemma 16 of \cite{cohen2019learning}.
\end{proof}

Now we continue with carrying out the stability analysis of the system under execution of control designed by Algorithms 3 and 4. 

\subsection{Stability Analysis}\label{BB}

Proof of Theorem \ref{Stability_thm17} directly follows by Lemma \ref{Stability_lemma18}, and \ref{upper_ and lower bound of P} proved below.

Lemma \ref{Stability_lemma18} shows that the designed controller $K(\hat{\Theta}^i_t, Q, R^i)$ is $(\kappa_i,\gamma_i)-$ strongly stabilizing for the values of parameters specified in Theorem \ref{Stability_thm17}. Lemma \ref{upper_ and lower bound of P} shows $P(\hat{\Theta}^i_t, Q, R^i)$ is closed to $P(\hat{\Theta}^i_{t+1}, Q, R^i)$ and as such proves the sequentiality of stability and completes the proof of the Theorem \ref{Stability_thm17}.

\begin{lemma} \label{Stability_lemma18}
For the system which actuates in an arbitrary mode $i$, $K(\hat{\Theta}^i_t, Q, R^i)$ is $(\kappa_i,\gamma_i)-$strongly stabilizing for  $A_*+B^i_*K(\hat{\Theta}^i_t, Q, R^i)=H^i_tL^i_t{h^i}^{-1}_t$ where $H^i_t={P}^{1/2}(\hat{\Theta}^i_t, Q, R^i)$ and $\|L^i_t\|\leq1-\gamma_i$. Moreover, $(\alpha_0/2)I\leq P(\hat{\Theta}^i_t, Q, R^i)\leq(\nu_i/\sigma_{\omega}^2)I$ 
\end{lemma}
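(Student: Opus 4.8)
The plan is to read both the eigenvalue bounds on $P:=P(\hat{\Theta}^i_t,Q,R^i)$ and the contraction of the \emph{true} closed loop off the relaxed Bellman identity supplied by Lemma~\ref{matix_policy}. Writing $K:=K(\hat{\Theta}^i_t,Q,R^i)$ and $\bar{K}:=\begin{pmatrix} I\\ K\end{pmatrix}$, that identity reads
\[
P=Q+K^\top R^iK+(\hat{A}_t+\hat{B}^i_tK)^\top P(\hat{A}_t+\hat{B}^i_tK)-\mu_i\|P\|_*\,\bar{K}^\top {V^i_{[0,\;t]}}^{-1}\bar{K},
\]
and I note $A_*+B^i_*K={\Theta^i_*}^\top\bar{K}$ and $\hat{A}_t+\hat{B}^i_tK={\hat{\Theta}^i_t}^\top\bar{K}$. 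First I would record the two bounds on $P$. The upper bound is immediate: $P\succeq0$ gives $\lambda_{\max}(P)\le\operatorname{Tr}(P)=\|P\|_*$, and the primal--dual estimate already stated before the lemma ($\|P\|_*\le J^i_*/\sigma^2\le\nu_i/\sigma^2$) yields $P\preceq(\nu_i/\sigma_\omega^2)I$. For the lower bound I discard the two positive semidefinite terms $K^\top R^iK$ and $(\hat{A}_t+\hat{B}^i_tK)^\top P(\hat{A}_t+\hat{B}^i_tK)$, leaving $P\succeq Q-\mu_i\|P\|_*\bar{K}^\top {V^i_{[0,\;t]}}^{-1}\bar{K}$; since $Q\succeq\alpha_0 I$ it then suffices to certify the \emph{slack estimate} $\mu_i\|P\|_*\bar{K}^\top {V^i_{[0,\;t]}}^{-1}\bar{K}\preceq\tfrac{\alpha_0}{2}I$, giving $P\succeq\tfrac{\alpha_0}{2}I$. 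These two bounds already deliver condition~(2) of Definition~1 and the announced value $\kappa_i=\sqrt{2\nu_i/\alpha_0\sigma^2}$, because a symmetric square root $H^i_t$ of $P$ then satisfies $\|H^i_t\|\,\|{H^i_t}^{-1}\|\le\sqrt{(\nu_i/\sigma^2)\cdot(2/\alpha_0)}=\kappa_i$.

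Next I would establish the Lyapunov contraction for the true closed loop $M:=A_*+B^i_*K$. Since $\Delta:=\Theta^i_*-\hat{\Theta}^i_t$ lies in the confidence ellipsoid, $\Delta\Delta^\top\preceq r^i\,{V^i_{[0,\;t]}}^{-1}$, and $\mu_i\ge r^i+(1+r^i)\vartheta\|V^i_{[0,\;t]}\|^{1/2}$ meets the hypothesis of the matrix-perturbation Lemma~\ref{matix_perturbation_Bound1}; that lemma bounds the difference $\Theta^i_*P{\Theta^i_*}^\top-\hat{\Theta}^i_tP{\hat{\Theta}^i_t}^\top\preceq\mu_i\|P\|_*{V^i_{[0,\;t]}}^{-1}$, so conjugating by $\bar{K}$ gives
\[
M^\top P M=\bar{K}^\top\Theta^i_*P{\Theta^i_*}^\top\bar{K}\preceq(\hat{A}_t+\hat{B}^i_tK)^\top P(\hat{A}_t+\hat{B}^i_tK)+\mu_i\|P\|_*\bar{K}^\top {V^i_{[0,\;t]}}^{-1}\bar{K}.
\]
Substituting the Bellman identity for the first term on the right and dropping $K^\top R^iK\succeq0$ yields, with the same slack estimate and $Q\succeq\alpha_0I$,
\[
M^\top P M\preceq P-Q+2\mu_i\|P\|_*\bar{K}^\top {V^i_{[0,\;t]}}^{-1}\bar{K}\preceq P-\alpha_0I+\tfrac{\alpha_0}{2}I=P-\tfrac{\alpha_0}{2}I.
\]
I then turn this additive gap into a multiplicative one using $P\preceq(\nu_i/\sigma^2)I$: since $\tfrac{\alpha_0}{2}I\succeq\tfrac{\alpha_0\sigma^2}{2\nu_i}P=\kappa_i^{-2}P=2\gamma_iP$, I get $M^\top P M\preceq(1-2\gamma_i)P\preceq(1-\gamma_i)^2P$. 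Conjugating this contraction by a square root of $P$ shows $\|L^i_t\|=\|P^{1/2}(A_*+B^i_*K)P^{-1/2}\|\le1-\gamma_i$, i.e. the factorization $A_*+B^i_*K=H^i_tL^i_t{H^i_t}^{-1}$ of Definition~1 with $H^i_t=P^{-1/2}(\hat{\Theta}^i_t,Q,R^i)$; the remaining half of condition~(1), $\|K\|\le\kappa_i$, follows from the primal bounds $\|\Sigma\|_*\le\nu_i/\alpha_0$ and $\Sigma_{xx}\succeq\tfrac{\sigma^2}{2}I$ via $K=\Sigma_{ux}\Sigma_{xx}^{-1}$.

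The crux is the slack estimate $\mu_i\|P\|_*\bar{K}^\top {V^i_{[0,\;t]}}^{-1}\bar{K}\preceq\tfrac{\alpha_0}{2}I$, invoked in both steps above. The naive operator-norm bound $\bar{K}^\top V^{-1}\bar{K}\preceq\|V^{-1}\|(1+\|K\|^2)I$ loses a factor $\kappa_i^2$ and is too weak; the point is rather that the accumulated covariance $V^i_{[0,\;t]}$ grows precisely along the closed-loop directions $z=\bar{K}x$, so that the hypothesis $V^i_{[0,\;t]}\succeq(\nu_i\mu_i/\alpha_0\sigma^2)I$---guaranteed by the initialization $\lambda\ge4\bar{\nu}\bar{\mu}/\alpha_0\sigma^2$ together with the warm-up length $T_0$---makes the slack genuinely of order $\alpha_0/2$. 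Carrying the constants through this estimate, exactly as in the perturbation analysis of \cite{cohen2019learning} (their Lemmas~16 and~24 adapted to our un-normalized confidence set), is the only delicate part. Finally, this lemma establishes only conditions~(1)--(2) of Definition~1 at a single time $t$; the sequentiality requirement~(3), $\|{H^i_{t+1}}^{-1}H^i_t\|\le1+\gamma_i/2$, is proved separately in Lemma~\ref{upper_ and lower bound of P} by showing that $P(\hat{\Theta}^i_{t},Q,R^i)$ and $P(\hat{\Theta}^i_{t+1},Q,R^i)$ stay close across the determinant-doubling updates.
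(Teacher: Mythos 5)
Your overall architecture matches the paper's: upper bound on $P$ from $\|P\|_*\leq\nu_i/\sigma^2$, lower bound and contraction from the relaxed Bellman identity of Lemma~\ref{matix_policy}, passage to the true closed loop via the perturbation Lemma~\ref{matix_perturbation_Bound1} with the slack doubled, conversion of the additive gap $P-\tfrac{\alpha_0}{2}I$ into the multiplicative factor $1-\kappa_i^{-2}$, and deferral of condition~(3) to Lemma~\ref{upper_ and lower bound of P}. However, there is a genuine gap at exactly the point you call the crux. The "slack estimate" $\mu_i\|P\|_*\bar{K}^\top {V^i_{[0,\;t]}}^{-1}\bar{K}\preceq\tfrac{\alpha_0}{2}I$ is not true under the stated hypotheses: the assumption $V^i_{[0,\;t]}\succeq\lambda I$ with $\lambda=4\bar{\nu}\bar{\mu}/\alpha_0\sigma^2$ yields only $\mu_i\|P\|_*{V^i_{[0,\;t]}}^{-1}\preceq\tfrac{\alpha_0}{4}I$ in the lifted $(n+m_i)$-dimensional space, and conjugating by $\bar{K}$ gives $\tfrac{\alpha_0}{4}(I+K^\top K)$, whose norm is of order $\alpha_0\kappa_i^2$ since $\kappa_i^2=2\nu_i/\alpha_0\sigma^2\geq 2$. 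You correctly observe that the naive bound loses a factor $\kappa_i^2$, but the repair you offer --- that $V^i_{[0,\;t]}$ "grows along the closed-loop directions $z=\bar{K}x$" --- is not justified, is not used by the paper, and cannot be relied upon in this setting (immediately after a switch the projected covariance can be close to $\lambda I$ in the new mode's directions, and the argument must hold at every $t$ in the epoch).

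The paper's resolution is structural rather than data-dependent: do \emph{not} discard $K^\top R^iK$. Since $Q+K^\top R^iK\succeq\alpha_0(I+K^\top K)$ and the (doubled) slack satisfies $2\mu_i\|P\|_*\bar{K}^\top{V^i_{[0,\;t]}}^{-1}\bar{K}\preceq\tfrac{\alpha_0}{2}(I+K^\top K)$, the difference is $\succeq\tfrac{\alpha_0}{2}(I+K^\top K)\succeq\tfrac{\alpha_0}{2}I$; the $K^\top K$ portion of the slack is absorbed by the control-penalty term, which is precisely the term you threw away. This also delivers $P\succeq\tfrac{\alpha_0}{2}K^\top K$ for free, which combined with $\|P\|\leq\nu_i/\sigma_\omega^2$ gives $\|K\|\leq\kappa_i$ directly --- a cleaner route than your appeal to $\|\Sigma_{ux}\Sigma_{xx}^{-1}\|$, which as sketched (via $\|\Sigma_{ux}\|\,\|\Sigma_{xx}^{-1}\|$) would only yield $\|K\|\lesssim\kappa_i^2$ and would in any case require a separate lower bound on $\Sigma_{xx}$. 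With this one correction your argument becomes the paper's proof; without it, both the lower bound $P\succeq\tfrac{\alpha_0}{2}I$ and the contraction $M^\top PM\preceq P-\tfrac{\alpha_0}{2}I$ are unproved.
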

\begin{proof} 
Regardless of the similarity of the proof to that of Lemma 18 in \cite{cohen2019learning}, we provide it for the sake of completeness and justification of the value of $\lambda$, and upper-bound of $\mu_i$ mentioned in Theorem \ref{Stability_thm17}.

First we need to appropriately upper-bound $\mu_i$ which is an input of the algorithms 3 and 4.
	Given $\|V^i_t\|\leq (1+2\bar{\Upsilon})T\;\; \forall i$ (proved in Lemma \ref{define_parBeta}),we have
	\begin{align}
	 \nonumber \mu_i&=r^{i}+(1+r^{i})\vartheta\|V^i_{[0,\;t]}\|^{1/2}\\
	 \nonumber &\leq r^{i}+(1+r^{i})\vartheta \sqrt{(1+2\bar{\Upsilon})T}\\ &\leq \bar{r}+(1+\bar{r})\vartheta \sqrt{(1+2\bar{\Upsilon})T}:=\bar{\mu} \label{eq:def_of_muBar}
	\end{align}
	where $\bar{r}$ is an upper bound $r^i$ for all $i \in \{1,...,2^m\}$, which has already been provided by the Lemma \ref{thm:compact_ellips}:
	\begin{align*}
\bar{r}:&= 8\sigma^2_{\omega}n \bigg(2\log\frac{n}{\delta}+(1+2\bar{\Upsilon})\log T+\\
&(m-1)\log(1+2\bar{\Upsilon})T\bigg)+2\epsilon^2\lambda 
 	\end{align*}

	Recalling the dual formulation (\ref{eq:RedSDP_DUAL}) and the fact $\nu_i\geq J_*^i$ one can write 
	\begin{align*}
\nu_i\geq J_*^i\geq P(\hat{\Theta}^i_t, Q, R^i)\bullet W\geq \begin{pmatrix}
Q & 0 \\
0 & R^i
\end{pmatrix}\bullet W\geq \alpha_0\sigma^2 
	\end{align*}
	which clearly shows that $\nu_i/\alpha_0\sigma^2\geq1$. 
	
	Now we assume that $\lambda\geq 4\nu_i\mu_i/\alpha_0\sigma_{\omega}^2 \;\; \forall i\in \{1,...,2^m\}$ and	apply perturbation lemma, Lemma \ref{matix_perturbation_Bound} to (\ref{eq:Lemma8_res}) yields
\begin{align}
\nonumber &P(\hat{\Theta}^i_t, Q, R^i)=Q+K^\top (\hat{\Theta}^i_t, Q, R^i)R^i K(\hat{\Theta}^i_t, Q, R^i)+\\
\nonumber &(A_*+B_*^iK(\hat{\Theta}^i_t, Q, R^i))^\top P(\hat{\Theta}^i_t, Q, R^i)\times \\ \nonumber &(A_*+B_*^iK(\hat{\Theta}^i_t, Q, R^i))-2\mu_i\|P(\hat{\Theta}^i_t, Q, R^i)\|_*\times \\
&\begin{pmatrix} I \\ K(\hat{\Theta}^i_t, Q, R^i) \end{pmatrix}{V^i}^{-1}_{[0,\;t]}\begin{pmatrix} I \\ K(\hat{\Theta}^i_t, Q, R^i) \end{pmatrix}^\top.
\label{eq:Lemma8_res22}
	\end{align}

We have $\|P(\hat{\Theta}^i_t, Q, R^i)\|_*\leq\nu_i/\sigma_{\omega}^2$, and define  {$\lambda> (4\bar{\nu}\bar{\mu}/\alpha_0\sigma_{\omega}^2)I$} where $\bar{\nu}=\max_{i \in \{1,...,2^m\}} \nu_i$ and $\bar{\mu}=\max_{i \in \{1,...,2^m\}}\mu_i$ defined by (\ref{eq:def_of_muBar}). Given the fact that $V^i_{[0,\;t]}\geq \lambda I$ it yields	
\begin{align}\label{eq22222}
\nonumber \mu_i\|P(\hat{\Theta}^i_t, Q, R^i)\|_*{V^i}_{[0,\;t]}^{-1}&\leq \mu_i*\frac{\nu_i}{\sigma_{\omega}^2}\frac{\alpha_0\sigma_{\omega}^2}{4\bar{\nu}\bar{\mu}}I\\
&\leq  \mu_i*\frac{\nu_i}{\sigma_{\omega}^2}\frac{\alpha_0\sigma_{\omega}^2}{4\nu_i\mu_i}I=\frac{\alpha_0}{4}I 
\end{align}
plugging (\ref{eq22222}) together with the assumptions $Q, R^i\geq\alpha_0I$ into (\ref{eq:Lemma8_res22}) yields,
\begin{align}
\nonumber P(\hat{\Theta}^i_t, Q, R^i)& \geq \frac{\alpha_0}{2} I+\frac{\alpha_0}{2}K^\top (\hat{\Theta}^i_t, Q, R^i)K(\hat{\Theta}^i_t, Q, R^i)+\\
\nonumber &(A_*+B^i_*K(\hat{\Theta}^i_t, Q, R^i))^\top P(\hat{\Theta}^i_t, Q, R^i)\times\\
&(A_*+B^i_*K(\hat{\Theta}^i_t, Q, R^i)). \label{eq:inequality_of_P}
\end{align} 
Defining $\kappa_i=\sqrt{\frac{2\nu_i}{\alpha_0\sigma_{\omega}^2}}$ and using the fact that $\|P(\hat{\Theta}^i_t, Q, R^i)\|\leq\frac{\nu_i}{\sigma_{\omega}^2}$ (which implies $\kappa_i^{-2} P(\hat{\Theta}^i_t, Q, R^i)\leq \frac{\nu_i\kappa_i^{-2}}{\sigma_{\omega}^{2}}I$),

it holds true that
\begin{align}
P(\hat{\Theta}^i_t, Q, R^i)-\frac{1}{2}\alpha_0I\leq(1-\kappa_i^{-2}) P(\hat{\Theta}^i_t, Q, R^i) \label{eq:kappa_in_inequality}
\end{align}
From (\ref{eq:inequality_of_P}) and (\ref{eq:kappa_in_inequality}) one has:
	\begin{align}
\nonumber &{P}^{-1/2}(\hat{\Theta}^i_t, Q, R^i)(A_*+B^i_*K(\hat{\Theta}^i_t, Q, R^i))^\top P(\hat{\Theta}^i_t, Q, R^i)\times\\
&(A_*+B^i_*K(\hat{\Theta}^i_t, Q, R^i)){P}^{-1/2}(\hat{\Theta}^i_t, Q, R^i)\leq (1-\kappa_i^{-2})I. \label{eq:uppBound_L}
\end{align}
Denoting $H_t^i={P}^{1/2}(\hat{\Theta}^i_t, Q, R^i)$ and $L_t^i={P}^{-1/2}(\hat{\Theta}^i_t, Q, R^i)(A_*+B^i_*K(\hat{\Theta}^i_t, Q, R^i)){P}^{1/2}(\hat{\Theta}^i_t, Q, R^i)$,  (\ref{eq:uppBound_L}) leads $\|L_i^t\|\leq\sqrt{1-\kappa_i^{-2}}\leq 1-1/2\kappa_i^{-2}$. 

Furthermore, (\ref{eq:inequality_of_P}) results in $P(\hat{\Theta}^i_t, Q, R^i)\geq \frac{\alpha_0}{2}{K}^\top (\hat{\Theta}^i_t, Q, R^i)K(\hat{\Theta}^i_t, Q, R^i)$ which together with $\|P(\hat{\Theta}^i_t, Q, R^i)\|\leq\frac{\nu_i}{\sigma_{\omega}^2}$  imply $\|K(\hat{\Theta}^i_t, Q, R^i)\|\leq \kappa_i$.

This automatically yields $\|H^i_t\|\leq\sqrt{\nu_i/\sigma_{\omega}^2}$ and $\|{H^i_t}^{-1}\|\leq\sqrt{\frac{2}{\alpha_0}}$.
So, our proof is complete and $K(\hat{\Theta}^i_t, Q, R^i)$'s are $(\kappa_i,\gamma_i)-$ strongly stabilizing.
\end{proof}
Having established the strong stability by by Lemma \ref{Stability_lemma18}, Lemma \ref{upper_ and lower bound of P} shows $P(\hat{\Theta}^i_t, Q, R^i)$ is closed to $P(\hat{\Theta}^i_{t+1}, Q, R^i)$ and as such proves the sequentiality of stability and completes the proof of the Theorem \ref{Stability_thm17} (See \cite{cohen2019learning} for more detail).

{\begin{lemma} \label{upper_ and lower bound of P}
For any mode $i\in \mathcal{M}$ by choosing the regularization parameter $\lambda\geq 8 \kappa_*^8 \sqrt{n+m} \frac{4 \bar{\nu} \bar{\mu}}{\alpha_0^i \sigma_{\omega}^2}$
 we have
	\begin{align}
     \nonumber P(\hat{\Theta}^i_t, Q, R^i)\leq  P(\Theta^i_*, Q, R^i)\leq  P(\hat{\Theta}^i_{t+1}, Q, R^i)+\frac{\alpha_0^i \gamma_i}{2}I
    \end{align}
\end{lemma}}
\begin{proof}
    Proof can be found in Appendix \ref{partF}.
\end{proof}


 \textbf{Proof of Theorem \ref{thm:minimum_average_dwell_corrected} }
 
 \begin{proof}
 From the Lemma \ref{thm:res_sequential_stability_resp}, the state norm of system actuating in mode $i$ and in an arbitrary actuation epoch happening in interval $[\tau_q\;\tau_{q+1}]$ can be upper bounded as follows:
 
 \begin{align}
 \|x_t\|\leq \kappa_i (1-\gamma_i/2)^t||x_{\tau_q}||+\frac{2\kappa_i}{\gamma_i} \max\limits_{\tau_q\leq t\leq \tau_{q+1} }\|w_t\| \label{eq:OneeShotmodeepoch}
 \end{align}
 where $x_{\tau_q}$ is an initial state value of the actuation epoch. Having $\kappa^*=\max_{i\in \{1,..., |\mathcal{B}^*|\}} \kappa_i$'s we can deduce for all modes $i\in\{1,...,2^m\}$ we have
 
  \begin{align}
 \|x_t\|\leq \kappa^* (1-\gamma^*/2)^t\|x_{\tau_q}\|+\frac{2\kappa^*}{\gamma^*} \max\limits_{\tau_q\leq t\leq \tau_{q+1} }\|w_t\| \label{eq:OneeShotmodeepoch2}
 \end{align}
 where $\kappa^*=\sqrt{{2\bar{\nu}}/{\alpha_0\sigma_{\omega}^2}}$ and $\bar{\nu}:=\max_{i\in \{1,..., 2^m\}}$.

Considering
 \begin{align*}
\|w_t\|\leq 10\sigma_{\omega}\sqrt{n\log\frac{T}{\delta}}
\end{align*}
we can upper-bound second term on right hand side of (\ref{eq:OneeShotmodeepoch2}) by $\bar{L}$ which is defined as follows:
 \begin{align*}
 \bar{L}:=\frac{20\kappa^*}{\gamma^*}\sigma_{\omega}\sqrt{n\log\frac{T}{\delta}}. 
\end{align*}
Let us now denote the time sequences in which the switches in actuating mode happens be $I=\{k_0,k_1,...,k_{l-1},k_{l}\}$, then one can upper bound the state norm as follows:
\begin{align*}
\|x_{k_f}\|&\leq (1-\gamma^*/2)^{k_f-k_l}\kappa^*\|x_{k_l}\|+\bar{L}\\
&\leq (1-\gamma^*/2)^{k_l-k_{l-1}} (1-\gamma^*/2)^{k_f-k_l}\kappa^{*2} \|x_{k_{l-1}}\|\\
&+\bar{L} (1-\gamma^*/2)^{k_l-k_{l-1}}\kappa^{*}+\bar{L}\\
&\leq  ... \leq (1-\gamma^*/2)^{k_f-0}\kappa^{*(k_f-0)/\tau}\|x_0\|+\Omega
\end{align*}

in which $\gamma^*=\frac{1}{2\kappa^{*}}$. The term $\Omega$ includes all $\bar{L}-$ dependent terms. 
 
For stability purpose, the following inequality
\begin{align}
\kappa^{*}(1-\gamma^*/2)^{\tau_{AD}}=1-\chi<1 
\end{align}
 needs to be fulfilled which yields
\begin{align}
\tau_{AD}>\tau_{MADT}:= Ceil\bigg[\frac{\ln \kappa^*- \ln(1-\chi)}{-\ln(1-\gamma^*)}\bigg]
\label{eq:tau_a_slow2} 
\end{align}
in which $\tau_{MADT}$ is MADT and  $0<\chi<1$ is a user-define parameter.
Subsequently we can write:
\begin{align}
\Omega\leq \bar{L} \sum_{i=0}^{\infty}\big((1-\gamma^*/2)\kappa^{*1/\tau_{AD}}\big)^i\leq \frac{ \bar{L}}{\chi}=:U_{\Omega}
\end{align}

Then, the state is upper-bounded by:
\begin{align}
\|x(t)\|&\leq (1-\chi)^t\|x_0\|+U_{\Omega}\leq 
e^{-\chi t}\|x_0\|+U_{\Omega}\label{eq:upper_bound_x_slow}
\end{align}
\end{proof}

The boundedness of $V^i_t$ at each epoch is very crucial for estimation purpose. To bound it, we need to ensure that the extended state $z^i_t$ stays bounded as the sequentially stabilizing policy is executed. The following lemma gives this upper-bound and defines the parameter $\bar{\Upsilon}$ which are used in regret bound analysis section. 
\begin{lemma} \label{define_parBeta}
	Applying the Algorithms 1 and 2 yields 
	
	\begin{align*}
	&\sum_{s=1}^{t}||z(s)||^2\leq 2\bar{\Upsilon} T\;\;\; and \;\;
	||V_t||\leq (1+2\bar{\Upsilon})T\\
& \bar{\Upsilon}=\frac{3200{\kappa^*}^6\sigma^2n}{\chi^2}\log \frac{T}{\delta}
	\end{align*}
\end{lemma}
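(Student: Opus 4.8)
The plan is to reduce the statement to a per-step bound on the extended state $z_s=(x_s^\top,u_s^\top)^\top$ (the bound already encoded in the definition of the good event $\mathcal{E}_t$) and then to sum a geometric series. First I would note that throughout the execution of Algorithm 2 the applied input is the linear feedback $u_s=K_s x_s$ with $\|K_s\|\le\kappa^*$, which is exactly the controller-norm bound established in Lemma \ref{Stability_lemma18}, uniformly over the active mode. Hence
\begin{align*}
\|z_s\|^2=\|x_s\|^2+\|u_s\|^2\le(1+{\kappa^*}^2)\|x_s\|^2\le 2{\kappa^*}^2\|x_s\|^2,
\end{align*}
where the last step uses ${\kappa^*}^2\ge1$, itself a consequence of $\nu_i/\alpha_0\sigma^2\ge1$ shown inside the proof of Lemma \ref{Stability_lemma18}.

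Next I would substitute the closed-loop state bound. Under the MADT constraint, Theorem \ref{thm:minimum_average_dwell} gives $\|x_s\|\le(1-\chi)^s\|x_0\|+U_\Omega$ on the high-probability event where the noise concentrates, so by $(a+b)^2\le 2a^2+2b^2$,
\begin{align*}
\|z_s\|^2\le 4{\kappa^*}^2(1-\chi)^{2s}\|x_0\|^2+4{\kappa^*}^2U_\Omega^2.
\end{align*}
Plugging $U_\Omega=\tfrac{20\kappa^*}{\chi\gamma^*}\sigma_\omega\sqrt{n\log\frac{T}{\delta}}$ with $\gamma^*=1/2\kappa^*$ turns the constant term into a multiple of $\bar\Upsilon$, reproducing the per-step estimate built into $\mathcal{E}_t$; this is the step that pins down the exact form of $\bar\Upsilon$.

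Summing over $s=1,\dots,t$ with $t\le T$, the decaying term contributes only a bounded amount since $\sum_{s\ge1}(1-\chi)^{2s}\le\frac{1}{1-(1-\chi)^2}\le\frac{1}{\chi}$, while the constant term contributes at most $t$ copies, i.e.\ $O(\bar\Upsilon T)$. Folding the ($T$-independent up to the $\log$ factor) geometric remainder into the linear-in-$T$ term yields $\sum_{s=1}^t\|z(s)\|^2\le 2\bar\Upsilon T$. The operator-norm bound then follows at once: writing $V_t=\lambda I+\sum_s z_s z_s^\top$ and using $\|z_s z_s^\top\|=\|z_s\|^2$ together with subadditivity of $\|\cdot\|$,
\begin{align*}
\|V_t\|\le\lambda+\sum_s\|z_s\|^2\le\lambda+2\bar\Upsilon T\le(1+2\bar\Upsilon)T,
\end{align*}
the last inequality using $\lambda\le T$.

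The main obstacle is not any single estimate but making the chain of conditioning rigorous and keeping the constants honest: the state bound of Theorem \ref{thm:minimum_average_dwell} holds only on the high-probability event supplied by the Hanson--Wright step of Lemma \ref{thm:res_sequential_stability_resp}, so the conclusion is really an ``on $\mathcal{E}_t$, with probability at least $1-\delta$'' statement, and the constant defining $\bar\Upsilon$ must be large enough that the geometric remainder is genuinely dominated by the linear term (so the total lands below $2\bar\Upsilon T$ rather than merely at order $\bar\Upsilon T$). A secondary point is the warm-up phase (Algorithm 1), whose input $u_t=K_0x_t+\eta_t$ carries an extra exploratory term; there $\|z_t\|$ is controlled by the strong stability of $K_0$ as in \cite{cohen2019learning} over only $T_0$ steps, so its contribution is of lower order and can be absorbed into the same bound.
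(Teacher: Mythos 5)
Your proposal follows essentially the same route as the paper's own proof: bound $\|z_s\|^2\le 2{\kappa^*}^2\|x_s\|^2$ via $z_s=\begin{pmatrix} I\\ K\end{pmatrix}x_s$ with $\|K\|\le\kappa^*$, substitute the closed-loop state bound of Theorem \ref{thm:minimum_average_dwell} to recover the per-step estimate defining $\bar{\Upsilon}$, sum the geometric series and absorb it into the linear-in-$T$ term, and then pass to $\|V_t\|\le\lambda+\sum_s\|z_s\|^2\le(1+2\bar{\Upsilon})T$ using $\lambda\le T$. Your added remarks on the conditioning on the noise-concentration event and on the warm-up phase are sensible refinements, but the argument is the same one the paper gives.
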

\begin{proof}
Proof can be found in Appendix \ref{partF}.
\end{proof}

\subsection{Regret Bound Analysis}\label{regretBoundAnalysis}
This section presents regret bound analysis of the proposed "learn and control while switching strategy." Proceeding as in  ~\cite{cohen2019learning}, the regret can be decomposed into the form of (\ref{eq:R1}-\ref{eq:R4}).  However, each term must be bounded differently than in ~\cite{cohen2019learning} due to the switching nature of the closed-loop system. 

The following lemma gives an upper bound on the ratio of covariance matrix determinant of the central ellipsoid and that of its projection to an arbitrary space associated with mode $i$. The result will be useful in upper-bounding $R_1$ and $R_4$ terms.
\begin{lemma} \label{frac_of_determinant_log}
Given $V_{[0\; t]}$ and $V^{pi}_{[0\; t]}$ as the covariance matrices associated with central ellipsoid and its projection to the space of mode $i$ respectively, then for any $t\leq T$
\begin{align}
\log\frac{\det(V_{[0\; t]})}{\det(V^{pi}_{[0\; t]})}\leq (m-1) \log ((1+2\bar{\Upsilon})t)\label{eq:lemm11ratio}
\end{align}
holds true almost surely.
\end{lemma}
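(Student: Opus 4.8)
The plan is to identify the projected covariance $V^{pi}_{[0,t]}$ as a Schur complement of the central covariance $V_{[0,t]}$ and then apply the associated determinant factorization. Recall from Section \ref{Sec_Proj} that, after the symmetric reordering of coordinates that places the $n$ state coordinates and the $m_i$ input coordinates of mode $i$ first, the central covariance has the block form
\begin{align*}
\tilde V=\begin{bmatrix} V_{11} & V_{12} \\ V_{12}^\top & V_{22} \end{bmatrix},
\end{align*}
where $V_{11}\in\mathbb{R}^{(n+m_i)\times(n+m_i)}$ gathers the mode-$i$ coordinates and $V_{22}\in\mathbb{R}^{(m-m_i)\times(m-m_i)}$ gathers the $m-m_i$ complementary input coordinates. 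Since the vectorized Gram matrix is $\tilde G=I_n\otimes\tilde V$, the Kronecker identity $(I_n\otimes V_{12})(I_n\otimes V_{22})^{-1}(I_n\otimes V_{12}^\top)=I_n\otimes(V_{12}V_{22}^{-1}V_{12}^\top)$ shows that the vectorized projection $U-M^\top T^{-1}M$ equals $I_n\otimes(V_{11}-V_{12}V_{22}^{-1}V_{12}^\top)$; re-sorting its entries back into a matrix therefore gives $V^{pi}_{[0,t]}=V_{11}-V_{12}V_{22}^{-1}V_{12}^\top$, the ordinary Schur complement of $V_{22}$ in $\tilde V$.

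With this identification in hand, I would first record the determinant identity. A symmetric permutation leaves the determinant unchanged, so $\det(V_{[0,t]})=\det(\tilde V)$, and the Schur complement formula gives $\det(\tilde V)=\det(V_{22})\det(V_{11}-V_{12}V_{22}^{-1}V_{12}^\top)=\det(V_{22})\det(V^{pi}_{[0,t]})$. Hence
\begin{align*}
\frac{\det(V_{[0,t]})}{\det(V^{pi}_{[0,t]})}=\det(V_{22}),
\end{align*}
which reduces the claim to bounding the determinant of the complementary block $V_{22}$.

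Next I would bound $\det(V_{22})$ through the spectrum of $V_{22}$. Because $V_{22}$ is a principal submatrix of the positive definite matrix $V_{[0,t]}$, each of its $m-m_i$ eigenvalues is at most $\|V_{[0,t]}\|$, and by Lemma \ref{define_parBeta} (evaluated at time $t\leq T$) we have $\|V_{[0,t]}\|\leq(1+2\bar{\Upsilon})t$. Consequently $\det(V_{22})\leq\big((1+2\bar{\Upsilon})t\big)^{m-m_i}$. Taking logarithms, and using that $\bar{\Upsilon}>0$ and $t\geq 1$ force $\log\big((1+2\bar{\Upsilon})t\big)\geq 0$, together with $m_i\geq 1$ (every actuating mode contains at least one actuator), yields
\begin{align*}
\log\frac{\det(V_{[0,t]})}{\det(V^{pi}_{[0,t]})}=\log\det(V_{22})\leq(m-m_i)\log\big((1+2\bar{\Upsilon})t\big)\leq(m-1)\log\big((1+2\bar{\Upsilon})t\big),
\end{align*}
which is exactly the asserted bound.

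I expect the only delicate point to be the first step, namely verifying that the vectorized projection $U-M^\top T^{-1}M$ of Section \ref{Sec_Proj} collapses, via the Kronecker structure, to the plain Schur complement $V_{11}-V_{12}V_{22}^{-1}V_{12}^\top$ of the covariance matrix itself. Once this is established, the remainder is the classical Schur determinant identity plus the operator-norm bound of Lemma \ref{define_parBeta}, and the factor $m-1$ arises simply as the largest possible value of $m-m_i$ among modes containing at least one actuator.
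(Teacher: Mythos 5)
Your proof is correct, and it takes a genuinely different---and tighter---route than the paper's. The paper writes $\det(V_{[0,\;t]})/\det(V^{pi}_{[0,\;t]})$ as a ratio of products of eigenvalues of the two matrices and then invokes a geometric claim about the diameters of the central ellipsoid versus those of its projection to conclude the ratio is at most $\lambda_{\max}(V_{[0,\;t]})^{m-m_i}$; the Schur-complement structure of $V^{pi}_{[0,\;t]}$ is left implicit, and the diameter-comparison step as stated does not by itself yield the claimed inequality on the eigenvalue products. You instead make that structure explicit: under the reordering of Section \ref{Sec_Proj} one has $U=I_n\otimes V_{11}$, $T=I_n\otimes V_{22}$, $M=I_n\otimes V_{12}^\top$, so the Kronecker identity you cite does collapse $U-M^\top T^{-1}M$ to $I_n\otimes\big(V_{11}-V_{12}V_{22}^{-1}V_{12}^\top\big)$ and hence $V^{pi}_{[0,\;t]}$ is the ordinary Schur complement of $V_{22}$ (well defined since $V_{[0,\;t]}\succ 0$). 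The identity $\det(V_{[0,\;t]})=\det(V_{22})\det(V^{pi}_{[0,\;t]})$ then reduces the ratio exactly to $\det(V_{22})$, the determinant of an $(m-m_i)$-dimensional principal submatrix, and the bound $\det(V_{22})\le\|V_{[0,\;t]}\|^{m-m_i}$ is immediate; both arguments finish identically via Lemma \ref{define_parBeta}, $m_i\ge 1$, and nonnegativity of the logarithm. What your approach buys is rigor and transparency about where the exponent $m-m_i$ comes from. One minor point you share with the paper: Lemma \ref{define_parBeta} is stated with the bound $(1+2\bar{\Upsilon})T$, so replacing $T$ by $t\le T$ implicitly requires $t\ge\lambda$ (the paper's own proof of the present lemma likewise ends with $T$ while the statement carries $t$).
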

\begin{proof}
Proof has been provided in Appendix \ref{partF}.
\end{proof}

Following Lemma gives an upper bound for the term $R_1$. 
\begin{lemma} \label{Thm.BoundR1}
	Let $ns$ denote number of epochs.The upper bound for $|R_1|$ is given as follows:
	\begin{align}
	|R_1|\leq \frac{\bar{\nu} (1+N_{ts})}{\sigma_{\omega}^2}X_{[0\;\;T]}
	\end{align}	
	where $X_{[0\;T]}=\max_{t\leq T}\|x_t\|$ to be defined by
\begin{align}
X_{[0\;\;T]}^2= 2(1-\chi)^{2t}||x_0||^2+\frac{800 {\kappa^*}^2 \sigma_{\omega}^2  n}{\chi^2{\gamma^*}^2}\log\frac{T}{\delta}  
\label{eq:BoundXXX}
\end{align}
	
	and 
\begin{align*}
N_{ts}:=(1+2\bar{\Upsilon})\log T+(ns-1)(m-1)\log(1+2\bar{\Upsilon})T.
\end{align*}
	\begin{align*}
	\bar{\Upsilon}=\frac{3200{\kappa^*}^6\sigma_{\omega}^2n}{\chi^2}\log \frac{T}{\delta}
	\end{align*}	
\end{lemma}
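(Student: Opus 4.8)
The plan is to exploit the near-telescoping structure of $R_1$ in (\ref{eq:R1}). Writing $P_t:=P(\hat{\Theta}^{i(t)},Q,R^{i(t)})$, observe that $P_t$ is piecewise constant in $t$: by construction Algorithm 2 only recomputes the policy either at a switch instant $t=t_*$ (line 8) or when the determinant-doubling test $\det(V^i_{[0,t]})>2\det(V^i_{[0,\tau]})$ fires. Let $0=t_0<t_1<\dots<t_N\le T$ be the successive update times, so that $P_t\equiv P^{(j)}$ on each block $[t_j,t_{j+1})$, with $N$ the total number of updates. On the good event each indicator $1_{\mathcal{E}_t}$ equals one, so inside a block the summand $x_t^\top P^{(j)}x_t-x_{t+1}^\top P^{(j)}x_{t+1}$ telescopes and
\begin{align*}
R_1=\sum_{j=0}^{N}\big(x_{t_j}^\top P^{(j)}x_{t_j}-x_{t_{j+1}}^\top P^{(j)}x_{t_{j+1}}\big).
\end{align*}

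Next I would bound each surviving boundary term. By Lemma \ref{Stability_lemma18} we have $0\le P^{(j)}\le (\nu_i/\sigma_\omega^2)I\le(\bar\nu/\sigma_\omega^2)I$, hence $0\le x_{t_j}^\top P^{(j)}x_{t_j}\le(\bar\nu/\sigma_\omega^2)\|x_{t_j}\|^2$. Dropping the nonpositive terms for the upper estimate and the nonnegative ones for the lower estimate yields $|R_1|\le(N+1)(\bar\nu/\sigma_\omega^2)\max_{t\le T}\|x_t\|^2$. On $\mathcal{E}_t$ the state norm is controlled by Theorem \ref{thm:minimum_average_dwell}, which is precisely the bound $X_{[0\;T]}^2$ recorded in (\ref{eq:BoundXXX}); substituting gives $|R_1|\le(1+N)\,(\bar\nu/\sigma_\omega^2)\,X_{[0\;T]}^2$, which matches the claimed form once $N\le N_{ts}$.

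The crux is therefore to show $N\le N_{ts}$, and I would split the updates into the within-epoch determinant doublings across the $ns$ epochs and the forced updates created by the $ns-1$ switches. For the doublings, each update at least doubles the relevant determinant while (\ref{eq:boundFullactRatio}) caps its total growth by $\log(\det V^i_{[0,t]}/\det(\lambda I))\le(1+2\bar\Upsilon)\log t$; applying the standard maximal-doubling argument to the (monotonically growing) central covariance bounds the total number of such events by $(1+2\bar\Upsilon)\log T$. For the switches, each transition to a mode $i$ replaces the central covariance by its projection, dropping its log-determinant by at most $(m-1)\log((1+2\bar\Upsilon)t)$ per (\ref{eq:lemm11ratio}); this lost determinant must be re-accumulated and thus contributes at most $(m-1)\log((1+2\bar\Upsilon)T)$ extra doublings at each of the $ns-1$ switches. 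Summing gives $N\le(1+2\bar\Upsilon)\log T+(ns-1)(m-1)\log(1+2\bar\Upsilon)T=N_{ts}$, completing the argument.

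The main obstacle I anticipate is exactly this last bookkeeping step: the doubling test is evaluated on the active mode's covariance $V^i_{[0,t]}$, whereas the information is simultaneously carried by the central ellipsoid through the augmentation technique, so one must telescope the log-determinants across epochs carefully — using monotonicity of the central covariance and charging each projection-induced drop in (\ref{eq:lemm11ratio}) to its switch — to convert determinant growth into a clean count of policy updates without double-counting. The telescoping in the first two steps is routine by comparison.
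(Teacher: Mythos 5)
Your proposal is correct and follows essentially the same route as the paper: rewrite $R_1$ so that only the policy-update instants contribute (the paper does this via Abel summation, you via block telescoping, which is equivalent), bound each surviving term using $\|P\|\leq\bar{\nu}/\sigma_{\omega}^2$ and the state bound, and count the updates by combining the determinant-doubling criterion with the projection-loss estimate of Lemma \ref{frac_of_determinant_log} telescoped across epochs. Your version is in fact slightly more careful than the paper's in one respect: the bound should indeed scale with $X_{[0\;T]}^2$ (the squared maximal state norm) as you write, whereas the lemma statement records $X_{[0\;T]}$ unsquared, which appears to be a typo.
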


\begin{proof}
Proof can be found in Appendix \ref{partF}.
\end{proof}

\begin{lemma} \label{R2}
	The upper bound on $|R_2|$ is given as follows: 
	\begin{align*}
	|R_2|\leq \frac{\bar{\nu}\vartheta}{\sigma_{\omega}}\sqrt{3\bar{\Upsilon} T\log\frac{4}{\delta}}
	\end{align*}	
which holds true with the probability at least $1-\delta/4$
\end{lemma}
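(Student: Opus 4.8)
The plan is to recognize $R_2$ as a sum of martingale differences and to control it with a sub-Gaussian martingale tail bound, the size of the per-step conditional variance being governed by the stability estimates already in hand.

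First I would rewrite $R_2=\sum_{t=0}^{T}\langle w_{t+1},v_t\rangle\,1_{\mathcal{E}_t}$, where $v_t:=M_t x_t$ and $M_t:=P(\hat{\Theta}^{i(t)}_t,Q,R^{i(t)})\big(A_*+B^{i(t)}_*K(\hat{\Theta}^{i(t)}_t,Q,R^{i(t)})\big)$. The factor $v_t\,1_{\mathcal{E}_t}$ is $\mathcal{F}_t$-measurable, since the estimate $\hat{\Theta}^{i(t)}_t$, the induced policy, the state $x_t$, and the event $\mathcal{E}_t$ all depend only on data up to time $t$; meanwhile $w_{t+1}$ is conditionally zero-mean and component-wise sub-Gaussian by Assumption \ref{Assumption 1}. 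Hence each summand is a martingale difference that, conditioned on $\mathcal{F}_t$, is sub-Gaussian with variance proxy $\sigma_{\omega}^2\|v_t\|^2\,1_{\mathcal{E}_t}$.

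Next I would bound $\|v_t\|$ by collecting the norm estimates. From Lemma \ref{Stability_lemma18} one has $\|P(\hat{\Theta}^{i(t)}_t,Q,R^{i(t)})\|\leq\bar{\nu}/\sigma_{\omega}^2$, and writing $A_*+B^i_*K^i=(\Theta^i_*)^\top\begin{pmatrix} I \\ K^i \end{pmatrix}$ together with $\|\Theta^i_*\|\leq\vartheta$ and $\|K^i\|\leq\kappa_i$ gives $\|A_*+B^i_*K^i\|\leq\vartheta\sqrt{1+{\kappa^*}^2}$, so that $\|M_t\|\leq(\bar{\nu}\vartheta/\sigma_{\omega}^2)\sqrt{1+{\kappa^*}^2}$, the geometric $\kappa^*$-factor being folded into the stated numerical constant. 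On the good event $\|x_t\|\leq\|z_t\|$, whence $\sum_t\|v_t\|^2\,1_{\mathcal{E}_t}\leq\|M_t\|^2\sum_t\|z_t\|^2\,1_{\mathcal{E}_t}\leq\|M_t\|^2\cdot 2\bar{\Upsilon}T$ by Lemma \ref{define_parBeta}. The essential point is that, because of the indicators, this deterministic bound on the aggregate conditional variance proxy holds \emph{almost surely} rather than only on a high-probability event. Feeding the resulting almost-sure variance bound $V_{\max}=\sigma_{\omega}^2\|M_t\|^2\cdot 2\bar{\Upsilon}T$ into a standard sub-Gaussian martingale tail inequality (of the same type underlying Corollary \ref{Self_normalized_Bound}) yields, with probability at least $1-\delta/4$, the estimate $|R_2|\leq\sqrt{2V_{\max}\log(4/\delta)}$, which after substituting the bound on $\|M_t\|$ collapses to the claimed $\tfrac{\bar{\nu}\vartheta}{\sigma_{\omega}}\sqrt{3\bar{\Upsilon}T\log\tfrac{4}{\delta}}$.

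The main obstacle will be the concentration step: a naive Chernoff argument only controls the tail when the sum of conditional variance proxies is deterministic, yet that sum here depends on the realized trajectory through $x_t$. The role of $1_{\mathcal{E}_t}$ is precisely to convert this trajectory-dependent quantity into an almost-sure bound---on $\mathcal{E}_t^{c}$ the summand is zeroed, while on $\mathcal{E}_t$ the extended state is controlled by (\ref{eq:GoodEven_state_unat}). Care is needed to verify the $\mathcal{F}_t$-measurability of $v_t\,1_{\mathcal{E}_t}$ and the nesting $\mathcal{E}_t\subseteq\mathcal{E}_{t-1}$, so that the stopped-martingale argument and the attendant accounting of the $\delta/4$ failure probability go through cleanly.
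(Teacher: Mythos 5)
Your proposal follows essentially the same route as the paper, which simply defers to the martingale-concentration argument of Cohen et al.\ \cite{cohen2019learning} after inserting the bound $\|P(\hat{\Theta}^{i}_t,Q,R^{i})\|\leq \bar{\nu}/\sigma_{\omega}^2$: identify the summands of $R_2$ as $\mathcal{F}_t$-adapted sub-Gaussian martingale differences, use the indicator $1_{\mathcal{E}_t}$ to turn the trajectory-dependent variance proxy into an almost-sure bound via $\sum_t\|z_t\|^2\leq 2\bar{\Upsilon}T$, and conclude by a sub-Gaussian tail inequality at level $\delta/4$. One concrete slip: bounding $\|M_t x_t\|\leq\|P\|\,\|A_*+B^i_*K^i\|\,\|x_t\|\leq(\bar{\nu}\vartheta/\sigma_{\omega}^2)\sqrt{1+{\kappa^*}^2}\,\|x_t\|$ and then invoking $\|x_t\|\leq\|z_t\|$ introduces a spurious factor $\sqrt{1+{\kappa^*}^2}$ that cannot be ``folded into the stated numerical constant,'' since $\kappa^*$ is a problem-dependent quantity and the lemma's bound carries no such factor outside $\bar{\Upsilon}$. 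The fix is to avoid submultiplicativity altogether and use the exact identity
\begin{align*}
(A_*+B^{i}_*K^{i})\,x_t={\Theta^{i}_*}^{\top}\begin{pmatrix} I \\ K^{i}\end{pmatrix}x_t={\Theta^{i}_*}^{\top}z_t ,
\end{align*}
so that $\|v_t\|\leq(\bar{\nu}/\sigma_{\omega}^2)\,\vartheta\,\|z_t\|$ directly; feeding this into your variance computation recovers the claimed $\tfrac{\bar{\nu}\vartheta}{\sigma_{\omega}}\sqrt{3\bar{\Upsilon}T\log\tfrac{4}{\delta}}$ up to the absolute constant. The measurability and event-nesting points you raise are handled exactly as you describe, and are the reason the indicator appears inside the sum.
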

\begin{proof}
	Proof directly follows \cite{cohen2019learning} by taking into account $\|P(\hat{\Theta}^{i}_t,Q,R^{i})\|\leq \bar{\nu}/\sigma^2 \;\; \forall i\in \mathcal{I}$ where $\bar{\nu}=\max_{i} \mu_i$. 
\end{proof}

To bound $R_3$ we have the following lemma:
\begin{lemma}\label{thm:boundR3}
	The term $R_3$ has the following upper bound when the sequence of switches is $1,...,j$ and number of epochs is $d$
	\begin{align}
	\nonumber R_3=\sum _{t=\tau_1=0}^{\tau_2} &\big(w^T_tP(\hat{\Theta}^{\sigma(t)}_t,Q,R^{\sigma(t)})\omega_t\\
	\nonumber &-\sigma_{\omega}^2\|P(\hat{\Theta}^{\sigma(t)}_t,Q,R^{\sigma(t)})\|_*\big)1_{\mathcal{E}_t}+...\\
\nonumber	&+\sum _{t=\tau_{ns}}^{T} \big(w^T_tP(\hat{\Theta}^{\sigma(t)}_t,Q,R^{\sigma(t)})\omega_t\\
\nonumber &-\sigma_{\omega}^2\|P(\hat{\Theta}^{\sigma(t)}_t,Q,R^{\sigma(t)})\|_*\big)1_{\mathcal{E}_t}\\
&\leq 8\bar{\nu}\sqrt{T\log^3\frac{4T}{\delta}}
	\end{align}
\end{lemma}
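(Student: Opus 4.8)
The plan is to recognize $R_3$ as the partial sum of a martingale difference sequence and then invoke a concentration inequality for martingales. The epoch-by-epoch splitting displayed in the statement is only bookkeeping that records the change of actuating mode $i(t)$ at the switch times $\tau_1,\dots,\tau_{ns}$; since the trace-norm bound on $P(\hat{\Theta}^{i}_t,Q,R^{i})$ holds uniformly over every mode, I would recombine the pieces and treat $R_3=\sum_{t=0}^{T}X_t$ as a single sum, where $X_t:=\big(w_t^\top P_t w_t-\sigma_{\omega}^2\|P_t\|_*\big)1_{\mathcal{E}_t}$ and, for brevity, $P_t:=P(\hat{\Theta}^{i(t)}_t,Q,R^{i(t)})$. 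This follows the strategy of \cite{cohen2019learning}, the only difference being that $P_t$ now ranges over the per-mode dual solutions and is bounded mode-uniformly.

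First I would establish the martingale-difference structure. Because the dual solution satisfies $P_t\ge 0$, its trace norm coincides with its trace, $\|P_t\|_*=\operatorname{Tr}(P_t)$, and since $P_t$ is computed from data available before the noise $w_t$ is realized it is predictable with respect to $\mathcal{F}_{t-1}$, while $\mathbb{E}[w_tw_t^\top\mid\mathcal{F}_{t-1}]=\sigma_{\omega}^2 I$ by Assumption \ref{Assumption 1}. Hence $\mathbb{E}[w_t^\top P_t w_t\mid\mathcal{F}_{t-1}]=\sigma_{\omega}^2\operatorname{Tr}(P_t)=\sigma_{\omega}^2\|P_t\|_*$, so each $X_t$ is conditionally mean-zero and $\{X_t\}$ is a martingale difference sequence. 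The one measurability subtlety is that $1_{\mathcal{E}_t}$ is $\mathcal{F}_t$-measurable rather than predictable; I would resolve this by using that $\mathcal{E}_t$ is a decreasing family of events, so that $1_{\mathcal{E}_t}\le 1_{\mathcal{E}_{t-1}}$ with the latter predictable, which preserves the martingale property in the step where the concentration bound is applied.

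Next I would bound each increment on the good event. From the relaxed-SDP analysis of Section \ref{sec:relaxing SDP} we have $\|P_t\|_*\le J_*^{i(t)}/\sigma^2\le\bar{\nu}/\sigma^2$ for every actuating mode, while on $\mathcal{E}_t$ the Hanson--Wright noise bound (\ref{eq:upp_bound_nois}) used in Lemma \ref{thm:res_sequential_stability_resp} gives $\|w_t\|^2\le 100\,\sigma_{\omega}^2 n\log\frac{T}{\delta}$. Combining these yields $|X_t|\le\|P_t\|\,\|w_t\|^2+\sigma_{\omega}^2\|P_t\|_*\le b$ for an explicit $b$ of order $\bar{\nu}\,n\log\frac{T}{\delta}$, together with a conditional-variance estimate $\sum_t\operatorname{Var}(X_t\mid\mathcal{F}_{t-1})\lesssim\bar{\nu}^2 T$ coming from $\mathbb{E}[(w_t^\top P_t w_t)^2\mid\mathcal{F}_{t-1}]\lesssim\sigma_{\omega}^4\|P_t\|_*^2$.

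Finally I would apply a Freedman/Azuma-type martingale concentration with a union bound over the $T$ time steps, concluding that with probability at least $1-\delta/4$ the deviation of $\sum_t X_t$ is controlled by the variance term $\bar{\nu}\sqrt{T\log(\cdot)}$ dressed with the logarithmic factors carried in from the increment bound $b$, and then tune constants to land on the stated $8\bar{\nu}\sqrt{T\log^3\frac{4T}{\delta}}$. I expect the main obstacle to be precisely this last accounting step: matching the constant and, more delicately, the power $\log^3$ under the square root. That power arises as the product of the single $\log\frac{T}{\delta}$ coming from the Hanson--Wright bound on $\|w_t\|^2$ and the additional logarithmic factors generated by the sub-exponential (quadratic-form) martingale tail and the union bound, so the care lies in the concentration step for the quadratic form $w_t^\top P_t w_t$ rather than in the algebra of the per-step bounds.
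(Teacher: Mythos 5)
Your proposal is correct and follows essentially the same route as the paper: the paper's proof is a one-line deferral to the corresponding $R_3$ bound in \cite{cohen2019learning}, noting only that $\|P(\hat{\Theta}^{i}_t,Q,R^{i})\|_*\leq \bar{\nu}/\sigma^2$ holds uniformly over all actuating modes so the epoch-wise sum recombines into a single martingale sum. Your write-up simply makes explicit the martingale-difference structure, the Hanson--Wright increment bound, and the Freedman-type concentration that the cited argument uses.
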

\begin{proof}
	Proof directly follows \cite{cohen2019learning} by taking into account $\|P(\hat{\Theta}^{i}_t,Q,R^{i})\|\leq \bar{\nu}/\sigma^2 \;\; \forall i\in \mathcal{I}$ where $\bar{\nu}=\max_{i} \mu_i$. 
\end{proof}	

\begin{lemma}\label{thm:boundR4}
	The term $R_4$ has the following upper bound.
	
	\begin{align}
		\nonumber R_4 \leq  \frac{8\bar{\nu}}{\sigma_{\omega}^2}\big(&\bar{r}+(1+\bar{r})\vartheta \sqrt{(1+2\bar{\Upsilon})T}\big)\big((1+2\bar{\Upsilon})\log T\\
	&+(ns-1) \log(1+2\bar{\Upsilon})T\big)\\
	\nonumber 	\bar{r}:= 8\sigma_{\omega}^2n \bigg(&2\log\frac{n}{\delta}+(1+2\bar{\Upsilon})\log T+\\
\nonumber &(m-1)\log(1+2\bar{\Upsilon})T\bigg)+2\epsilon^2 \lambda 
	\end{align}

\end{lemma}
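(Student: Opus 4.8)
The plan is to factor out the largest admissible value of the slack scalar $\mu_{i(t)}$, reduce $R_4$ to a single sum of self-normalized quadratic forms, and then control that sum epoch by epoch through an elliptical-potential (log-determinant) argument, stitching the epochs together via the central ellipsoid.

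First I would bound $\mu_{i(t)}$ uniformly over modes. Since $\mu_{i(t)}=r^{i(t)}+(1+r^{i(t)})\vartheta\|V^{i(t)}_{[0,t]}\|^{1/2}$ and, on the good event $\mathcal{E}_t$, Lemma \ref{define_parBeta} gives $\|V^{i}_{[0,t]}\|\leq(1+2\bar{\Upsilon})T$ while Lemma \ref{thm:compact_ellips} gives $r^{i}\leq\bar{r}$, the bound $\mu_{i(t)}\leq\bar{\mu}:=\bar{r}+(1+\bar{r})\vartheta\sqrt{(1+2\bar{\Upsilon})T}$ from (\ref{eq:def_of_muBar}) holds for every mode. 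Replacing $\nu$ by $\bar{\nu}=\max_i\nu_i$ and pulling $\bar{\mu}$ out of the sum reduces the task to bounding
\begin{align*}
S:=\sum_{t=0}^{T} z^{\top}_t {V^{i(t)}_{[0,t]}}^{-1} z_t\,1_{\mathcal{E}_t},
\end{align*}
after which $R_4\leq\frac{4\bar{\nu}}{\sigma_{\omega}^2}\bar{\mu}\,S$; the advertised prefactor $8\bar{\nu}/\sigma_{\omega}^2$ appears once the factor $2$ from the potential lemma is accounted for.

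Next I would split $S$ across the epochs $[\tau_q,\tau_{q+1}]$, on each of which the mode $i$ and the covariance recursion $V^i_{[0,t+1]}=V^i_{[0,t]}+\beta^{-1}z^i_t{z^i_t}^{\top}$ are fixed and initialized at the projected matrix $V^{pi}_{[0,\tau_q]}$. On $\mathcal{E}_t$ the extended state is bounded (Lemma \ref{define_parBeta}), so each increment $\beta^{-1}z^{\top}_t{V^i_{[0,t]}}^{-1}z_t$ stays below $1$ and the elliptical-potential lemma yields, per epoch,
\begin{align*}
\sum_{t=\tau_q}^{\tau_{q+1}} z^{\top}_t {V^i_{[0,t]}}^{-1}z_t \leq 2\log\frac{\det V^i_{[0,\tau_{q+1}]}}{\det V^{pi}_{[0,\tau_q]}}.
\end{align*}
I would then insert the central-ellipsoid determinants so that the chain telescopes, writing each epoch's log-ratio as
\begin{align*}
\log\frac{\det V^i_{[0,\tau_{q+1}]}}{\det V_{[0,\tau_{q+1}]}}+\log\frac{\det V_{[0,\tau_{q+1}]}}{\det V_{[0,\tau_q]}}+\log\frac{\det V_{[0,\tau_q]}}{\det V^{pi}_{[0,\tau_q]}}.
\end{align*}
The first term is $\leq0$ because projection only shrinks the determinant and can be discarded; the middle terms telescope over the $ns$ epochs into $\log(\det V_{[0,T]}/\det\lambda I)\leq(1+2\bar{\Upsilon})\log T$ by (\ref{eq:boundFullactRatio}); and each of the $(ns-1)$ surviving projection terms is bounded by Lemma \ref{frac_of_determinant_log}, contributing the $(ns-1)\log((1+2\bar{\Upsilon})T)$ piece. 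Combining with the prefactor gives the claimed bound.

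The main obstacle I anticipate is the bookkeeping of the determinant telescoping across the projection steps: one must verify that the central ellipsoid $V_{[0,t]}$, updated by the augmentation technique independently of the active mode, indeed dominates its projection $V^{pi}_{[0,t]}$ so the inserted ratios carry the right sign, and that exactly $(ns-1)$ projection corrections survive (the first epoch, run in the central mode $i=1$, starts from $V_0=\lambda I$ with no projection loss). A secondary technical point is securing the per-step condition $\beta^{-1}z^{\top}_t{V^i_{[0,t]}}^{-1}z_t\leq 1$ needed to drop the $\min(1,\cdot)$ truncation in the potential lemma, which follows from the state bound on $\mathcal{E}_t$ together with the choice $\lambda=4\bar{\mu}\bar{\nu}/\alpha_0\sigma^2$ and $\beta>1$.
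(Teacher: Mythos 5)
Your proposal matches the paper's own proof in all essentials: both pull out the uniform bound $\bar{\mu}$ via Lemma \ref{thm:compact_ellips} and Lemma \ref{define_parBeta}, split the self-normalized sum over epochs, apply the elliptical-potential (log-determinant) lemma per epoch, and telescope by inserting central-ellipsoid determinants so that the $(ns-1)$ projection corrections are controlled by Lemma \ref{frac_of_determinant_log} and the remaining ratio by (\ref{eq:boundFullactRatio}). The argument is correct and the approach is the same; no further comment is needed.
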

\begin{proof}
Proof can be found in Appendix \ref{partF}.
\end{proof}

\textbf{Proof of Theorem \ref{thm:RegretBound}}

\begin{proof}

\textbf{Proof of first statement}

 Putting everything together gives the following upper bound for regret $R_T$  
\begin{align*}
R_T&\leq \frac{\bar{\nu} (1+N_{ts})}{\sigma_{\omega}^2}X_{[0\;\;T]}+\\
&\frac{\bar{\nu}\vartheta}{\sigma_{\omega}}\sqrt{3\bar{\Upsilon} T\log\frac{4}{\delta}}+8\bar{\nu}\sqrt{T\log^3\frac{4T}{\delta}}+\\
&\frac{8\bar{\nu}}{\sigma_{\omega}^2}\big(\bar{r}+(1+\bar{r})\vartheta \sqrt{(1+2\bar{\Upsilon})T}\big)\big((1+2\bar{\Upsilon})\log T+\\
&(ns-1) \log(1+2\bar{\Upsilon})T\big)
\end{align*}
	where 
	\begin{align}
X_{[0\;\;T]}^2= 2(1-\chi)^{2t}||x_0||^2+\frac{800 {\kappa^*}^2 \sigma_{\omega}^2  n}{\chi^2{\gamma^*}^2}\log\frac{T}{\delta}  
	\end{align}
\begin{align*}
N_{ts}:=(1+2\bar{\Upsilon})\log T+(ns-1)(m-1)\log(1+2\bar{\Upsilon})T.
\end{align*}

 \textbf{Proof of second statement}
 
 	The regret bound of an OFU-based algorithm is directly connected to the extent of tightness of confidence ellipsoid. We can explicitly see this effect in the term $R_4$. This being said, we now compare the confidence ellipsoid built by the projection-based algorithm and the one obtained by naively and repeatedly applying the basic OFU-based algorithm.

Assume at an arbitrary time $\tau$ system actuates in any arbitrary mode $i$, for which the confidence ellipsoid is given by 
\vspace{-0.3cm}
\begin{align}
\nonumber \mathcal{C}^i_t(\delta):=\big\{&\Theta^i \in \mathbb{R}^{(n+m_i)\times n}|\\ &\operatorname{Tr}((\Theta^{i}-\hat{\Theta}^i_t)^\top V^i_{[0\;\; t]}(\Theta^{i}-\hat{\Theta}^i_t))\leq r^{i}\}
\end{align}
where
 \begin{align}
\nonumber r^i =\bigg(& \sigma_{\omega} \sqrt{2n \log \frac{n\det(V^i_{[0\;\; t]}) }{\delta \det(V_{[0,\; \tau]}^{pi})}}+\sigma_{\omega} \sqrt{2n\log \frac{n \det(V_{[0,\; \tau]})}{\delta \det (\lambda I)}}+\\
&\sqrt{\lambda} \epsilon\bigg)^2. 
\end{align}
and $V^i_{[0\;\; t]}$ and $\hat{\Theta}^i_t$ is given by (\ref{eq:obswpe},\ref{computation}). 

On the other when the projection is not applied the confidence set is given as follows:
 \begin{align}
\nonumber \mathcal{\bar{C}}^i_t(\delta):=\big\{&\Theta^i \in \mathbb{R}^{(n+m_i)\times n} |\\
&\operatorname{Tr}((\Theta^{i}-\hat{\bar{\Theta}}^i_t)^\top \bar{V}^i_{t}(\Theta^{i}-\hat{\bar{\Theta}}^i_t))\leq r_n^{i}\}
\end{align}
 where 
  \begin{align}
 \hat{\bar{\Theta}}_{t} &={\bar{V^i}_{t}}^{-1}\big({Z_t^i}^\top X_t+ \lambda \Theta^i_0\big). 
 \end{align}
and
 \begin{align*}
V^i_{t}=\lambda I + \sum_{s=1}^{ns_{t-1}} z^i_{s}{z^i_{s}}^T,\;\;\; {Z_t^i}^\top X_t=\sum_{s=1}^{ns_{t}} z^i_{s}{x_s^i}^\top
 \end{align*}
 where $ns^i_t$ is number of time steps that system has been actuating in mode $i$ by time $t$, and   
  \begin{align}
r_n^i &=\bigg( \sigma_{\omega} \sqrt{2n \log \frac{n\det(\bar{V}^i_{t}) }{\delta \det (\lambda I_{(n+m_i)\times (n+m_i)})}}+\sqrt{\lambda} \epsilon\bigg)^2. 
\end{align}
Note that $\Theta_*^i$ belongs to the both confidence sets $\mathcal{C}(\delta)$ and $\mathcal{\bar{C}}(\delta)$ with same probability of $1-\delta$. Hence, the tightest confidence set gives lower regret. Considering the fact that $V^i_{[0,\; t]}\succeq \bar{V}_t$, there exists a $t_0<T$ for big enough $T$ such that for $t>t_0$
\begin{align*}
    \det (\frac{V^i_{[0,\; t]}}{r^i})>\det (\frac{\bar{V}_t^i}{r_n^i}).
\end{align*}
This guarantees $\mathcal{C}(\delta)\subset \mathcal{\bar{C}}(\delta)$ meaning that the confidence ellipsoid $\mathcal{C}(\delta)$ is tighter.
 \end{proof}


\clearpage
 \section{Supplementary Proofs}
\label{partF}
\textbf{Proof of Lemma \ref{Self_normalized_Bound_1}}

\begin{proof} 
The proof follows from Theorem 3 and Corollary 1 in \cite{abbasi2011regret} which gives.
\begin{align}
\left\lVert S^i_t\right\rVert^2_{V_{t}^{-1}} \leq 2 \sigma_{\omega}^2 \log\bigg(\frac{\det(V_t)^{1/2} \det(\lambda I)^{-1/2}}{\delta}\bigg)
\end{align}	
we have 
\begin{align*}
\log\bigg(\frac{\det(V_t)}{ \det(\lambda I)}\bigg)\geq \log\bigg(\frac{\det(V_t)^{1/2}}{ \det(\lambda I)^{1/2}}\bigg),
\end{align*}	
true which completes proof.
\end{proof}

\textbf{Proof of Corollary \ref{Self_normalized_Bound}}
\begin{proof}
Applying Lemma \ref{Self_normalized_Bound} for $i=1,...,n$ yields
\begin{align}
\left\lVert S^i_t\right\rVert^2_{V_{t}^{-1}} \leq 2 \sigma_{\omega}^2 \log \bigg(\frac{n\det(V_t) }{\delta \det(\lambda I)}\bigg)
\end{align}
with probability at least $1-\delta/n$. Furthermore, we have
\begin{align*}
\operatorname{Tr} (S^T_tV^{-1}_tS_t)=\sum_{i=1}^{n}S^{iT}_tV^{-1}_tS^{i}_t
\end{align*}
using which and applying union bound (\ref{eq:mart_vec}) holds with probability at least $1-\delta$.
\end{proof}

\textbf{Proof of Lemma \ref{thm:compact_ellips}}

\begin{proof} 
Given definition of $r^i$ in (\ref{eq:confidence_set_noInfot}), one can write:
 	\begin{align*}
     &r^i =
     \bigg(\sigma_{\omega}\big[\sqrt{2n\log\frac{n\det({V^i_{[0,t]}})}{\delta\det(V^{pi}_{[0,\tau_q]})}}+\sqrt{2n\log\frac{n\det(V_{[0,\tau_q]})}{\delta\det(\lambda I_{(n+m)})}}
     \big]\\
     &+\sqrt{\lambda}\vartheta\bigg)^2\leq \\
     &
    8\sigma_{\omega}^2n \bigg(\log\frac{n\det({V^i_{[0\;\; t]}})}{\delta\det(V^{pi}_{[0\;\; \tau_q]})}+\log\frac{n\det(V_{[0\;\; \tau_q]})}{\delta\det(\lambda I_{(n+m)})}\bigg)+\\
    & 2\lambda \vartheta^2\leq \\
    & 8\sigma_{\omega}^2n \bigg(\log\frac{n\det({V^i_{[0\;\; t]}})}{\delta\det(\lambda I_{(n+m_i)})}+\log\frac{n\det(V_{[0\;\; \tau_q]})}{\delta\det(V^{pi}_{[0\;\; \tau_q]})}\bigg)+\\
    &2 \lambda \vartheta^2\leq \\
   &8\sigma_{\omega}^2n \bigg(2\log\frac{n}{\delta}+\log\frac{\det({V^i_{[0\;\; t]}})}{\det(\lambda I_{(n+m_i)})}+\log\frac{\det(V_{[0\;\; \tau_q]})}{\det(V^{pi}_{[0\;\; \tau_q]})}\bigg)+\\
   &2\lambda\vartheta^2
              \end{align*}	
where in the second inequality we apply $\det(\lambda I_{(n+m_i)})<\det(\lambda I_{(n+m)})$. The proof is complete by applying the following properties
 \begin{align}
\log\frac{\det({V^i_{[0\;\; t]}})}{\det(\lambda I_{(n+m_i)})}\leq (1+2\bar{\Upsilon})\log t \label{eq:boundFullactRatio}\\
 \log\frac{\det(V_{[0\;\; \tau_q]})}{\det(V^{pi}_{[0\;\; \tau_q]})}\leq (m-1)\log ((1+2\bar{\Upsilon}))\tau_q  \label{eq:boundjumpdet}
\end{align}
where (\ref{eq:boundFullactRatio}) can be shown by following the same steps of Lemma 28 of \cite{cohen2019learning} and applying $\sum_{s=1}^{t}||z(s)||^2\leq 2\bar{\Upsilon} T$. And the inequality (\ref{eq:boundjumpdet}) is followed by (\ref{eq:lemm11ratio}).
 \end{proof}

\textbf{Proof of Lemma \ref{upper_ and lower bound of P}}

 \begin{proof} 
	From Lemma 19 of \cite{cohen2019learning} we have:
\begin{align*}
   &P_*(\Theta^i_*, Q, R^i)- P(\hat{\Theta}^i_t, Q, R^i) \preceq \frac{2 \kappa_i^2\mu_i}{\gamma_i}\|P(\Theta^i_t, Q, R^i)\|_*\\
   & \lVert\begin{pmatrix} I \\ K(\hat{\Theta}^i_t, Q, R^i) \end{pmatrix}{V^i}^{-1}_{[0,\;t]}\begin{pmatrix} I \\ K(\hat{\Theta}^i_t, Q, R^i) \end{pmatrix}^\top\rVert I\preceq \\
   & \frac{2 \kappa_i^2\mu_i}{\gamma_i}\|P(\Theta^i_t, Q, R^i)\|_* \big (2 \kappa_i^2 \|{V^i}^{-1}_{[0,\;t]}\|\big) I\preceq\\
   & \frac{2 \kappa_i^2\mu_i}{\gamma_i}\|P(\Theta^i_t, Q, R^i)\|_* (2 \kappa_i^2 \sqrt{n+m}){V^i}^{-1}_{[0,\;t]} \preceq\\
   & \frac{2 \kappa_i^2\mu_i}{\gamma_i}\|P(\Theta^i_t, Q, R^i)\|_* (2 \kappa_i^2 \sqrt{n+m})\frac{1}{\lambda_i}\preceq \frac{\alpha_0^i\gamma_i}{2}I
\end{align*}
where the last inequality holds by $\lambda\geq 8 \kappa_*^8 \sqrt{n+m} \frac{4 \bar{\nu} \bar{\mu}}{\alpha_0^i \sigma_{\omega}^2}$.
	\end{proof}

\textbf{Proof of Lemma \ref{define_parBeta}}

\begin{proof} 
	Having (\ref{eq:upper_bound_x_slow}) and given the fact that $z_t=\begin{pmatrix} I \\ K \end{pmatrix}x_t$ and  $||\begin{pmatrix} I \\ K \end{pmatrix}||^2\leq2{\kappa^*}^2$, we have
	\begin{align}
	||z(t)||^2&\leq 4 
	{\kappa^*}^2(1-\chi)^{2t}||x_0||^2+\frac{1600 \sigma^2 {\kappa^*}^4 n}{\chi^2{\gamma^*}^2}\log\frac{T}{\delta} \label{upperBoundZ} 
	\end{align}
	which yields 
	\begin{align*}
	\sum_{s=1}^{t}\|z(s)\|^2&\leq \frac{4 {\kappa^*}^2}{2\chi-\chi^2}
	\|x_0\|^2+\frac{3200{\kappa^*}^6\sigma_{\omega}^2n}{\chi^2}\log \frac{T}{\delta}t\\
	&\leq \frac{4 {\kappa^*}^2}{\chi}
	t+\bar{\Upsilon} t\leq 2\bar{\Upsilon} t
	\end{align*}
	where 
	\begin{align*}
	\bar{\Upsilon}=\frac{3200{\kappa^*}^6\sigma_{\omega}^2n}{\chi^2}\log \frac{T}{\delta}
	\end{align*}
	and in second inequality we applied $t\geq x_0$.
	
	Having $\sum_{s=1}^T||z_s||^2\leq 2\bar{\Upsilon} T$ along with $T\geq \lambda$ which is true always by recalling the definition of $\lambda$, regardless of any sequences of actuating modes, it returns out that
	\begin{align*}
	||V_t||\leq \lambda+\beta^{-1}\sum_{s=1}^t||z_s||^2\leq (1+2\bar{\Upsilon})T
	\end{align*}
	by picking $\beta=1$. 
\end{proof}

\textbf{Proof of Lemma \ref{frac_of_determinant_log}}

\begin{proof} 
To upper-bound, the ratio (\ref{eq:lemm11ratio}) where the nominator is the covariance of the central ellipsoid and the denominator is the covariance of the ellipsoid constructed by the projection of the central ellipsoid to the space of mode $i$, one can write:
\begin{align}
\frac{\det(V_{[0,\;t]})}{\det(V^{pi}_{[0,\;t]})}=\frac{\prod_{l=1}^{m_i}\frac{1}{\lambda^i_l}}{\prod_{k=1}^{m}\frac{1}{\lambda_k}} \label{eq:bunch_of_lambda}
\end{align}
where $\lambda_k$ for $k \in \{1,...,n+m\}$ and $\lambda^i_l$ for $l\in \{1,...,n+m_i\}$ are eigenvalues of $V_{[0,\;t]}$ and $V^i_{[0,\;t]}$ respectively. We know that for any ellipsoid constructed by the covariance matrix $V_{[0,\;t]}$, we have $1/\lambda_k$'s as the diameters of ellipsoid where $\lambda_k$'s are corresponding eigenvalues of $V_{[0,\;t]}$. Since $V^{pi}_{[0,\;t]}$ is the covariance matrix of an ellipsoid which is projection of the central ellipsoid with covariance matrix $V_{[0,\;t]}$, there always exist a $s \in \{1,...,n+m\}$ such that $1/\lambda_s>1/\lambda^i_l$ for all $l\in \{1,...,n+m_j\}$. In other words, there is always a diameter of the central ellipsoid associated with $V_{[0,\;t]}$ which is greater than or equals to any diameter of its projection defined by $V^{pi}_{[0,\;t]}$.  Using this fact we can upper bound (\ref{eq:bunch_of_lambda}) as follows
\begin{align*}
\frac{\det(V_{[0,\;t]})}{\det(V^{pi}_{[0,\;t]})}\leq \lambda_s^{m-m_i}&\leq \lambda_{max}^{m-m_i}\\
&\leq \lambda_{max}^{m-1}(V)\leq (1+2\bar{\Upsilon})^{m-1}t^{m-1}
\end{align*}
The last inequality directly follows by the fact that $(m-m_i)$ takes its maximum value of $m-1$ when the projection is performed into space with dimension $n+1$ (i.e., projection from central ellipsoid to the mode with only one actuator). Also, we have have $\lambda_{max}(V_{[0,\;t]})\leq \|V_{[0,\;t]}\|\leq (1+2\bar{\Upsilon})T$ which completes proof. 
\end{proof}

\textbf{Proof of Lemma \ref{Thm.BoundR1}}

\begin{proof}
	Recalling the switch time sequence $\{\tau_1,...,\tau_{ns}\}$ with $\tau_1=0$ and $T>\tau_{ns}$, we have $ns$ number of epochs.

	Considering (\ref{eq:R1}), one can write
	\begin{align}
	\nonumber R_1&=\sum_{t=0}^{T}(x^T_tP(\hat{\Theta}^{\sigma(t)}, Q, R^{\sigma(t)})x_t-\\
	\nonumber &x^T_{t+1}P(\hat{\Theta}^{\sigma(t)}, Q, R^{\sigma(t)})x_{t+1})1_{\mathcal{E}_t}=\\
	\nonumber&\sum_{t=1}^{T}(x^T_t(P(\hat{\Theta}^{\sigma(t)}, Q, R^{\sigma(t)})-\\
	\nonumber &P(\hat{\Theta}^{\sigma(t-1)}, Q, R^{\sigma(t-1)}))x_t) 1_{\mathcal{E}_t}+\\
	&x_0^TP(\hat{\Theta}^{\sigma(0)}, Q, R^{\sigma(0)})x_0-x_{T+1}^TP(\hat{\Theta}^{\sigma(T)}, Q, R^{i(T)})x_{T+1}.
	\end{align}
	where $i(t)\in \mathcal{I}$. Note that the first term of the right-hand side takes zero value except when there is a switch in either the policy or actuation mode. Therefore, We first need to find an upper bound on the total number of switches.
	
	We know for the system actuating in any arbitrary mode $i$ and arbitrary epoch $k$ happening in time interval $[\tau_k\;\tau^{k+1}]$ we have 
	\begin{align}
		\det(V_{[0\;\tau_k^f]}^{i})\geq 2^{d_k}\det(V_{[0\;\tau_k^0]}^{pi})  \label{eq:how_switch}
	\end{align}
	where $d_k$ is the total number of switches in the epoch.
	Suppose in the last epoch the system actuates in the mode $i$. Furthermore, without loss of generality let system actuates in the mode $j$ within the epoch $ns-1$. Then by applying (\ref{eq:how_switch}) one can write
	\begin{align}
	\nonumber \det(V_{[0,\;T]}^{i})\frac{\det(V_{[0\;\tau_{ns}]})  }{\det(V_{[0,\;\tau_{ns}]}^{pi})  }&\geq 2^{d_{ns}}\det(V_{[0\;\tau_{ns}]})\\
	&\geq 2^{d_{ns}}\det(V^j_{[0,\;\tau_{ns}]}) 
	\end{align}
where the last inequality is due to the fact that $\lambda>1$ and $V^i\;\; \forall i$ is strictly positive definite that guarantees $\det(V_{[0,\;\tau_{ns}]}) \geq\det(V^{j}_{[0,\;\tau_{ns}]})$. By following the same argument backward in switch times,it yields
\begin{align}
\nonumber&\det(V^i_{[0,\;T]})\frac{\det(V_{[0,\;\tau_{ns}]})  }{\det(V_{[0,\;\tau_{ns}]}^{pi})  }... \frac{\det(V_{[0,\;\tau_2]})  }{\det(V_{[0,\;\tau_2]}^{pl})  }\geq\\
&
 2^{d_{ns}+d_{ns-1}+...+d_2}\det(V_{[0,\;\tau_{2}]})
\end{align}
Applying $\det(V_{[0,\;\tau_{2}]}) \geq 2^{d_1}\det(\lambda I)$ results in

\begin{align}
&\nonumber\det(V^i_{[0,\;T]})\frac{\det(V_{[0,\;\tau_{ns}]})  }{\det(V_{[0,\;\tau_{ns}]}^{pi})  }... \frac{\det(V_{[0,\;\tau_2]})  }{\det(V_{[0,\;\tau_2]}^{pl})  }\geq\\
\nonumber &
 2^{d_{ns}+d_{ns-1}+...+d_1}\det(\lambda I_{(n+m)})\geq\\
 &2^{d_{ns}+d_{ns-1}+...+d_1}\det(\lambda I_{(n+m_i)}) 
\end{align}
By taking $\log_2$ on both sides one can write
\begin{align}
\nonumber d_{ns}+d_{ns-1}+...+d_1\leq& \log \frac{\det(V^i_{[0\;T]})}{\det(\lambda I_{(n+m_i)})}\\
&+(ns-1)\mathcal{F}
\end{align}

where
\begin{align}
\mathcal{F}=\max_{\forall i,j} \mathcal {F}_i \label{eq:definMathcalF}
\end{align}
where $i$ denote the actuating mode and $j$ denotes the epoch and $\mathcal{F}_{i,j}$ is defined as follows
\begin{align}
\mathcal{F}_{i,j}=\frac{\det(V_{[0\;\tau_j]})  }{\det(V_{[0\;\tau_j]}^{pi})}:=(m-1)\log(1+2\Upsilon_i)\tau_j \label{eq:definMathcalF_i}
\end{align}
which is result of the Lemma \ref{frac_of_determinant_log}. Then it yields,
\begin{align}
    \mathcal{F}=(m-1)\log (1+2\bar{\Upsilon})T
\end{align}
Furthermore, we have 
\begin{align}
\log(\frac{\det(V^i_{[0\;T]})}{\det(\lambda I_{(n+m_i)})})\leq  (1+2\bar{\Upsilon})\log T \label{eq:ratioVdetoverLam}
\end{align}
in which $\bar{\Upsilon}=\max_i\Upsilon_i$. As a result, the maximum number of switch is upper-bounded by
\begin{align*}
N_{ts}:=(1+2\bar{\Upsilon})\log T+(ns-1)(m-1)\log(1+2\bar{\Upsilon})T.
\end{align*}
where $N_{mts}$ is the maximum number of total switches including the switches between actuating modes. The upper bound for $R_1$ can be written as follows:

	\begin{align}
|R_1|\leq \frac{\nu (1+N_{ts})}{\sigma_{\omega}^2}X_{[0\;\;T]}
\end{align}
which is obtained noting $||P^i_t||\leq \bar{\nu}/\sigma^2$ where $\bar{\nu}=\max_{\forall i}{\nu_i}$ and the upper-bound of state (\ref{eq:BoundXXX}).
\end{proof}

\textbf{Proof of Lemma \ref{thm:boundR4}}

\begin{proof} 
Suppose the system evolves in a sequence of $d$ epochs starting with actuating mode 1, e.g $1,j,...,i$. Then, one can write
	\begin{align}
	&\nonumber\sum _{t=0}^{T} \big({z^{\sigma(t)}_t}^\top{V^{\sigma(t)}_t}^{-1}{z^{\sigma(t)}_t}\big)1_{\mathcal{E}_t}= \sum _{t=0}^{\tau_2} \big(z^T_t{V}^{-1}_{[0\;t]}z_t\big)1_{\mathcal{E}_t}+\\
	\nonumber&\sum _{t=\tau_2}^{\tau_3} \big({z^j}^T_t{V^j}^{-1}_{[0\;t]}z^j_t\big)1_{\mathcal{E}_t}+...+\sum _{t=\tau_{ns-1}}^{\tau_{ns}} \big({z^l}^T_t{V^l}^{-1}_{[0\;t]}z^l_t\big)1_{\mathcal{E}_t}+\\&\sum_{t=\tau_{ns}}^{T} \big({z^i}^T_t{V^i}^{-1}_{[0\;  t]}z^i_t\big)1_{\mathcal{E}_t}
\leq 2\log\frac{\det(V_{[0,\;\tau_2]})}{\det(\lambda I_{(n+m)})}+ \label{eq:Lemma26[25]}\\
\nonumber &2\log\frac{\det(V^j_{[0,\;\tau3]})}{\det(V^{pj}_{[0,\;\tau_2]})}+...+2\log\frac{\det(V^l_{[0\;\tau_{ns}]})}{\det(V^{pl}_{[0\;\tau_{ns-1}]})}+\\
&2\log\frac{\det({V^i}_{[0,\;T]})}{\det(V^{pi}_{[0,\;\;\tau_{ns}]})} 
	= 2\log\frac{1}{\det(\lambda I_{(n+m)})}+\label{eq:telescop}\\
	\nonumber &
	2\log\frac{\det(V_{[0,\;\tau_2]})}{\det(V^{pj}_{[0,\;\tau_2]})}+...+2\log\frac{\det(V^l_{[0\;\tau_{ns}]})}{\det(V^{pi}_{[0,\;\;\tau_{ns}]})}+\\
	&2\log\det({V^i}_{[0,\;T]}) \leq \label{eq:applyproj}\\
\nonumber & 2\log\frac{\det({V^i}_{[0,\;T]})}{\det(\lambda I_{(n+m_i)})}+2\log\frac{\det(V_{[0,\;\tau_2]})}{\det(V^{pj}_{[0,\;\tau_2]})}+...\\
\nonumber&+2\log\frac{\det(V_{[0,\;\tau_{ns}]})}{\det(V^{pi}_{[0,\;\;\tau_{ns}]})}\leq\\
&2(1+2\bar{\Upsilon})\log T+2(ns-1)(m-1) \log(1+2\bar{\Upsilon})T\label{eq:finalizedone}
	\end{align}
	where in the inequality (\ref{eq:Lemma26[25]}) we applied the result of Lemma 26 in \cite{cohen2019learning} and then in (\ref{eq:telescop}) we telescoped the summation of logarithmic terms. The inequality (\ref{eq:applyproj}) is due to the facts that $\det(\lambda I_{(n+m)})\geq \det(\lambda I_{(n+m_i)})$ for $\lambda\geq 1$ and $V^i$ is strictly positive definite for all $i\in \mathcal{I}$ that guarantees $\det(V_{[0,\;\tau_{k}]}) \geq \det(V^{pj}_{[0,\;\tau_{k}]})$. Finally the last inequality is direct result of applying (\ref{eq:definMathcalF}-\ref{eq:ratioVdetoverLam}) and (\ref{eq:boundFullactRatio}).

	Recalling definition $r^{i}+(1+r^{i})\vartheta \sqrt{(1+2\Upsilon^i)T}=:\mu_i$ and $\bar{\mu}=\max_{i\in\mathcal{B}^*}\mu_i$ we have the following upper bound for the terms $\frac{4\nu_i\mu_i}{\sigma_{\omega}^2}$:
	\vspace{-0.3cm}
	 \begin{align*}
      \frac{4\nu_i\mu_i}{\sigma_{\omega}^2}\leq \frac{4\bar{\nu}\bar{\mu}}{\sigma_{\omega}^2}\leq \frac{4\bar{\nu}}{\sigma_{\omega}^2}\big(r^{i}+(1+r^{i})\vartheta \sqrt{(1+2\bar{\Upsilon})T}\big)
	 \end{align*}  
	where $r^{i}$'s $\forall i$ are upper-bounded by (\ref{eq:Upperbound_forr^m})
	\vspace{-0.4cm}
	 \begin{align*}
r^{i}\leq \bar{r}:= 8\sigma_{\omega}^2n \bigg(&2\log\frac{n}{\delta}+(1+2\bar{\Upsilon})\log T+\\
&(m-1)\log(1+2\bar{\Upsilon})T\bigg)+2\epsilon^2 \lambda 
 	\end{align*}	
	which completes the proof.
\end{proof}


\end{document}